\newcommand{\mfsl}{\mathfrak{sl}}
\DeclareMathOperator{\GL}{GL}
\DeclareMathOperator{\tr}{tr}
\DeclareMathOperator{\Span}{Span}
\DeclareMathOperator{\Out}{Out}
\DeclareMathOperator{\Inn}{Inn}
\DeclareMathOperator{\Der}{Der}
\newcommand{\outerAut}[1]{^{#1}}
\renewcommand{\lang}[1]{{\sc #1}}
\newcommand{\LAC}{\lang{LAC}}
\newcommand{\subsp}[1]{\mathcal{#1}}
\newcommand{\ad}{\ensuremath{\text{ad}}}
\newcommand{\mylang}{Problem A}
\newcommand{\problem}[3]{
\begin{quotation}
\textbf{Problem: } #1

\noindent \textbf{Input: } #2

\noindent \textbf{Output: } #3
\end{quotation}
}
\newtheorem*{boundedAbThm}{Theorem~\ref{thm:boundedAb}}
\newtheorem*{boundedSSThm}{Theorem~\ref{thm:boundedSemisimple}}
\newtheorem*{abThm}{Theorem~\ref{thm:abGI}}
\newtheorem*{ssThm}{Theorem~\ref{thm:ssGI}}
\newtheorem*{boundedReducibleThm}{Theorem~\ref{thm:boundedReducible}}
\begin{document}

\title{Lie algebra conjugacy\footnote{Research partially supported by NSF grants DMS-0652521 and CCF-1017760.}}
\author{Joshua A. Grochow \\
University of Chicago \\
joshuag@cs.uchicago.edu}

\begin{abstract}
\normalsize We study the problem of matrix Lie algebra conjugacy. Lie algebras arise centrally in areas as diverse as differential equations, particle physics, group theory, and the Mulmuley--Sohoni Geometric Complexity Theory program. A matrix Lie algebra is a set $\subsp{L}$ of matrices such that $M_{1}, M_{2} \in \subsp{L} \implies M_{1} M_{2} - M_{2} M_{1} \in \subsp{L}$. Two matrix Lie algebras are conjugate if there is an invertible matrix $M$ such that $\subsp{L}_{1} = M \subsp{L}_{2} M^{-1}$. 

We show that certain cases of Lie algebra conjugacy are equivalent to graph isomorphism. On the other hand, we give polynomial-time algorithms for other cases of Lie algebra conjugacy, which allow us to essentially derandomize a recent result of Kayal on affine equivalence of polynomials. Affine equivalence is related to many complexity problems such as factoring integers, graph isomorphism, matrix multiplication, and permanent versus determinant.

Specifically, we show:
\begin{itemize}
\item Abelian Lie algebra conjugacy is equivalent to the code equivalence problem, and hence is as hard as graph isomorphism. A Lie algebra is abelian if all of its matrices commute pairwise.

\item Abelian Lie algebra conjugacy of $n \times n$ matrices can be solved in $poly(n)$ time when the Lie algebras have dimension $O(1)$. The dimension of a Lie algebra is the maximum number of linearly independent matrices it contains.

\item Semisimple Lie algebra conjugacy is equivalent to graph isomorphism. A Lie algebra is semisimple if it is a direct sum of simple Lie algebras.

\item Semisimple Lie algebra conjugacy of $n \times n$ matrices can be solved in polynomial time when the Lie algebras consist of only $O(\log n)$ simple direct summands.

\item Conjugacy of completely reducible Lie algebras---that is, a direct sum of an abelian and a semisimple Lie algebra---can be solved in polynomial time when the abelian part has dimension $O(1)$ and the semisimple part has $O(\log n)$ simple direct summands.
\end{itemize}
\end{abstract}

\maketitle

\pagestyle{myheadings}
\markboth{Lie Algebra Conjugacy---Joshua A. Grochow}{Lie Algebra Conjugacy---Joshua A. Grochow}


\newcommand{\var}[1]{\mathbf{#1}}
\section{Introduction}
A \definedWord{matrix Lie algebra} is defined as a set of $n \times n$ matrices closed under the following operations: multiplication by scalars $A \mapsto \alpha A$ for $\alpha \in \C$, the usual matrix addition, and a multiplication-like operation denoted $[A, B] := AB - BA$. Lie algebras are an important tool in areas as diverse as differential equations \cite{olver, steeb}, particle physics \cite{georgi}, group theory \cite{fultonHarris, chevalley, vinberg}, and the Mulmuley--Sohoni Geometric Complexity Theory program \cite{gct1}. 

In complexity theory, Kayal \cite{kayal} has recently used Lie algebras in the so-called affine equivalence problem, which arises in many areas of complexity: factoring integers, permanent versus determinant, matrix multiplication, lower bounds for depth-three circuits, and several more (see \cite[\S 1.1]{kayal}). Kayal essentially used Lie algebra conjugacy to give a randomized polynomial-time algorithm to decide when a function can be gotten from the determinant by an invertible linear change of variables. This is the affine equivalence problem for the determinant.

The following are examples of Lie algebras, which should help give their flavor, and introduces some of those Lie algebras on which we prove results, namely abelian, diagonalizable, and \linebreak (semi-)simple:
\begin{enumerate}
\item The collection of all $n \times n$ matrices.

\item \label{ex:abelian} The collection of all diagonal $n \times n$ matrices is a Lie algebra of dimension $n$. Any two diagonal matrices $D_{1}, D_{2}$ commute. Since $D_{1} D_{2} - D_{2} D_{1} = 0$ this is a Lie algebra. Any Lie algebra in which all matrices commute is called \definedWord{abelian}.

\item \label{ex:diag} In fact, \emph{any} collection of diagonal matrices that is closed under taking linear combinations is a Lie algebra, for the same reason as above. Furthermore, if $\subsp{D}$ is such a Lie algebra, then $A \subsp{D} A^{-1}$ is as well, since conjugating by $A$ preserves the fact that all the matrices in $\subsp{D}$ commute. Any Lie algebra conjugate to a set of diagonal matrices is called \definedWord{diagonalizable}.


\item \label{ex:simple} The collection of all $n \times n$ matrices with trace zero. Since $\tr(AB - BA) = 0$ for any $A, B$, this is also a Lie algebra. This is an example of a \definedWord{simple} Lie algebra.

\item \label{ex:kayal} The collection of all $2n \times 2n$ matrices of the form $\left(\begin{array}{cc} C & 0 \\ 0 & D \end{array}\right)$ where $C, D$ are $n \times n$ matrices and $\tr C + \tr D  = 0$. 
\end{enumerate}

Two Lie algebra $\subsp{L}_{1}, \subsp{L}_{2}$ are \definedWord{conjugate} if there is an invertible matrix $A$ such that $\subsp{L}_{1} = A \subsp{L}_{2} A^{-1}$. Although it was not phrased this way, Kayal gave a randomized reduction from the affine equivalence  problem for the determinant to the Lie algebra conjugacy problem for Lie algebras isomorphic to example (\ref{ex:kayal}). However, where he uses properties very specific to the Lie algebras associated to permanent and determinant 
that can be computed using randomization, we are able to instead use a deterministic approach to the more general problem of Lie algebra conjugacy.

\subsection{Results}
We show that certain cases of Lie algebra conjugacy are solvable in polynomial time. We also show that extending these cases is difficult, as such an extension is equivalent to graph isomorphism in one case and at least as hard as graph isomorphism in the other case. One of these cases is strong enough to mostly derandomize Kayal's result \cite{kayal} on testing affine equivalence to the determinant (see \S \ref{sec:affine} for details). 

We now give the formal definition of Lie algebra conjugacy. Since Lie algebras are closed under taking linear combinations, we can give them as input to algorithms by providing a linear basis.
\problem{Lie Algebra Conjugacy (\lang{LAC})}
{Two Lie algebras $\subsp{L}_{1}$ and $\subsp{L}_{2}$ of $n \times n$ matrices, given by basis elements.}
{An invertible $n \times n $ matrix $A$ such that $A \subsp{L}_{1} A^{-1} = \subsp{L}_{2}$, if such $A$ exists, otherwise ``the Lie algebras are not conjugate.''}

Recall the definitions of abelian and diagonalizable from examples (\ref{ex:abelian}) and (\ref{ex:diag}) above, respectively. 
\begin{boundedAbThm} Abelian diagonalizable Lie algebra conjugacy of $n \times n$ matrices can be solved in $poly(n)$ time when the Lie algebras have dimension $O(1)$. \end{boundedAbThm}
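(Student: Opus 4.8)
The plan is to reduce matrix conjugacy of two abelian diagonalizable Lie algebras to a concrete question about $n$ labelled points in $\C^{d}$, and then to answer that question by brute force over a search space of size $n^{O(d)}$, which is polynomial because the common dimension $d$ of the two algebras is $O(1)$. Write $\subsp{L}_{1}, \subsp{L}_{2}$ for the inputs, each given by a basis of $d$ matrices; if the two dimensions differ the algebras are not conjugate, so assume they agree.

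\textbf{Step 1: reduction to diagonal form.} Since $\subsp{L}_{1}$ is diagonalizable each of its basis matrices is diagonalizable, and since $\subsp{L}_{1}$ is abelian they pairwise commute, hence they are simultaneously diagonalizable. I would compute a common eigenbasis in $poly(n)$ time in the usual way: diagonalize the first basis matrix; each of its eigenspaces is invariant under all of $\subsp{L}_{1}$, and the restriction of every basis matrix to it is again diagonalizable, so recurse on the remaining basis matrices inside each eigenspace. This produces an invertible $C_{1}$ with $C_{1}\subsp{L}_{1}C_{1}^{-1} = \subsp{D}_{1}$ a space of diagonal matrices, and similarly $C_{2}, \subsp{D}_{2}$. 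Since $A\subsp{L}_{1}A^{-1} = \subsp{L}_{2}$ holds iff $(C_{2}AC_{1}^{-1})\subsp{D}_{1}(C_{2}AC_{1}^{-1})^{-1} = \subsp{D}_{2}$, it suffices to decide conjugacy of $\subsp{D}_{1}$ and $\subsp{D}_{2}$ and, when they are conjugate, to produce a witness $P$, from which $A = C_{2}^{-1}PC_{1}$.

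\textbf{Step 2: the invariant and the key lemma.} Given a diagonal basis $M_{1},\dots,M_{d}$ of a $d$-dimensional space $\subsp{D}$ of diagonal matrices, assign to coordinate $i$ its weight $w_{i} = ((M_{1})_{ii},\dots,(M_{d})_{ii}) \in \C^{d}$; the $n \times d$ matrix with these rows has rank $d$, so the weights span $\C^{d}$. Let $w_{i}$ and $w'_{i}$ be the weights of $\subsp{D}_{1}$ and $\subsp{D}_{2}$. I would establish: \emph{$\subsp{D}_{1}$ and $\subsp{D}_{2}$ are conjugate by some invertible matrix iff there are $\sigma \in S_{n}$ and $g \in \GL_{d}(\C)$ with $w'_{\sigma(i)} = g\,w_{i}$ for all $i$; and in that case the permutation matrix $P_{\sigma}$ already conjugates $\subsp{D}_{1}$ to $\subsp{D}_{2}$.} The ``if'' direction is a direct check: the relation says the subspace of $\C^{n}$ spanned by the diagonals of $\subsp{D}_{2}$ is the image under the coordinate permutation $\sigma$ of the one spanned by the diagonals of $\subsp{D}_{1}$, which is exactly $P_{\sigma}\subsp{D}_{1}P_{\sigma}^{-1} = \subsp{D}_{2}$. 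For ``only if'', any conjugating matrix carries the common-eigenspace (isotypic) decomposition of $\subsp{D}_{1}$ onto that of $\subsp{D}_{2}$, giving a dimension-preserving bijection of eigenspaces under which the weights transform by the linear map $\subsp{D}_{1}^{*}\to\subsp{D}_{2}^{*}$ induced by conjugation; reading this off in the fixed bases and matching coordinates inside paired eigenspaces yields $\sigma$ and $g$. (Equivalently: conjugacy of abelian diagonalizable algebras is permutation-equivalence of the associated $d$-dimensional subspaces of $\C^{n}$, a code-equivalence question.)

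\textbf{Step 3: the algorithm, and the main obstacle.} Partition the coordinates of $\subsp{D}_{1}$ and of $\subsp{D}_{2}$ into weight classes, recording the distinct weights with their multiplicities. Fix $d$ distinct weights $\alpha_{1},\dots,\alpha_{d}$ of $\subsp{D}_{1}$ forming a basis of $\C^{d}$. Any $g$ as in the key lemma must send these to $d$ distinct weights of $\subsp{D}_{2}$ that again form a basis, so I would enumerate all ordered $d$-tuples of distinct weights of $\subsp{D}_{2}$; for each one that is a basis, let $g$ be the unique linear map sending $\alpha_{k}$ to the $k$-th entry, and test in $poly(n,d)$ time whether $g$ carries the weighted multiset of $\subsp{D}_{1}$ onto that of $\subsp{D}_{2}$ (distinct weights to distinct weights, with equal multiplicities). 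If some tuple passes, recover $\sigma$ by pairing each $\subsp{D}_{1}$-weight class with the $\subsp{D}_{2}$-class it is sent to and matching coordinates arbitrarily inside each pair, and output $A = C_{2}^{-1}P_{\sigma}C_{1}$; otherwise output ``not conjugate.'' Correctness is the key lemma, and since there are at most $n^{d}$ tuples the total running time is $n^{O(d)} = poly(n)$ as $d = O(1)$. I expect the main work to be the ``only if'' half of the key lemma: an arbitrary conjugating matrix need not be monomial---there is genuine freedom inside each common eigenspace---so one must argue that it nonetheless respects the isotypic decomposition and acts linearly on the weights. The complexity-theoretic crux is that the search costs $n^{\Theta(d)}$, polynomial exactly when the dimension is bounded; for growing $d$ this is the obstruction behind the code-equivalence (hence graph-isomorphism) hardness of the general abelian case. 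A minor point, handled by standard exact-arithmetic techniques, is that the eigenvalues arising in Step~1 may live in an algebraic extension of the input field.
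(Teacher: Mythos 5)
Your proof is correct, and at the structural level it follows the same route as the paper: simultaneously diagonalize, read off weights, observe that conjugacy of diagonal algebras is exactly permutation-equivalence of the weight collections (this is the content of the paper's Theorem~\ref{thm:abGI}, which the paper proves once and then quotes), and then solve that combinatorial problem for bounded $d$. Where you genuinely diverge is in the final algorithmic step. The paper invokes Babai's result (\cite[Theorem~7.1]{BCGQ}) reducing $d$-dimensional code equivalence to $\binom{n}{d}$ instances of edge-colored bipartite graph isomorphism, each of which is cheap when $d=O(1)$. You instead give a self-contained brute force: fix a basis $\alpha_1,\dots,\alpha_d$ among the distinct $\subsp{D}_1$-weights, enumerate the at-most-$n^d$ ordered $d$-tuples of distinct $\subsp{D}_2$-weights as candidate images, let each determine $g\in\GL_d$ uniquely, and check that $g$ matches the weight multisets with multiplicities; any surviving $g$ yields $\sigma$ and hence $P_\sigma$ and the witness $A=C_2^{-1}P_\sigma C_1$. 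This is morally the ``pick the pivots'' idea underlying Babai's reduction (and which the paper itself comments on in Section~4), but you short-circuit the bipartite-GI step because with $g$ pinned down the residual check is just multiset equality. What your approach buys is elementarity and independence from \cite{BCGQ}; what the paper's approach buys is that it needs only to invoke Theorem~\ref{thm:abGI} as a black box rather than re-deriving it, and Babai's reduction also gives useful structure when $d$ is not $O(1)$. One small point to make precise when writing this up: the ``only if'' half of your key lemma (that an arbitrary conjugator forces a block-permutation on weight spaces and an induced linear map on weights) is exactly the claim proved in the paper's Theorem~\ref{thm:abGI}; you correctly flag it as the main work, and your sketch is sound, but as stated it is terse and should be expanded to the level of the paper's claim-and-proof.
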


Abelian Lie algebras are one of two fundamental building blocks of all Lie algebras. The other fundamental building blocks are the semisimple Lie algebras. A Lie algebra is semisimple if it is a direct sum of simple Lie algebras. Example (\ref{ex:simple}) above is a simple Lie algebra; see Appendix~\ref{sec:background} for the full definition, and the discussion leading up to Remark~\ref{rmk:buildingBlocks} for what we mean by ``building blocks.'' For this other building block, we show a similar result:

\begin{boundedSSThm} Semisimple Lie algebra conjugacy of $n \times n$ matrices can be solved in $poly(n)$ time when the Lie algebras have only $O(\log n)$ simple direct summands. \end{boundedSSThm}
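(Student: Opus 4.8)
The plan is to reduce semisimple Lie algebra conjugacy to a canonical-form computation together with matching of the simple summands. First I would use the structure theory of semisimple Lie algebras (over $\C$): any semisimple matrix Lie algebra $\subsp{L}$ decomposes uniquely as an internal direct sum $\subsp{L} = \subsp{L}^{(1)} \oplus \cdots \oplus \subsp{L}^{(k)}$ of simple ideals, and this decomposition can be found in polynomial time—compute the Killing form, or more directly decompose $\subsp{L}$ as a module over itself under the adjoint action, since the simple ideals are exactly the minimal nonzero ideals. The hypothesis is that $k = O(\log n)$. Having split $\subsp{L}_1$ and $\subsp{L}_2$ into simple ideals, the ambient space $\C^n$ also decomposes (as a module over each semisimple algebra) and a conjugating matrix $A$ with $A\subsp{L}_1 A^{-1} = \subsp{L}_2$ must, after composing with a permutation of isomorphic summands, carry $\subsp{L}_1^{(i)}$ to $\subsp{L}_2^{(\pi(i))}$ for some bijection $\pi$; since there are only $k! = 2^{O(\log n \cdot \log\log n)}$, wait—this is quasipolynomial, so instead I would group the summands by isomorphism type and abstract representation type and handle the matching more cleverly, or observe that $k = O(\log n)$ gives $k! = n^{O(\log\log n)}$ which is not polynomial; the better route is to canonize.

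The key technical step is to put a single simple matrix Lie algebra into a canonical form under conjugacy in polynomial time. For a simple Lie algebra $\mfsl$ of matrices acting on $\C^n$, first identify its isomorphism type (this is classical and fast: compute the root system data from a Cartan subalgebra, which one finds as the centralizer of a generic semisimple element), then decompose $\C^n$ into irreducible $\mfsl$-submodules; each irreducible submodule is determined up to isomorphism by a highest weight, and two simple matrix Lie algebras are conjugate iff they have the same abstract type and the same multiset of highest weights occurring in the representation on $\C^n$. Producing an explicit conjugating matrix amounts to building, for each isomorphism type of simple Lie algebra and each highest weight, a fixed reference copy of that representation and then computing a change of basis from the given copy to the reference copy—this is a linear-algebra problem once the module decomposition and the Chevalley generators have been identified, so it runs in $poly(n)$. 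Assembling these block-by-block canonical forms (ordered canonically by isomorphism type and then lexicographically by weight data) yields a canonical form for the whole semisimple $\subsp{L}$ on $\C^n$, and $\subsp{L}_1, \subsp{L}_2$ are conjugate iff their canonical forms coincide, in which case composing the two canonizing matrices gives the desired $A$.

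The main obstacle I anticipate is the interaction between the $k$ different simple summands acting simultaneously on the same space $\C^n$: the decomposition of $\C^n$ as a module over $\subsp{L} = \bigoplus \subsp{L}^{(i)}$ is a tensor-product decomposition indexed by tuples of highest weights (one per summand), so there can be up to $n^{\Theta(k)}$ distinct isotypic pieces in principle, and a naive enumeration is exactly where the $O(\log n)$ bound on $k$ is needed—with $k = O(\log n)$ the number of distinct weight-tuples that can actually appear in an $n$-dimensional representation is polynomially bounded (each tuple's representation has dimension at least $1$, and the constraint $\prod_i \dim V_i^{(\lambda)} \le n$ forces all but $O(\log n)$ of the weights in any occurring tuple to be trivial, giving only $n^{O(1)}$ genuinely distinct tuples). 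Making this counting argument precise, and then threading the canonical ordering of tuples through the block-assembly step so that isomorphic summands are permuted consistently, is the delicate part; everything else reduces to known polynomial-time routines for computing Cartan subalgebras, root data, and module decompositions of semisimple Lie algebras.
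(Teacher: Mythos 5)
Your high-level picture is partly right—reduce to a decision about which simple summands match up with which, and identify that the ambient representation on $\C^n$ decomposes into irreducibles labelled by tuples of highest weights—but there are two real gaps that the paper's proof has to work to fill, and your proposal skips both.

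\textbf{Outer automorphisms are missing.} Your central claim, that two simple matrix Lie algebras are conjugate if and only if they have the same abstract type and the same multiset of highest weights, is false. Conjugacy of $\subsp{L}_1,\subsp{L}_2 \subseteq M_n$ corresponds to equivalence of the inclusion representations \emph{up to an automorphism of the abstract Lie algebra}, and that automorphism can be outer. For example, $\mathfrak{so}_8$ acting on $\C^{16}$ by $V \oplus S^+$ versus $V \oplus S^-$ gives conjugate matrix Lie algebras (triality swaps $S^+$ and $S^-$) even though the multisets of highest weights differ. This is why the paper spends Lemma~\ref{lem:outer} on converting the conjugacy question into \emph{outer} equivalence of representations, and why the column groups $G_\ell \cong \Out(\subsp{L}_\ell)$ appear in Problem~A. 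Any correct proof must account for this extra orbit structure; your proposal never does.

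\textbf{The matching step is gestured at but not solved.} You correctly observe that naive enumeration of the $k!$ permutations of summands is $n^{O(\log\log n)}$, which is superpolynomial, and you say ``the better route is to canonize.'' But the canonization you then describe is for a \emph{single} simple algebra, which is not where the difficulty lies. The simple ideals of a semisimple matrix Lie algebra are not in disjoint blocks on $\C^n$; every irreducible constituent of the ambient representation is a tensor product across \emph{all} $k$ summands. So ``assembling these block-by-block canonical forms'' is not a well-defined procedure: reordering the columns (summands of the same isomorphism type) and reordering the rows (irreducible constituents) must be done jointly, and a canonization that respects both simultaneously, together with the per-column outer-automorphism twists, is exactly the twisted code equivalence problem. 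That problem is not known to admit polynomial-time canonization in general, and you haven't given one for the restricted instance either. The paper instead reduces to Problem~A (a twisted code equivalence variant with at most $n$ codewords, $O(\log n)$ columns, and column groups of order $\le 6$) and invokes the Babai--Codenotti--Qiao algorithm for twisted code equivalence, whose running-time bound is what makes the $O(\log n)$ hypothesis yield $poly(n)$. That algorithmic ingredient is essential and is entirely absent from your argument; ``canonize'' is not a substitute for it.

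Your concluding counting observation (that there are at most polynomially many distinct weight tuples appearing) is correct but more simply seen as: the number of rows in the weight matrix is the number of irreducible constituents, which is at most $n$. That bound is used in the paper too, but it only controls the \emph{size} of the Problem~A instance; it does not by itself give an algorithm to solve it.
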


Despite the $O(\log n)$ restriction, Theorem~\ref{thm:boundedSemisimple} is already strong enough to mostly derandomize Kayal's result (Corollary~\ref{cor:kayal} below). Also, note that even a single simple Lie algebra can have unbounded dimension, as in example~(\ref{ex:simple}), let alone a semisimple one with $O(\log n)$ simple summands. Theorem~\ref{thm:boundedIrreps} gives another class on which Lie algebra conjugacy is solvable in polynomial time.

For both results above, we show that removing the quantitative restrictions is likely to be difficult:

\begin{abThm} Graph isomorphism polynomial-time reduces to abelian diagonalizable Lie algebra conjugacy, when the Lie algebras may have unbounded dimension. \end{abThm}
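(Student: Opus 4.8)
The plan is to reduce in two stages: graph isomorphism $\leq_p$ code equivalence $\leq_p$ abelian diagonalizable Lie algebra conjugacy. For the first stage I would invoke the theorem of Petrank and Roth, which produces from a graph $G$ a linear code $C_G \subseteq \C^{N}$ such that $C_{G_1}$ and $C_{G_2}$ are equivalent (related by a permutation of coordinate positions) if and only if $G_1 \cong G_2$; one must check, or arrange by a minor modification of that construction, that the codes produced \emph{separate coordinates}, meaning that for every pair $i \neq j$ of positions there is a codeword whose $i$-th and $j$-th entries differ.

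The second stage is the substance of the argument. Given a code $C \subseteq \C^{N}$ with generator rows $g_1, \dots, g_k$, associate the Lie algebra $\subsp{D}_C$ with basis $\operatorname{diag}(g_1), \dots, \operatorname{diag}(g_k)$; this is computable in polynomial time, is closed under the identically zero bracket, and is visibly diagonalizable and abelian, and its dimension $k$ grows with the input as required. If a coordinate permutation $\sigma$ satisfies $\sigma(C_1) = C_2$, then the corresponding permutation matrix conjugates $\subsp{D}_{C_1}$ to $\subsp{D}_{C_2}$, so one direction is immediate. For the converse, suppose $A \subsp{D}_{C_1} A^{-1} = \subsp{D}_{C_2}$ and pick a generic $c \in C_1$; by the separation property $\operatorname{diag}(c)$ has $N$ distinct diagonal entries, and $A \operatorname{diag}(c) A^{-1}$ is a diagonal matrix with the same (distinct) spectrum, so comparing eigenvectors forces $A$ to send each standard basis vector $e_i$ to a scalar multiple of some $e_{\sigma(i)}$; hence $A$ is a monomial matrix $P_\sigma S$ with $S$ diagonal. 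Since diagonal matrices commute, the factor $S$ cancels under conjugation, so $A \operatorname{diag}(v) A^{-1} = \operatorname{diag}(\sigma \cdot v)$ for every codeword $v$, and therefore $\sigma(C_1) = C_2$. Thus $\subsp{D}_{C_1}$ and $\subsp{D}_{C_2}$ are conjugate iff $C_1$ and $C_2$ are equivalent, which chains with the first stage to give the theorem.

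The step I expect to be the main obstacle is the careful handling of the separation hypothesis together with the choice of ground field. The eigenvector comparison genuinely needs the generic element of $\subsp{D}_{C_1}$ to have simple spectrum, so I must ensure the codes coming out of the first stage separate every pair of coordinates without spoiling the ``equivalent iff isomorphic'' guarantee---either by verifying this directly for the Petrank--Roth construction or by a small padding argument---and I must make sure that permutation equivalence of the \emph{complex} codes $C_{G_i}$ coincides with the combinatorial notion used in that construction, which is why it is convenient to work over $\C$ throughout and to note that the relevant reduction is insensitive to the ground field. It is also worth noting explicitly that monomial equivalence of codes collapses to permutation equivalence in this setting, precisely because the Lie algebras involved are abelian, so no separate treatment of ``generalized'' code equivalence is needed.
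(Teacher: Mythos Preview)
Your two-stage plan---graph isomorphism $\to$ code equivalence $\to$ abelian diagonalizable \lang{LAC}---is exactly the paper's route: Corollary~\ref{cor:diagGIhard} chains Lemma~\ref{lem:PR} (the extension of Petrank--Roth from $\F_2$ to an arbitrary field) with Theorem~\ref{thm:abGI} (code equivalence $\Leftrightarrow$ diagonalizable \lang{LAC}). You are also right that the Petrank--Roth reduction must be re-examined over $\C$; this is precisely the content of Lemma~\ref{lem:PR}, and it does require a small extra argument (allowing row scalings in the uniqueness claim for the generator matrix).

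Where you diverge is in the converse direction of the second stage. You want a generic codeword to have \emph{simple} spectrum so that any conjugator $A$ is forced to be monomial, and this is why you impose the coordinate-separation hypothesis. But the Petrank--Roth generator matrix $[\,I_m \mid I_m \mid I_m \mid D\,]$ has three identical copies of each of its first $m$ columns, so every codeword takes the same value at coordinates $i$, $m+i$, and $2m+i$; the generic diagonal element therefore has eigenvalues of multiplicity at least $3$, and your separation hypothesis fails outright. The ``small padding argument'' you allude to is thus not optional, and you have not said what it is---padding carelessly risks breaking the ``equivalent iff isomorphic'' guarantee from stage one.

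The paper sidesteps this entirely. The proof of the claim inside Theorem~\ref{thm:abGI} allows weight spaces $V_\lambda(\subsp{A})$ of arbitrary dimension: any $g$ with $g\subsp{A}g^{-1}$ diagonal must permute these weight spaces as blocks, and \emph{within} each block every matrix of $\subsp{A}$ already acts as a scalar, so replacing the action of $g$ on each block by the identity leaves $g\subsp{A}g^{-1}$ unchanged. Hence $g$ may always be taken to be a permutation matrix, with no separation hypothesis needed. If you substitute this weight-space observation for your generic-element argument, the rest of your plan goes through verbatim and the padding issue disappears.
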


\begin{ssThm} Graph isomorphism is equivalent to semisimple Lie algebra conjugacy, when the Lie algebras may contain an unbounded number of simple direct summands. \end{ssThm}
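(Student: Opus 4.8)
The plan is to prove equivalence in both directions. For the easy direction — graph isomorphism reduces to semisimple Lie algebra conjugacy — I would build, from a graph $G$ on $n$ vertices, a semisimple Lie algebra $\subsp{L}_G$ that is a direct sum of $n$ copies of a fixed simple Lie algebra $\mfsl_k$ (say $k=2$ or $k=3$, whichever is cleanest), where the $i$-th summand is placed as a block acting on a coordinate subspace whose "address" encodes vertex $i$ and its adjacencies. Concretely, I would take a faithful representation of $\mfsl_k$ and lay out one block per vertex, then glue in extra structure (extra diagonal coordinates, or shared invariant subspaces between blocks $i$ and $j$) exactly when $ij \in E(G)$, so that any conjugating matrix taking $\subsp{L}_G$ to $\subsp{L}_H$ must permute the simple summands in a way that respects the edge relation, i.e. induces a graph isomorphism $G \cong H$; conversely a graph isomorphism visibly gives a permutation matrix conjugating one to the other. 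This is essentially the standard trick of encoding a graph into the "combinatorics of how simple pieces are glued together," and since simple summands of a semisimple Lie algebra are intrinsic, the conjugation is forced to act on them as a permutation (up to the automorphisms of each $\mfsl_k$, which I would need to check do not create spurious isomorphisms — using a rigid enough gluing, or a representation with no outer automorphisms interfering, handles this).

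For the reverse direction — semisimple Lie algebra conjugacy reduces to graph isomorphism — I would use the structure theory of semisimple Lie algebras together with the algorithmic tools the paper has presumably already assembled. Given $\subsp{L}_1, \subsp{L}_2$, first decompose each into its simple ideals $\subsp{L}_j = \bigoplus_i \subsp{S}_{j,i}$ (computable in polynomial time: the simple ideals are the minimal ideals, found via the Killing form / by computing the ideal lattice). Each simple ideal $\subsp{S}_{j,i}$, as a subalgebra of $\gl_n$, comes with its action on $\C^n$, which decomposes $\C^n$ into isotypic components; an abstract isomorphism type of a simple Lie algebra over $\C$ is determined by its Dynkin diagram, and the way it sits inside $\gl_n$ is determined by the multiset of irreducible representations (highest weights) appearing, with multiplicities. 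So I would reduce the problem to: match up the simple ideals of $\subsp{L}_1$ with those of $\subsp{L}_2$ so that matched ideals are "equivalent as subalgebras of $\gl_n$" (same Dynkin type, same multiset of irreps with multiplicities on the ambient space), \emph{and} so that the way the different ideals' actions interleave on $\C^n$ is preserved. This last compatibility condition is naturally a graph- (or colored-graph-) isomorphism condition: form a vertex for each simple ideal, color it by its abstract type and its isotypic decomposition data, and record as (colored) edges how the invariant subspaces of distinct ideals overlap; then $\subsp{L}_1$ and $\subsp{L}_2$ are conjugate iff these colored graphs are isomorphic. Colored graph isomorphism reduces to plain graph isomorphism, so this completes the reduction — modulo checking that once a combinatorial matching is found, an actual conjugating matrix can be assembled blockwise, which follows because each matched simple ideal with matched representation data \emph{is} conjugate (two representations of a semisimple Lie algebra with the same highest weights are equivalent, and one can intertwine them explicitly).

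The main obstacle I expect is the reverse direction's "interleaving" bookkeeping: making sure the colored graph I define genuinely captures all obstructions to conjugacy and no more. The subtle point is that conjugating $\subsp{L}_1$ to $\subsp{L}_2$ is more than matching each simple ideal separately — the conjugation must be a \emph{single} global matrix $A$, so the isotypic decompositions of $\C^n$ induced by the various ideals, and by the whole algebra $\subsp{L}_1$, have to be simultaneously respected. I would handle this by first passing to the isotypic decomposition of $\C^n$ under all of $\subsp{L}_1$ (equivalently, under its center-free semisimple action): on each isotypic piece the commutant is a matrix algebra $M_d(\C)$ and the freedom in choosing $A$ is exactly $\GL_d$, which is "free" and so imposes no further combinatorial constraint — this is the key reduction that collapses the seemingly rich geometry down to the finite combinatorial data of (Dynkin types, highest-weight multisets, and the incidence pattern of which ideals act nontrivially on which common isotypic blocks). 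Getting this collapse exactly right, and verifying it is polynomial-time (the representation-theoretic invariants of a simple Lie algebra given by generators — its type and the weights of its action — are computable in polynomial time in the relevant regime), is where the real work lies; the graph isomorphism step itself is then immediate.
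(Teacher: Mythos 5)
Your high-level structure is right --- decompose into simple ideals, extract the multiset of irreducible-constituent labels, reduce to a combinatorial isomorphism problem, and note that the commutant $\GL_d$ on each $d$-fold isotypic block is ``free'' --- but the reverse direction has a genuine gap. You color each simple ideal by ``its abstract type and its isotypic decomposition data'' and then declare ``the graph isomorphism step itself is then immediate.'' It is not, because the irrep labels are only well-defined up to the outer automorphism group of each simple factor: the conjugating matrix $A$ induces an automorphism of the abstract $\subsp{L}$, and when that automorphism is outer on a simple summand $\subsp{S}_\ell$ it permutes the labels. A concrete failure: take $\subsp{L}_1 = \mfsl_3$ embedded via $V \oplus V \oplus V^*$ and $\subsp{L}_2 = \mfsl_3$ embedded via $V \oplus V^* \oplus V^*$, where $V$ is the standard representation. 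These matrix Lie algebras \emph{are} conjugate --- $\Out(\mfsl_3) \cong S_2$ acts by $V \leftrightarrow V^*$ --- yet your fixed-color graphs (a single vertex carrying different isotypic data) are not isomorphic, so your test would return ``not conjugate.'' The colors are thus not fixed; each is acted on by a group $G_\ell \cong \Out(\subsp{L}_\ell) \in \{1, S_2, S_3\}$, and encoding this twisted color action into a plain graph-isomorphism instance is the substantive technical work. The paper formalizes the intermediate problem as \mylang\ (a restricted twisted code equivalence) and devotes Lemma~\ref{lem:aGI} and Appendix~\ref{app:aGI} to the reduction to \lang{GI}, constructing directed ``color palette'' gadgets whose automorphism groups realize exactly these $S_2$ and $S_3$ actions on labels. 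That step is what your proposal elides.

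Two smaller points. The forward direction as you describe it (``one block per vertex, then glue in extra structure \dots{} or shared invariant subspaces between blocks $i$ and $j$'') does not compile as stated: irreducible constituents of a completely reducible representation are direct summands, so there is no shared invariant subspace between two simple factors' blocks. The paper's Lemma~\ref{lem:GIa} instead takes $\subsp{L} = \mfsl_2^{\oplus n}$ (one factor per vertex) and builds a representation with one $9$-dimensional irreducible block per \emph{edge}, namely the tensor of the adjoint representations on the edge's two endpoint factors and trivial elsewhere --- this is precisely the graph's incidence matrix realized as an instance of \mylang, and it sidesteps outer automorphisms because $\Out(\mfsl_2)$ acts trivially on representations (the worry you correctly flagged). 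Finally, the paper's Lemma~\ref{lem:outer} reformulates conjugacy of $\subsp{L}_1, \subsp{L}_2 \subseteq M_n$ as outer equivalence of two faithful representations of a single abstract $\subsp{L}$, which is a cleaner intermediate problem than tracking ``interleaving of invariant subspaces'' directly; your $\GL_d$ commutant observation is correct, but it is exactly where the outer automorphism subtlety enters, not where it disappears.
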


In fact, we show that abelian diagonaliable Lie algebra conjugacy is equivalent to the code equivalence problem. The code equivalence problem is to test whether two subspaces of a vector space can be made equal by permuting their coordinates. For codes over $\F_{2}$, code equivalence was known to be as hard as graph isomorphism \cite{petrankRoth}; we extend their proof to show that code equivalence over \emph{any field} is as hard as graph isomorphism.

Finally, we combine the abelian and semisimple cases together, to show results on Lie algebra conjugacy when the Lie algebras are the direct sum of an abelian Lie algebra and a semisimple one:

\begin{boundedReducibleThm}
Conjugacy of Lie algebras of $n \times n$ matrices can be determined in $poly(n)$ time when the Lie algebras are a direct sum of an $O(1)$-dimensional abelian diagonalizable Lie algebra and a semisimple Lie algebra with $O(\log n)$ simple direct summands.
\end{boundedReducibleThm}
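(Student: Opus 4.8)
The plan is to reduce the reductive case to the abelian case (Theorem~\ref{thm:boundedAb}) and the semisimple case (Theorem~\ref{thm:boundedSemisimple}); the only genuine work is to resolve the coupling between the two pieces forced by their common action on $\C^{n}$. Given $\subsp{L}_{i}$ ($i=1,2$), first compute its center $\subsp{A}_{i} := Z(\subsp{L}_{i})$, by solving for the matrices commuting with every basis element, and its derived subalgebra $\subsp{S}_{i} := [\subsp{L}_{i},\subsp{L}_{i}]$, spanned by the brackets of basis elements. For a reductive algebra presented as $\subsp{A} \oplus \subsp{S}$ with $\subsp{A}$ abelian and $\subsp{S}$ semisimple one has $Z(\subsp{A} \oplus \subsp{S}) = \subsp{A}$ and $[\subsp{A} \oplus \subsp{S},\, \subsp{A} \oplus \subsp{S}] = \subsp{S}$, so these formulas recover exactly the hypothesized splitting; since $Z(-)$ and $[-,-]$ are characteristic, every $A$ with $A\subsp{L}_{1}A^{-1} = \subsp{L}_{2}$ satisfies $A\subsp{A}_{1}A^{-1} = \subsp{A}_{2}$ and $A\subsp{S}_{1}A^{-1} = \subsp{S}_{2}$. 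Check that $\dim\subsp{A}_{i} = O(1)$ and that $\subsp{A}_{i}$ is diagonalizable, that $\subsp{S}_{i}$ is semisimple with $O(\log n)$ simple ideals (found over $\C$ by standard means), and reject if the two inputs differ in these data or in the abstract isomorphism type of $\subsp{S}_{i}$.

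Next observe that $V := \C^{n}$ is a semisimple $\subsp{L}_{i}$-module: decompose $V$ into the joint eigenspaces $V^{(i)}_{\chi}$ of the diagonalizable abelian part $\subsp{A}_{i}$ (indexed by the finitely many occurring characters $\chi$), note that $\subsp{S}_{i}$ preserves each $V^{(i)}_{\chi}$ since $[\subsp{A}_{i},\subsp{S}_{i}]=0$, and decompose each $V^{(i)}_{\chi}$ into $\subsp{S}_{i}$-isotypic pieces by a standard highest-weight computation (fixing a Cartan and base in each simple ideal). This writes $V \cong \bigoplus_{\chi,U} m^{(i)}_{\chi,U}\,(\chi\otimes U)$, with $U$ ranging over the occurring $\subsp{S}_{i}$-irreducibles, recorded by their tuples of dominant highest weights. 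Now $\subsp{L}_{1}$ and $\subsp{L}_{2}$ are conjugate iff there is a Lie-algebra isomorphism $\phi\colon\subsp{L}_{1}\to\subsp{L}_{2}$ with $V \cong \phi^{*}V$ as $\subsp{L}_{1}$-modules, and by the previous paragraph $\phi = \phi_{A} \oplus \phi_{S}$ for an arbitrary linear isomorphism $\phi_{A}\colon\subsp{A}_{1}\to\subsp{A}_{2}$ (any such being automatically a homomorphism of abelian algebras) and a Lie isomorphism $\phi_{S}\colon\subsp{S}_{1}\to\subsp{S}_{2}$. The module condition $V\cong\phi^{*}V$ holds iff the tables $m^{(1)}$ and $m^{(2)}$ agree under the bijection of (character, irreducible)-pairs induced by $\phi$, where, by the classification of automorphisms of a semisimple Lie algebra, $\phi_{S}$ acts on irreducibles (up to the irrelevant inner part) by a type-preserving permutation $\sigma$ of the simple ideals together with a choice of diagram automorphism on each.

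So the task is to decide whether some triple $(\psi,\sigma,\vec{\tau})$---a linear isomorphism $\psi$ on characters, a type-preserving permutation $\sigma$ of the $O(\log n)$ simple ideals, and a tuple $\vec{\tau}$ of diagram automorphisms---matches the two tables; running the algorithms of Theorems~\ref{thm:boundedAb} and~\ref{thm:boundedSemisimple} separately fails because the tables are functions on \emph{pairs}, so the abelian and semisimple choices are entangled. This is where the dimension bounds are used. Because $\subsp{A}_{i}$ consists of matrices its action on $\C^{n}$ is faithful, so the occurring characters span $\subsp{A}_{i}^{*}$; since $\dim\subsp{A}_{i}=O(1)$, fixing a basis of $\subsp{A}_{1}^{*}$ among the occurring characters and choosing its image among the at most $n$ occurring characters of $\subsp{A}_{2}$ pins down $\psi$, leaving only $n^{O(1)}$ candidates, one correct if any valid $\phi$ exists. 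Likewise there are only $6^{O(\log n)} = n^{O(1)}$ tuples $\vec{\tau}$. Enumerate all pairs $(\psi,\vec{\tau})$; for each, apply $\psi$ to the characters on both sides and $\vec{\tau}$ to the source-side highest weights, after which what remains is to find a type-preserving permutation $\sigma$ matching two multisets of labelled tensor-irreducibles---precisely the multiset-matching problem solved inside the proof of Theorem~\ref{thm:boundedSemisimple}, the relabelled character being carried along as an inert extra coordinate on which no permutation acts. Running that subroutine $n^{O(1)}$ times decides existence; from a witnessing triple, assemble $\phi_{A}$, $\phi_{S}$, and then block by block over the isotypic decomposition a conjugating matrix $A$.

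The main obstacle is the decoupling of the previous paragraph: one must verify that the $O(\log n)$-summand semisimple-conjugacy procedure of Theorem~\ref{thm:boundedSemisimple} still runs in polynomial time when each constituent carries an $O(1)$-bounded extra label, and that enumerating candidate abelian-side maps $\psi$ (rather than invoking Theorem~\ref{thm:boundedAb} on the abelian part in isolation) misses no conjugacy. These are robustness properties of the earlier algorithms rather than new ideas, but they carry essentially all of the bookkeeping; the rest is standard reductive structure theory together with linear algebra over $\C$.
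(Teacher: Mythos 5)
Your proof is correct and takes essentially the same approach as the paper: reduce to outer equivalence via Lemma~\ref{lem:outerAbCR}, enumerate $n^{O(1)}$ candidates for the abelian-side isomorphism (your basis-of-occurring-characters enumeration plays the role of the paper's $\binom{r}{a}$ pivot choices borrowed from Babai's code-equivalence algorithm), and hand the remaining matching---with the relabelled characters as inert row colors---to the twisted-code-equivalence machinery of Theorem~\ref{thm:boundedSemisimple}. One cosmetic redundancy: you separately enumerate the $n^{O(1)}$ tuples $\vec{\tau}$ of diagram automorphisms before invoking the matcher, whereas the paper lets the BCQ algorithm for \mylang{} absorb that choice directly; this costs only a polynomial factor and does not affect correctness.
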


Since abelian is a special case of abelian-plus-semisimple, this more general case is obviously as hard as code equivalence when we drop the quantitative restrictions of the above theorem.

\subsection{Outline}
In \S\ref{sec:diag} we prove Theorems~\ref{thm:boundedAb} and \ref{thm:abGI} on abelian \LAC; this section can be understood without any background on Lie algebras. Our other results require more knowledge of Lie algebras; we collect the necessary background in Appendix~\ref{sec:background}. In \S\ref{sec:semisimple} we prove Theorems~\ref{thm:ssGI}, \ref{thm:boundedSemisimple}, and \ref{thm:boundedIrreps} on semisimple \LAC. In \S\ref{sec:reducible} we prove our results on direct sums of abelian and semisimple Lie algebras, including Theorem~\ref{thm:boundedReducible}. In \S\ref{sec:affine} we show how to use the above machinery to essentially derandomize Kayal's result on testing affine equivalence to the determinant. 
In the final section, we discuss how close the abelian-plus-semisimple case is to the general case, directions toward the general case, and other future work, 
including potential ways to solve important special cases of the affine equivalence problem efficiently without having to efficiently solve \lang{GI}.

\section{Abelian diagonalizable Lie algebra conjugacy and code equivalence} \label{sec:diag}
In this section we show that conjugacy of abelian diagonalizable matrix Lie algebras is Karp-equivalent to the code equivalence problem, and hence is at least as hard as \lang{GI}. The reduction to code equivalence allows us to solve abelian diagonalizable \lang{LAC} for constant-dimensional Lie algebras in polynomial time. 

A $d$-dimensional \definedWord{code} of length $n$ over a field $\F$ is a $d$-dimensional subspace of $\F^{n}$. Codes are represented algorithmically by giving bases for them as subspaces. The symmetric group $S_{n}$ acts on $\F^{n}$ by permutation of coordinates: for $\pi \in S_{n}$ and $\vec{\alpha} = (\alpha_{1}, \dotsc, \alpha_{n}) \in \F^{n}$, $\pi \cdot \vec{\alpha} = (\alpha_{\pi(1)}, \dotsc, \alpha_{\pi(n)})$. $S_{n}$ then acts on a subspace $V \subseteq \F^{n}$ by $\pi \cdot V = \set{\pi \cdot v : v \in V}$. The \definedWord{code equivalence problem} is: given two codes $C_{1}, C_{2}$, determine whether there is a permutation $\pi \in S_{n}$ such that $\pi \cdot C_{1} = C_{2}$.

\begin{thm} \label{thm:abGI} Abelian dagonalizable Lie algebra conjugacy for $d$-dimensional subspaces of $n \times n$ matrices is equivalent to code equivalence for $d$-dimensional subspaces of $\F^{n}$, for any field $\F$. \end{thm}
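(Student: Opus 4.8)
The plan is to give reductions in both directions that are, after a single normalization step, essentially the identity map; all of the content lies in showing that a conjugacy between two \emph{diagonal} algebras witnessed by an arbitrary invertible matrix can always be replaced by one witnessed by a permutation matrix. To set things up: a $d$-dimensional code $C\subseteq\F^{n}$ determines a $d$-dimensional abelian diagonalizable Lie algebra $\mathrm{diag}(C):=\{\mathrm{diag}(v):v\in C\}$, and conversely every abelian diagonalizable Lie algebra is by definition conjugate to some $\mathrm{diag}(C)$; given a basis of matrices we can compute such a $C$ in polynomial time by simultaneously diagonalizing the commuting (over $\F$) diagonalizable basis matrices. So I would send an abelian diagonalizable \LAC{} instance $(\subsp{L}_{1},\subsp{L}_{2})$ to the code-equivalence instance $(C_{1},C_{2})$ obtained by diagonalizing each $\subsp{L}_{i}$, and a code-equivalence instance $(C_{1},C_{2})$ to the \LAC{} instance $(\mathrm{diag}(C_{1}),\mathrm{diag}(C_{2}))$; dimensions and lengths are preserved. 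The direction ``codes equivalent $\Rightarrow$ algebras conjugate'' is immediate: if $\pi\cdot C_{1}=C_{2}$ then the permutation matrix $P_{\pi}$ conjugates $\mathrm{diag}(C_{1})$ to $\mathrm{diag}(C_{2})$, and conjugating back through the diagonalizing matrices turns this into a conjugacy $\subsp{L}_{1}\to\subsp{L}_{2}$.

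The substance is the converse of correctness: if $R$ is invertible with $R\,\mathrm{diag}(C_{1})\,R^{-1}=\mathrm{diag}(C_{2})$, then $C_{1}$ and $C_{2}$ are permutation-equivalent. The key point is that the common-eigenspace decomposition $\F^{n}=\bigoplus_{\chi}V_{\chi}$ of a commutative algebra of diagonalizable operators is canonical, where $\chi$ ranges over the distinct weights (the linear functionals on the algebra giving the simultaneous eigenvalues). Applied to $\mathrm{diag}(C_{2})$ in the standard basis, $V_{\chi}$ is the span of those $e_{j}$ whose coordinate in $C_{2}$ has weight $\chi$; read off in the basis $Re_{1},\dotsc,Re_{n}$, $V_{\chi}$ is the span of those $Re_{j}$ whose coordinate in $C_{1}$ has weight $\chi$ (after transporting weights across the isomorphism $\mathrm{conj}_{R}$). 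Comparing dimensions, for every $\chi$ the two resulting partitions of $\{1,\dotsc,n\}$ into weight-classes have classes of the same size $\dim V_{\chi}$, so there is a permutation $\pi$ matching them up; chasing eigenvalues through $R\,\mathrm{diag}(v)\,R^{-1}=\mathrm{diag}(w)$ then gives $w_{j}=v_{\pi(j)}$, i.e.\ $\pi\cdot C_{1}=C_{2}$. The same computation, applied to two diagonalizations of a single algebra, shows the code produced by the first reduction is well defined up to permutation-equivalence; and given a code equivalence $\pi$, composing $P_{\pi}$ with the diagonalizing matrices produces an explicit conjugacy, so witnesses transfer both ways.

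The main obstacle I anticipate is precisely the treatment of \emph{degenerate} codes—codes with repeated or zero coordinates, or presented by a non-minimal generating set—since these are exactly the instances where some $V_{\chi}$ has dimension greater than one, where $R$ need not be monomial, and where the centralizer of $\mathrm{diag}(C_{1})$ is a nontrivial block algebra. The dimension count above (that $\dim V_{\chi}$ is the same computed in the standard basis or in the $R$-basis) is what forces the weight-class-matching permutation to exist in full generality, so I would want to state that step carefully. I would also flag two minor points: that ``diagonalizable'' must be read over $\F$ itself for the code to live in $\F^{n}$, and that simultaneous diagonalization of commuting $\F$-diagonalizable matrices is polynomial-time over $\F$ by recursing on eigenspaces. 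Combining the equivalence with a proof that code equivalence over any field is as hard as graph isomorphism (extending Petrank--Roth) then yields the graph-isomorphism-hardness form of the theorem, and combining it with a polynomial-time algorithm for code equivalence of constant-dimensional codes yields Theorem~\ref{thm:boundedAb}.
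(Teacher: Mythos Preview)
Your proposal is correct and follows essentially the same approach as the paper: reduce to the diagonal case by simultaneous diagonalization, then prove that two diagonal algebras are conjugate by an arbitrary invertible matrix if and only if they are conjugate by a permutation matrix, using the weight-space decomposition $\F^{n}=\bigoplus_{\chi}V_{\chi}$ to extract the permutation. The paper phrases the key step as ``$g$ has a block-permutation pattern on the weight spaces, and replacing each block by the identity does not change $g\subsp{A}g^{-1}$,'' which is exactly your dimension-matching argument in slightly different packaging; your additional remarks on well-definedness of the extracted code and on witness-transfer are sound and implicit in the paper's treatment.
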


Before proving this theorem we give some of its consequences.

\begin{cor} \label{thm:boundedAb} 
Conjugacy of abelian diagonalizable $O(1)$-dimensional Lie algebras of $n \times n$ matrices over any field can be solved in $poly(n)$ time.
\end{cor}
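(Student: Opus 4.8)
The plan is to reduce to Theorem~\ref{thm:abGI} and then invoke a polynomial-time algorithm for code equivalence in the bounded-dimension regime. First I would apply Theorem~\ref{thm:abGI} to translate an instance of abelian diagonalizable Lie algebra conjugacy for $d$-dimensional subspaces of $n \times n$ matrices into an instance of code equivalence for $d$-dimensional subspaces of $\F^n$; the reduction is polynomial-time and preserves the dimension $d$, so when $d = O(1)$ we obtain a code equivalence instance with codes of constant dimension. It then suffices to show that code equivalence for $d$-dimensional codes of length $n$ can be solved in $poly(n)$ time when $d = O(1)$.

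For the code equivalence step, the natural approach is to exploit the fact that a $d$-dimensional code $C \subseteq \F^n$ is determined by the multiset of its $n$ columns (viewing a generating matrix as a $d \times n$ matrix), up to the action of $\GL_d(\F)$ on the left (change of basis for the code) and $S_n$ on the right (the permutation we are solving for). So I would first put each code into a canonical form with respect to the $\GL_d(\F)$ action — for instance, by selecting a canonical ordered basis, or by classifying the columns into the finitely many one-dimensional subspaces (``column types'') they span together with their scalar multiplicities. Since $d = O(1)$, the columns of $C_1$ fall into a bounded number of $\GL_d$-orbit types that can be enumerated, and two codes are permutation-equivalent essentially iff, after fixing a $\GL_d$ transformation aligning a spanning set of columns, the remaining column-multiset data match. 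The number of candidate $\GL_d(\F)$ transformations one needs to try is polynomial: an invertible change of basis is pinned down by where it sends $d$ linearly independent columns, and there are at most $\binom{n}{d} = poly(n)$ choices of which $d$ columns of $C_1$ map to a fixed independent $d$-tuple of columns of $C_2$, with at most $d!$ orderings — all $poly(n)$ since $d = O(1)$. For each such candidate the induced column-to-column matching (up to scalars, within each type) can be checked and completed greedily, or reduced to a bipartite matching / bounded-size combinatorial check, in polynomial time.

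The main obstacle I anticipate is handling the scalars and repeated columns carefully: over a general field $\F$ the columns are only determined up to nonzero scalar multiples by their span, and columns may repeat or be zero, so ``matching columns'' must be done at the level of projective points with multiplicities, and one must ensure the chosen $\GL_d$ transformation is globally consistent with \emph{all} columns, not just the $d$ that pinned it down. A secondary concern is that the field $\F$ may be infinite or given with bounded-precision arithmetic, so one must make sure the canonical form and the finitely many transformations tested can be computed exactly using field operations on the input data — which is fine since everything reduces to linear algebra over the subfield generated by the entries of the given bases. Once consistency is verified for some candidate transformation together with a valid completion of the permutation, output the corresponding conjugating matrix (pulling back through the equivalence of Theorem~\ref{thm:abGI}); if no candidate works, report that the Lie algebras are not conjugate. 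The total running time is $poly(n)$ because the number of candidates is $\binom{n}{d} d! = poly(n)$ for constant $d$, and each is checked in polynomial time.
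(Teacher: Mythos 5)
Your proposal follows the same high-level route as the paper: reduce via Theorem~\ref{thm:abGI} to code equivalence, then exploit the constant dimension $d$ by enumerating $\binom{n}{d}\cdot d!$ candidate sets of pivot columns, which uniquely pin down the $\GL_d$ change of basis, and check each in polynomial time. The paper simply cites Babai's theorem (\cite[Theorem~7.1]{BCGQ}), which reduces $d$-dimensional code equivalence to $\binom{n}{d}$ instances of edge-colored bipartite graph isomorphism each solvable in $poly(n)\cdot\min\{d!,(n-d)!\}$ time; your version is a self-contained and elementary instance of the same enumeration idea, sufficient for the $d=O(1)$ case, and is correct. One small clarification: your worry about columns being "determined only up to nonzero scalars" is a red herring for the problem as defined here. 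Code equivalence in this paper is \emph{permutation} equivalence only (no diagonal/monomial rescaling of coordinates), so once $S\in\GL_d$ is fixed the columns of $SM_2$ must match those of $M_1$ \emph{exactly} as a multiset, not projectively; the global consistency check you describe is precisely this multiset comparison, and no projective bookkeeping is needed.
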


\begin{proof}
Babai (see \cite[Theorem~7.1]{BCGQ}) showed that, over any field $\F$, equivalence of $d$-dimensional linear codes of length $n$ reduces to $\binom{n}{d}$ instances of $d \times (n - d)$ edge-colored bipartite graph isomorphism. Each such instance can be solved in $poly(n) \cdot \min\{d!, (n-d)!\}$ time, so when $d = O(1)$ code equivalence can be solved in polynomial time. By Theorem~\ref{thm:abGI}, $d$-dimensional diagonalizable Lie algebra conjugacy can be solved in polynomial time when $d = O(1)$.
\end{proof}


\begin{cor} \label{cor:diagGIhard} Abelian diagonalizable Lie algebra conjugacy over any field is \lang{GI}-hard. \end{cor}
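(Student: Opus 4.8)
The plan is to reduce graph isomorphism to abelian diagonalizable Lie algebra conjugacy by routing through code equivalence. By Theorem~\ref{thm:abGI}, abelian diagonalizable \LAC{} over a field $\F$ is polynomial-time equivalent to code equivalence of subspaces of $\F^{n}$, so it suffices to give a polynomial-time many-one reduction from graph isomorphism to code equivalence over an \emph{arbitrary} field $\F$. For $\F = \F_{2}$ this is precisely the Petrank--Roth reduction \cite{petrankRoth}; the real work is to make that reduction field-independent.

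The gadget: to a graph $G = (V, E)$ with $n = |V|$ and $m = |E|$ associate the code $C_{G}$ generated by the rows of $[\, I_{n} \mid B_{G} \,]$, where $B_{G}$ is the $V \times E$ vertex--edge incidence matrix (entries in $\{0, 1\}$, or $\{0, \pm 1\}$ under a fixed orientation), after blowing up each of the $n$ ``vertex'' coordinates into a block of $t$ identical copies and each of the $m$ ``edge'' coordinates into a block of $s$ identical copies, for suitable $s \ne t$ polynomial in $n$. Both $G \mapsto C_{G}$ and the blow-up are polynomial-time computable.

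The forward direction is routine: a graph isomorphism $\sigma \colon G \to H$ permutes vertex blocks accordingly and sends the edge block of $\{u, v\}$ to that of $\{\sigma u, \sigma v\}$; this coordinate permutation carries the generator matrix of $C_{G}$ to one with the same row space as that of $C_{H}$, so $C_{G}$ and $C_{H}$ are permutation-equivalent.

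The converse is the heart of the matter and the step I expect to be the main obstacle: any coordinate permutation $\pi$ with $\pi \cdot C_{G} = C_{H}$ must carry vertex blocks to vertex blocks and edge blocks to edge blocks \emph{as blocks}, and hence restrict to a graph isomorphism. Over $\F_{2}$, Petrank and Roth get this by comparing Hamming weights and supports of low-weight codewords; over a general prime field the obstacle is that the minimum-weight codewords and the internal linear relations of $C_{G}$ (e.g.\ whether the all-ones vector, or a cycle-space vector of $G$, lies in $C_{G}$) depend on the characteristic. The fix is to argue only from permutation-invariant combinatorial data that does not see the field: the Hamming weight distribution and the inclusion order among supports of codewords. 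Since every generator entry lies in $\{0, \pm 1\}$, every generator has the same support over every $\F$; with $s, t$ chosen large and suitably separated, the minimum-weight codewords up to scalars are exactly the images of the standard basis rows, so their supports are precisely the vertex blocks, pinning down $\pi$ on the vertex coordinates, and then each edge block is recovered as the set of coordinates whose generator-support meets exactly the two relevant vertex blocks. The crux is to check that all the required weight separations can be realized simultaneously, over every field, by a single polynomial-size choice of $(s, t)$; this should be a short explicit calculation once the gadget is fixed.
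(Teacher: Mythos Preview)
Your high-level route matches the paper's exactly: reduce \lang{GI} to code equivalence over an arbitrary field, then invoke Theorem~\ref{thm:abGI}. The difference is in the gadget for the first step, and your sketch of that gadget has a gap.

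The paper (Lemma~\ref{lem:PR}) indexes rows by \emph{edges}, not vertices: $M(G) = [\,I_m \mid I_m \mid I_m \mid D\,]$ with $D$ the $E \times V$ incidence matrix. Every row then has Hamming weight exactly $5$, and any nondegenerate linear combination of $k \geq 2$ rows already has weight $3k \geq 6$ in the first $3m$ columns---over any field, since those columns come from three identity blocks with disjoint supports. Hence the generator rows are, up to scaling, the \emph{only} codewords of weight $\leq 5$; no blow-up parameters, no degree-dependent separation, and the Petrank--Roth argument carries over verbatim once one checks that the forced scaling is trivial.

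Your vertex-indexed gadget $[\,I_n \mid B_G\,]$ blown up by $(t,s)$ can be made to work, but two statements in your sketch are wrong as written. First, ``the minimum-weight codewords are exactly the standard basis rows'' fails unless $G$ is regular: row $v$ has weight $t + s\deg(v)$, so only minimum-degree vertices give minimum-weight rows. What you can say (taking $t > s\,\Delta(G)$) is that the codewords of weight strictly below $2t$ are exactly the scalar multiples of generator rows. Second, and more seriously, ``their supports are precisely the vertex blocks'' is false: the support of row $v$ is its vertex $t$-block \emph{together with} the $s$-blocks of all edges incident to $v$. You still owe a permutation-invariant rule that carves out the individual vertex and edge blocks from this family of supports. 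This is fixable---a coordinate lies in a vertex block iff it belongs to exactly one of the distinguished supports, and in an edge block iff it belongs to exactly two, and the block it belongs to is determined by which supports those are---but that step is missing from the proposal, and once you write it you will see it is doing most of the work. The paper's edge-indexed gadget sidesteps all of this with a fixed, field-independent construction.
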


\begin{proof}
Petrank and Roth \cite{petrankRoth} showed that \lang{GI} Karp-reduces to code equivalence over $\F_{2}$. Over an arbitrary field we use the same reduction, but an extension of their proof is required, which we give in Lemma~\ref{lem:PR} below. Theorem~\ref{thm:abGI} then shows that abelian diagonalizable \lang{LAC} is \lang{GI}-hard.
\end{proof}

\begin{lem} \label{lem:PR} Graph isomorphism Karp-reduces to code equivalence over any field $\F$. \end{lem}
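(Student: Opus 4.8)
The plan is to adapt the Petrank--Roth reduction, which is originally stated over $\F_{2}$, to work over an arbitrary field $\F$. Recall the structure of that reduction: given a graph $G$ on vertex set $[n]$ with edge set $E$, one builds a code whose generator matrix encodes the incidence structure of $G$, in such a way that a permutation of coordinates taking one code to another must respect the split between ``vertex coordinates'' and ``edge coordinates,'' and on that split it induces a graph isomorphism. Concretely, Petrank and Roth associate to $G$ the code spanned by the rows of a matrix of the form $\bigl(\,I_{n} \mid B_{G}\,\bigr)$ (possibly after padding), where $B_{G}$ is essentially the vertex-edge incidence matrix of $G$, and then argue that code equivalence of the resulting codes is equivalent to isomorphism of the graphs. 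First I would write down this construction carefully over $\F$, keeping track of where the field enters.

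The key steps, in order, are as follows. First, fix the generator matrix $M_{G}$ over $\F$; I expect the right object to be (a padded version of) the incidence matrix, chosen so that the multiset of column weights, the dimension, and the pairwise intersection pattern of the coordinate-supports rigidly distinguish the ``identity block'' coordinates from the ``incidence block'' coordinates. Second, show the easy direction: a graph isomorphism $\sigma \colon G_{1} \to G_{2}$ induces a coordinate permutation $\pi$ (acting as $\sigma$ on vertex coordinates and as the induced map on edges) with $\pi \cdot C_{1} = C_{2}$; this is a direct computation with the incidence matrix and does not use anything about $\F$. Third, the converse: given $\pi$ with $\pi \cdot C_{1} = C_{2}$, argue that $\pi$ must send vertex coordinates to vertex coordinates and edge coordinates to edge coordinates. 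Over $\F_{2}$ Petrank and Roth use weight/support arguments on codewords; over a general field one can still speak of the \emph{support} (set of nonzero positions) of a codeword, and the key combinatorial invariants---minimal supports, the sizes of supports of minimal-weight codewords, and the incidence pattern among these supports---are preserved by any monomial or coordinate-permutation equivalence. So I would replace every ``Hamming weight over $\F_{2}$'' argument by the corresponding ``support size'' argument, which is field-independent. Fourth, once $\pi$ is shown to preserve the block structure, its restriction to vertex coordinates is a permutation $\sigma \in S_{n}$, and restricting the equation $\pi \cdot C_{1} = C_{2}$ to rows shows $\sigma$ carries the edge set of $G_{1}$ to that of $G_{2}$, i.e.\ $\sigma$ is a graph isomorphism.

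The main obstacle is the converse direction, specifically establishing the rigidity of the block decomposition over an arbitrary field without the special features of $\F_{2}$. Over $\F_{2}$, a code and its support-structure carry essentially the same information, and ``weight'' is literally ``support size''; over larger fields a codeword can have a given support but many different nonzero values there, so one must be careful that the reduction is to code equivalence under \emph{coordinate permutations} (as defined in the paper) and not under the looser notion of monomial equivalence. Since the permitted transformations are exactly permutations, supports are permuted setwise and all the support-based invariants used by Petrank and Roth go through verbatim; the work is in checking that no codeword of the ``wrong'' support type is created by $\F$-linear combinations that were not available over $\F_{2}$, which amounts to re-examining their weight inequalities and confirming each is really a statement about supports. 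Once that bookkeeping is done, the rest of the argument is the same as in \cite{petrankRoth}, and the polynomial-time bound on the reduction is immediate since $M_{G}$ has size polynomial in $|V(G)| + |E(G)|$.
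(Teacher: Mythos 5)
Your high-level plan is the same as the paper's: take the Petrank--Roth generator-matrix construction, and re-examine the converse direction to make sure the weight-based rigidity argument survives the passage from $\F_{2}$ to an arbitrary field. But the proposal stops at the point where the real work begins, and two concrete steps that the paper supplies are missing.

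First, the construction matters, and your description of it is too loose. The paper uses $M(G) = [\,I_{m}\,|\,I_{m}\,|\,I_{m}\,|\,D\,]$ with $m = |E(G)|$ rows and $D$ the edge-by-vertex incidence matrix, and the \emph{tripled} identity block is precisely what makes the argument field-independent: a nondegenerate $\F$-linear combination of $k \geq 2$ rows (all coefficients nonzero) has exactly $3k \geq 6$ nonzero entries in the first $3m$ columns, because the rows have pairwise disjoint supports in each identity block, so no cancellation is possible no matter what $\F$ is. That is the replacement for the $\F_{2}$ weight counting, and it is a specific calculation, not just ``confirm each weight inequality is really a statement about supports.'' Your sketch, which writes the matrix as $(I_{n} \mid B_{G})$ and gestures at ``support-based invariants,'' does not contain this step; it is the missing idea.

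Second, over an arbitrary field the uniqueness you get from the weight argument is only up to row permutation \emph{and scaling}: the rows of $M(G)$ are the unique minimum-support codewords only up to nonzero scalar multiples. So if $M(G_{1}) = S\,M(G_{2})\,P$ you can only conclude that $S$ factors as $\Delta S'$ with $\Delta$ diagonal and $S'$ a permutation matrix. The paper closes this by observing that the first $3m$ columns of both $M(G_{i})$ have only $0,1$ entries, forcing $\Delta = I$. Your proposal flags the monomial-vs-permutation distinction in passing but never resolves it, and over $\F_{2}$ there is nothing to resolve, so this is a genuinely new step that the generalization requires. Without the tripled-identity counting and the $\Delta = I$ argument, the converse direction does not go through.
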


\begin{proof}
Given a graph $G$, we construct the generator matrix for a code over $\F$ such that two graphs are isomorphic if and only if the codes are equivalent. Let $M(G) = [ I_{m} | I_{m} | I_{m} | D ]$ where $m = |E(G)|$, $I_{m}$ is the $m \times m$ identity matrix, and $D$ is the incidence matrix of $G$: 
\[
D_{e,v} = \begin{cases}
1 & \If v \in e \\
0 & \otherwise
\end{cases}
\]
The \definedWord{Hamming weight} of a vector over $\F$ is the number of non-zero entries. The following claim is essentially the crux of Petrank and Roth's argument, but generalized so as to apply over any field.

\textit{Claim: } up to permutation and scaling of the rows, $M(G)$ is the unique generator matrix of its code which satisfies the following properties:
\begin{enumerate}
\item it is a $|E| \times (3|E| + |V|)$ generator matrix;
\item each row has Hamming weight $\leq 5$;
\item any nondegenerate linear combination of two or more rows has Hamming weight $\geq 6$
\end{enumerate}
A linear combination of $k$ rows is nondegenerate if all $k$ of its coefficients are nonzero.

\textit{Proof of claim: } First, $M(G)$ satisfies (1)--(3). The only part to check is (3): in the first $3m$ columns, any nondegenerate linear combination of $k \geq 2$ rows will have $3k \geq 6$ nonzero entries. Next, let $C$ denote the code generated by the rows of $M(G)$. By (2) and (3) the rows of $M(G)$ are the \emph{unique} vectors in $C$ (up to scaling) of Hamming weight $\leq 5$. Hence if $M'$ is any other generator matrix of $C$ satisfying (1)--(3), its rows must be scaled versions of the rows of $M(G)$ in some order. This proves the claim.

Now, suppose that $M(G_{1})$ and $M(G_{2})$ generate equivalent codes. Then there is a nonsingular matrix $S$ and a permutation matrix $P$ such that $M(G_{1}) = SM(G_{2})P$. By the claim, $S=\Delta S'$ where $\Delta$ is diagonal and $S'$ is a permutation matrix. However, since the first $3|E|$ columns of $M(G_{1})$ and $M(G_{2})$ only contain $0,1$-entries, $\Delta = I$. The rest of the proof of the reduction, including the other direction, proceeds exactly as in Petrank and Roth \cite{petrankRoth}. \end{proof}

\begin{proof}[Proof of Theorem~\ref{thm:abGI}]
Let $(A_{1}, \dotsc, A_{d}), (B_{1}, \dotsc, B_{d})$ be an instance of abelian diagonalizable \lang{LAC}. Standard techniques in linear algebra can be used to simultaneously diagonalize the $A_{i}$ in polynomial time, so we may now assume that the $A_{i}$ are in fact diagonal, rather than merely diagonalizable. Similarly for the $B_{i}$. Let $\subsp{A}$, resp. $\subsp{B}$, denote the Lie algebras spanned by the $A_{i}$, resp. $B_{i}$.

\textit{Claim: } If $\subsp{A}$ and $\subsp{B}$ are diagonal, then they are conjugate if and only if they are conjugate by a permutation matrix. 

By ``flattening out'' the entries of the diagonal matrices into ``row'' vectors the claim shows that diagonalizable $d$-dimensional \lang{LAC} of $n \times n$ matrices is Karp-equivalent to $d$-dimensional code equivalence of codes of length $n$. Thus the claim will complete the proof of the theorem.

\textit{Proof of claim: } Suppose $g \subsp{A} g^{-1} = \subsp{B}$. Since $\subsp{B}$ is diagonal, $g$ must preserve the eigenspaces of every matrix in $\subsp{A}$. The formalization of this notion will allow us to prove our claim. Let $\lambda_{i} \colon \subsp{A} \to \F$ be the linear function $\lambda_{i}(A) = A_{ii}$. We can think of $\lambda_{i}$ as a ``simultaneous eigenvalue for the space $\subsp{A}$ of matrices,'' generalizing the notion of an eigenvalue of a single matrix. Such functions are called \definedWord{weights} in the theory of Lie algebras, and they will play a significant role here and in the case of semisimple Lie algebras as well. Analogous to an eigenspace corresponding to an eigenvalue, there are \definedWord{weight spaces} corresponding to weights. Namely, if $\lambda\colon \subsp{A} \to \F$ is a weight, the corresponding weight space is
\[
V_{\lambda}(\subsp{A}) := \{v \in \F^{n} : Av = \lambda(A) v \text{ for all } A \in \subsp{A}\}
\]
It is these weight spaces that $g$ must preserve in order for $g \subsp{A} g^{-1}$ to be diagonal. For example, if every weight space is $1$-dimensional---or equivalently, if for every pair of indices $1 \leq i < j \leq n$ there is some matrix $A \in \subsp{A}$ with $A_{ii} \neq A_{jj}$---then $g$ must be a permutation matrix. 

More generally, $g$ may send $v \in V_{\lambda_{1}}$ into $V_{\lambda_{2}}$ if and only if $g V_{\lambda_{1}} = V_{\lambda_{2}}$. Within each weight space, $g$ may act in an arbitrary invertible manner. In other words, $g$ is composed of invertible blocks of dimension $\dim V_{\lambda_{i}}$, the pattern in which these blocks appear is a permutation, and that permutation may send $i \mapsto j$ if and only if $\dim V_{\lambda_{i}} = \dim V_{\lambda_{j}}$. However, if $g'$ has the same permutation pattern as $g$ but all the blocks in $g'$ are the identity, then $g \subsp{A} g^{-1} = g' \subsp{A} g'^{-1}$. Hence, without loss of generality, we may take $g$ to be a permutation matrix, proving the claim.
\end{proof}

\section{Semisimple Lie algebra conjugacy and graph isomorphism} \label{sec:semisimple}
\begin{thm} \label{thm:ssGI} Semisimple Lie algebra conjugacy is equivalent to graph isomorphism.
\end{thm}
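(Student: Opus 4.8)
The plan is to prove the two directions of the equivalence separately, recording along the way the basic translation between conjugacy and representation theory that both directions use: for $\subsp{L}_1,\subsp{L}_2\subseteq\gl_n$, one has $A\subsp{L}_1 A^{-1}=\subsp{L}_2$ for some $A\in\GL_n$ if and only if there is an abstract Lie algebra isomorphism $\phi\colon\subsp{L}_1\to\subsp{L}_2$ such that the inclusion representation $\rho_1$ of $\subsp{L}_1$ is isomorphic, as a representation of $\subsp{L}_1$, to $\rho_2\circ\phi$, the inclusion of $\subsp{L}_2$ pulled back along $\phi$. (Given such a $\phi$, any $A$ realizing $\rho_1\cong\rho_2\circ\phi$ conjugates $\subsp{L}_1$ into $\subsp{L}_2$, and a dimension count gives equality; conversely $x\mapsto AxA^{-1}$ is such a $\phi$.)

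For the direction \lang{GI}$\,\le\,$semisimple \LAC, I would attach to a graph $G$ on vertex set $[k]$ the Lie algebra $\subsp{L}(G)\cong\mfsl_2^{\oplus k}$ --- one copy $\mfsl_2^{(i)}$ per vertex --- realized on the representation $\bigoplus_{\{i,j\}\in E(G)}M_{ij}\ \oplus\ \bigoplus_{i\in[k]}N_i$, where $M_{ij}$ is the $4$-dimensional module on which $\mfsl_2^{(i)}$ and $\mfsl_2^{(j)}$ act by the standard $2$-dimensional representation and every other copy acts trivially, and $N_i$ is the $2$-dimensional module on which $\mfsl_2^{(i)}$ acts standardly and the rest trivially. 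The $N_i$ force every copy to act faithfully, so $\subsp{L}(G)$ is honestly $\mfsl_2^{\oplus k}$ as a space of matrices, with minimal ideals exactly the $k$ copies; it is an $n\times n$ matrix Lie algebra, $n=4|E(G)|+2k$, produced in an explicit basis of $3k$ matrices, so this is polynomial time. For correctness: an abstract automorphism of $\mfsl_2^{\oplus k}$ permutes the minimal ideals by some $\sigma\in S_k$ and acts on each copy by an inner automorphism ($\mfsl_2$ has no outer ones), and an inner twist does not change the isomorphism class of a representation, so by the translation above whether $\subsp{L}(G_1)$ and $\subsp{L}(G_2)$ are conjugate depends only on whether some $\sigma$ matches the multisets of irreducible constituents. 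Matching the $4$-dimensional constituents forces $\sigma(E(G_1))=E(G_2)$, i.e.\ $\sigma$ is a graph isomorphism (the $2$-dimensional constituents give no further constraint), and conversely a graph isomorphism supplies such a $\sigma$.

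For the direction semisimple \LAC$\,\le\,$\lang{GI}, I use the structure theory of semisimple Lie algebras over $\C$. Given the basis of each $\subsp{L}_i$, I would compute, by standard techniques of computational Lie theory and in polynomial time: the decomposition of $\subsp{L}_i$ into simple ideals and the Dynkin type of each; a decomposition of $\C^n$ into irreducible $\subsp{L}_i$-submodules; and for each occurring module the tuple of highest weights of the simple summands together with its multiplicity. By Weyl's complete reducibility theorem and the uniqueness of Jordan--H\"older constituents this data is a well-defined invariant of the matrix Lie algebra, up to re-indexing the summands and up to applying a diagram automorphism to each summand --- the latter ambiguity being unavoidable because the matrix Lie algebra carries no preferred Cartan subalgebra or positive system, so ``which highest weight sits on which summand'' is only defined up to the base-preserving automorphisms of each summand, which are exactly the diagram automorphisms. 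Combining the translation above with the classification of finite-dimensional irreducible representations of a semisimple Lie algebra by highest weights (each a tensor product of highest-weight irreducibles of the factors) and with $\Out$ of a simple Lie algebra being its diagram-automorphism group, one then shows: $\subsp{L}_1$ and $\subsp{L}_2$ are conjugate if and only if there is a bijection between their simple summands that preserves Dynkin type and, after a suitable diagram automorphism on each summand, carries the multiset of (highest-weight-tuple, multiplicity) data of $\subsp{L}_1$ to that of $\subsp{L}_2$. This last is a colored hypergraph isomorphism problem --- factor vertices colored by Dynkin type, each irreducible constituent a labeled hyperedge on the factors it involves --- enriched with a bounded local symmetry (of order at most $6$, from triality) at each factor vertex, and such problems reduce to \lang{GI} by routine gadget constructions.

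I expect the second direction to be the main obstacle. The core difficulty is isolating the \emph{complete} conjugacy invariant and, in particular, pinning down the role of the diagram automorphisms of the simple summands: these genuinely enter, since two matrix realizations of the same abstract semisimple Lie algebra can be conjugate only after twisting some summands by a diagram automorphism, and a proof must account for every such twist while forbidding spurious ones. A secondary difficulty is converting the resulting ``isomorphism of labeled structures with prescribed local symmetries'' into a plain graph-isomorphism instance, which needs care but no new ideas. The algorithmic ingredients behind the first bullet of the reduction --- splitting a Lie algebra into simple ideals, decomposing a module into irreducibles, and reading off highest weights --- are classical, but they are the computational engine of this direction, and it is their use that confines the theorem to algebraically closed fields of characteristic zero.
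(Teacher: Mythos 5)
Your proposal takes essentially the same route as the paper: both directions pass through the translation to equivalence of representations up to outer automorphism (with the fact that twisting by an inner automorphism preserves the equivalence class of a semisimple representation as the key input), the \lang{GI}-hardness direction uses an $\mfsl_2^{\oplus k}$ gadget built from edge-indexed irreducible modules, and the reduction to \lang{GI} decomposes into simple ideals and irreducible constituents, reads off highest-weight data modulo diagram automorphisms, and encodes the result as a colored combinatorial structure. The only cosmetic differences are that you use the standard $2$-dimensional $\mfsl_2$-module (plus explicit $N_i$ summands to ensure faithfulness) where the paper uses the adjoint $3$-dimensional module and implicitly assumes no isolated vertices, and that you defer the final gadget construction handling the bounded-order local symmetries at each factor (up to $S_3$ from triality), which the paper carries out explicitly in Appendix~\ref{app:aGI} via its ``\mylang'' formalization.
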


\begin{proof}
We break the proof into four lemmas. By Lemma~\ref{lem:outer}, semisimple Lie algebra conjugacy is equivalent to deciding whether two representations of a semisimple Lie algebra are equivalent up to outer automorphism. By Lemma~\ref{lem:a}, the latter problem reduces to a special case of twisted code equivalence with multiplicities, which we refer to as \mylang. Finally, Lemma~\ref{lem:aGI} reduces \mylang\ to graph isomorphism, and Lemma~\ref{lem:GIa} reduces graph isomorphism to semisimple Lie algebra conjugacy.
\end{proof}

\begin{lem}[de Graaf\footnote{This lemma is essentially present in de Graaf's book \cite{deGraafBook}, especially the content leading up to the discussion at the end of his Section~8.5. However, de Graaf's discussion is presented in terms of weights and the choice of Cartan subalgebra, whereas the aspect we wish to highlight requires no mention of these topics, and can be explained by completely elementary means.}] \label{lem:outer} 
Semisimple Lie algebra conjugacy is equivalent to---nearly just a restatement of---the following problem (see Appendix~\ref{sec:background:rep} for definitions):
\problem{Outer equivalence of Lie algebra representations}
{Two faithful representations $\rho_{1}, \rho_{2}\colon \subsp{L} \to M_{n}$ of a semisimple (abstract) Lie algebra $\subsp{L}$. The $\rho_{i}$ are given by the matrices $\rho_{i}(b_{j})$ for some basis $b_{1}, \dotsc, b_{d}$ of $\subsp{L}$, and $\subsp{L}$ is given by structural constants in the $b_{i}$ basis (see Appendix~\ref{sec:background}).}
{An outer automorphism $\alpha \in \Out(\subsp{L})$ such that $\rho_{1}\outerAut{\alpha}$ is equivalent to $\rho_{2}$, or ``the two representations are not equivalent up to automorphism.''}
\end{lem}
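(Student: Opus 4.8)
The plan is to use the dictionary under which a matrix Lie algebra $\subsp{L}\subseteq M_n$ is exactly the data of an abstract Lie algebra together with a faithful representation --- its inclusion $\subsp{L}\hookrightarrow M_n$ --- and under which conjugacy of matrix Lie algebras becomes the conjunction of (i) abstract isomorphism and (ii) equivalence of the corresponding faithful representations up to an automorphism of the abstract algebra. Two ingredients turn this slogan into the stated equivalence in the semisimple case. The substantive ingredient is that abstract isomorphism of semisimple Lie algebras is decidable in polynomial time, and that when it holds an explicit isomorphism $\phi\colon\subsp{L}_1\to\subsp{L}_2$ can be constructed --- one classifies each algebra by its root system / Dynkin diagram and applies the isomorphism theorem, which is the content of de Graaf's algorithms. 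The elementary ingredient is that ``up to automorphism'' may be replaced by ``up to outer automorphism'': if $\theta=e^{\ad_x}$ is an elementary inner automorphism of $\subsp{L}$ and $\rho\colon\subsp{L}\to\gl(V)$ is a representation, then $\rho(\theta(y))=e^{\rho(x)}\rho(y)e^{-\rho(x)}$ because $\rho$ intertwines $\ad_x$ with $\ad_{\rho(x)}$, so $\rho\outerAut{\theta}$ is equivalent to $\rho$; composing a general automorphism with a suitable inner one then moves it to any prescribed representative of its class in $\Out(\subsp{L})$, and the conjugating witnesses translate accordingly.

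For the reduction from \LAC\ to outer equivalence I would, given semisimple $\subsp{L}_1,\subsp{L}_2\subseteq M_n$ (semisimplicity being checkable by Cartan's criterion on the Killing form), first test abstract isomorphism and declare non-conjugacy if it fails; otherwise compute an explicit $\phi\colon\subsp{L}_1\to\subsp{L}_2$ and output the instance $\subsp{L}:=\subsp{L}_1$ with its structure constants, $\rho_1:=$ the inclusion $\subsp{L}_1\hookrightarrow M_n$, and $\rho_2:=\iota_2\circ\phi$ (inclusion of $\subsp{L}_2$ precomposed with $\phi$); both are faithful. If $A\subsp{L}_1A^{-1}=\subsp{L}_2$, then $\alpha(X):=A^{-1}\phi(X)A$ defines an automorphism of $\subsp{L}$ with $A\,\rho_1(\alpha(X))\,A^{-1}=\phi(X)=\rho_2(X)$, so $\rho_1\outerAut{\alpha}$ is equivalent to $\rho_2$; conversely, if $\rho_1\outerAut{\alpha}$ is equivalent to $\rho_2$ via some invertible $A$, then $A\subsp{L}_1A^{-1}=A\rho_1(\alpha(\subsp{L}_1))A^{-1}=\rho_2(\subsp{L}_1)=\subsp{L}_2$. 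Using the second ingredient we may restrict $\alpha$ to representatives of $\Out(\subsp{L})$, and the matrix $A$ is recovered from a witnessing equivalence by solving the linear intertwining equations.

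For the reverse reduction, given $\subsp{L},\rho_1,\rho_2$ I would output the matrix Lie algebras $\rho_1(\subsp{L})$ and $\rho_2(\subsp{L})$, spanned by the matrices $\rho_i(b_j)$; each is semisimple, being a faithful image of the semisimple algebra $\subsp{L}$. If $A\rho_1(\subsp{L})A^{-1}=\rho_2(\subsp{L})$, then $\alpha:=\rho_1^{-1}\circ(X\mapsto A^{-1}XA)\circ\rho_2$ is an automorphism of $\subsp{L}$ with $A\,\rho_1(\alpha(X))\,A^{-1}=\rho_2(X)$, so $\rho_1\outerAut{\alpha}$ is equivalent to $\rho_2$; the converse and the passage to $\Out(\subsp{L})$ are exactly as before.

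The only genuinely nontrivial step is the first ingredient --- polynomial-time decidability of abstract isomorphism of semisimple Lie algebras together with the effective construction of a witnessing isomorphism --- which is where the argument leans on de Graaf's book, and which explains the attribution. The second ingredient is elementary, but it has to be stated carefully enough to carry both the existence statement and the back-and-forth translation of witnesses between $\Out(\subsp{L})$ and $\GL_n$; everything else is the formal unwinding of the dictionary above, which is why the statement is ``nearly just a restatement.''
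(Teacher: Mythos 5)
Your proposal is correct and follows essentially the same route as the paper: test abstract isomorphism (via de~Graaf's algorithms), reduce conjugacy to equivalence of the two faithful representations up to automorphism, and observe that inner automorphisms act trivially on equivalence classes of representations so that ``automorphism'' can be replaced by ``outer automorphism.'' The only cosmetic differences are that you take $\subsp{L}:=\subsp{L}_1$ as the abstract model rather than a separately constructed $\subsp{L}$ with isomorphisms $\rho_i\colon\subsp{L}\to\subsp{L}_i$, and you supply the short intertwining computation $\rho\circ e^{\ad_x}=e^{\ad_{\rho(x)}}\circ\rho$ explicitly rather than citing the paper's Lemma~\ref{lem:inner}; neither affects the substance of the argument.
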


\begin{proof}
Suppose $\subsp{L}_{1}, \subsp{L}_{2} \subseteq M_{n}$ is an instance of semisimple Lie algebra conjugacy, that is, they are both semisimple matrix Lie algebras. Using techniques given in de Graaf \cite[\S 5.11]{deGraafBook}, we can determine if the $\subsp{L}_{i}$ are isomorphic as abstract Lie algebras; if not, they are not conjugate as matrix Lie algebras, or if so, we can construct an abstract Lie algebra $\subsp{L}$ that they are isomorphic to, together with isomorphisms $\rho_{i}\colon \subsp{L} \to \subsp{L}_{i}$ for $i=1,2$. Since $\subsp{L}_{i} \subseteq M_{n}$, the $\rho_{i}$ are faithful representations of $\subsp{L}$. We claim that the $\rho_{i}$ are equivalent up to an outer automorphism of $\subsp{L}$ if and only if the $\subsp{L}_{i}$ are conjugate.

Suppose $\subsp{L}_{2} = g \subsp{L}_{1} g^{-1}$ for some invertible matrix $g$. Let $c_{g}\colon M_{n} \to M_{n}$ be defined by $c_{g}(X) = gXg^{-1}$. Then $\alpha = \rho_{2}^{-1} \circ c_{g} \circ \rho_{1}$ is a map from $\subsp{L}$ to $\subsp{L}$. Since the $\rho_{i}$ are isomorphisms, and $c_{g}|_{\subsp{L}_{1}}\colon \subsp{L}_{1} \to \subsp{L}_{2}$ is an isomorphism, the composition $\alpha$ is an automorphism of $\subsp{L}$. Then $\rho_{2} \circ \alpha = \rho_{2} \circ \rho_{2}^{-1} \circ c_{g} \circ \rho_{1} = c_{g} \circ \rho_{1}$, which is by definition equivalent to $\rho_{1}$. By the discussion following Lemma~\ref{lem:inner}, $\rho_{2}^{\overline{\alpha}}$ is thus equivalent to $\rho_{1}$, where $\overline{\alpha}$ is the outer automorphism corresponding to $\alpha$.

Conversely, suppose $\rho_{1}^{\overline{\alpha}}$ is equivalent to $\rho_{2}$ for some outer automorphism $\overline{\alpha}$. Let $\alpha \in \Aut(\subsp{L})$ be a representative of $\overline{\alpha}$; then there is an invertible matrix $g$ such that $\rho_{2} = c_{g} \circ \rho_{1} \circ \alpha$. Then we have 
\[
\subsp{L}_{2} = \im(\rho_{2}) =\im(c_{g}  \circ \rho_{1} \circ \alpha) = c_{g}(\im(\rho_{1} \circ \alpha)).
\]
Since $\alpha$ is an automorphism it is onto, so $\im(\rho_{1} \circ \alpha) = \im(\rho_{1}) = \subsp{L}_{1}$, and we have $\subsp{L}_{2} = g \subsp{L}_{1} g^{-1}$.

The preceding argument gives a reduction from semisimple Lie algebra conjugacy to the outer equivalence of Lie algebra representations. The reduction in the other direction is as follows: suppose $\subsp{L}$ is a semisimple Lie algebra and $\rho_{1},\rho_{2} \colon \subsp{L} \to M_{n}$ are two faithful representations. We reduce this to the instance of semisimple Lie algebra conjugacy given by $\subsp{L}_{i} = \im(\rho_{i})$ ($i=1,2$). The proof that this is a reduction is identical to the proof above.
\end{proof}

\begin{lem} \label{lem:a} Outer equivalence of Lie algebra representations reduces to the following problem:
\problem{\mylang}
{Two $r \times s$ integer matrices $M_{1}, M_{2}$; a partition of the columns into consecutive ranges $[1, \dotsc, k_{1}],[k_{1}+1,\dotsc, k_{1} + k_{2}], \dotsc [k_{1} + \dotsb + k_{t-1} + 1, \dotsc, s]$; for each range, a group $G_{\ell}$ acting on the integers appearing in the corresponding columns, where each $G_{\ell}$ is abstractly isomorphic to one of: $1$, $S_{2}$, or $S_{3}$.}
{A permutation $\pi \in S_{r}$, a permutation $\sigma \in S_{k_{1}} \times S_{k_{2}} \times \dotsb \times S_{k_{t}}$, and for each column an element $g_{j}$ in the group $G_{\ell}$ associated to that column range, such that for all $i,j$, $M_{1}(i,j) = g_{j}(M_{2}(\pi(i), \sigma(j)))$, or ``the matrices are not equivalent.'' In other words, after applying $\pi$ to the rows, $\sigma$ to the columns, and each $g_{j}$ to the values of the entries in the $j$-th column, $M_{1}$ and $M_{2}$ become equal.}
\end{lem}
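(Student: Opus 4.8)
The plan is to use highest-weight theory to turn each representation into a combinatorial object---a matrix whose rows record the irreducible constituents and whose columns record the simple direct summands of $\subsp{L}$---and then to match the action of $\Out(\subsp{L})$ with the symmetries permitted in \mylang.

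First I would recall the structure theory. Write $\subsp{L} = \subsp{S}_{1}^{\oplus k_{1}} \oplus \dotsb \oplus \subsp{S}_{t}^{\oplus k_{t}}$ with the $\subsp{S}_{\ell}$ pairwise non-isomorphic simple Lie algebras; this decomposition and the ranks $r_{\ell}$ of the $\subsp{S}_{\ell}$ can be extracted from the structure constants using de Graaf's algorithms \cite{deGraafBook}. By complete reducibility (Weyl's theorem) each faithful $\rho_{i}$ is a direct sum of irreducibles, and each irreducible is an outer tensor product of irreducibles of the simple factors, hence is determined by a tuple of dominant weights, one per simple factor, each dominant weight being a tuple of non-negative integers (coordinates in the basis of fundamental weights, bounded polynomially in $n$). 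Two representations are equivalent iff they have the same multiset of irreducible constituents. The decomposition into irreducibles and the highest weights of the constituents are again computable from the matrices $\rho_{i}(b_{j})$ by de Graaf's algorithms. Fix an injective integer encoding $e_{\ell}$ of the finitely many dominant weights of $\subsp{S}_{\ell}$ that appear in either representation (closed under the Dynkin-diagram action, extending if necessary).

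Next I would build the \mylang\ instance. Let $M_{i}$ have one row per irreducible constituent of $\rho_{i}$ (if $\rho_{1}$ and $\rho_{2}$ have different numbers of constituents, or differ in dimension, output ``not equivalent''), and one column per simple factor of $\subsp{L}$, so $s = k_{1} + \dotsb + k_{t}$ columns, grouped into $t$ consecutive ranges, with range $\ell$ consisting of the $k_{\ell}$ columns for the copies of $\subsp{S}_{\ell}$; the entry of $M_{i}$ in a given row and a column of range $\ell$ is $e_{\ell}$ applied to the highest weight of that constituent on that factor. For range $\ell$ set $G_{\ell} = \Out(\subsp{S}_{\ell})$, acting on the integers in that range via the Dynkin-diagram automorphisms' action on dominant weights (permutation of fundamental-weight coordinates), transported through $e_{\ell}$. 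The key classification input is that the automorphism group of the Dynkin diagram of a simple Lie algebra is always $1$, $S_{2}$ (types $A_{n}$ with $n \geq 2$, $D_{n}$ with $n \geq 5$, $E_{6}$), or $S_{3}$ (type $D_{4}$), so each $G_{\ell}$ has the required form.

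Finally I would verify correctness. By Lemma~\ref{lem:inner} and the discussion following it, inner automorphisms preserve the equivalence class of a representation, so $\Out(\subsp{L})$ acts on equivalence classes of representations; moreover $\Out(\subsp{L}) \cong \prod_{\ell} \bigl(\Out(\subsp{S}_{\ell}) \wr S_{k_{\ell}}\bigr)$, with the $S_{k_{\ell}}$ permuting isomorphic summands and the $\Out(\subsp{S}_{\ell})$ acting on individual summands. Hence an outer automorphism $\alpha$ with $\rho_{1}\outerAut{\alpha}$ equivalent to $\rho_{2}$ corresponds exactly to a permutation $\pi$ of constituents matching the two multisets, permutations $\sigma \in \prod_{\ell} S_{k_{\ell}}$ of the copies in each range, and Dynkin automorphisms $g_{j} \in G_{\ell(j)}$ on the copies---and these together are precisely a solution to the \mylang\ instance, since $\alpha$ transforms the highest weight of a constituent on factor $j$ by relabeling the copy via $\sigma$ and then applying $g_{j}$. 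The main obstacle I expect is marshalling the representation-theoretic facts in the needed algorithmic form---decomposing a matrix representation of a semisimple Lie algebra into irreducibles and reading off highest weights in polynomial time, and pinning down $\Out(\subsp{L})$ together with its action on representations---rather than the routine bookkeeping of the encoding; the classification of Dynkin-diagram automorphisms is exactly what forces each $G_{\ell}$ into $\{1, S_{2}, S_{3}\}$.
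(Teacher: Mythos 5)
Your proposal is correct and follows essentially the same reduction as the paper: decompose $\subsp{L}$ into isotypic blocks of simple summands (giving the columns and column partition), decompose each $\rho_i$ into irreducibles (giving the rows), encode the restriction of each constituent to each simple factor as an integer, and observe that $\Out(\subsp{L}) \cong \prod_\ell \bigl(\Out(\subsp{S}_\ell) \wr S_{k_\ell}\bigr)$ matches the row/column/value symmetries of \mylang. The only difference is that you make the integer encoding concrete via highest weights and Dynkin-diagram automorphisms, whereas the paper simply asserts such a polynomial-time encoding exists and that $\Out(\subsp{S}_\ell)$ is always $1$, $S_2$, or $S_3$ by the classification.
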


\begin{proof}
Let $\subsp{L}$ be a semisimple Lie algebra, and let $\rho_{1}, \rho_{2} \colon \subsp{L} \to M_{n}$ be two faithful representations of $\subsp{L}$.
Compute the direct sum decomposition of $\subsp{L}$; suppose it is $\subsp{L} = \subsp{L}_{1,1} \oplus \dotsb \oplus \subsp{L}_{1,k_{1}} \oplus \subsp{L}_{2, 1} \oplus \dotsb \oplus \subsp{L}_{2,k_{2}} \oplus \dotsb \oplus \subsp{L}_{t,k_{t}}$ where each $\subsp{L}_{i,j}$ is a simple summand of $\subsp{L}$, and the $\subsp{L}_{i,j}$ are grouped by isomorphism type, so that $\subsp{L}_{i_{1},j_{1}}$ and $\subsp{L}_{i_{2},j_{2}}$ are isomorphic if and only if $i_{1} = i_{2}$. For each $i$, let $\subsp{L}_{i}$ be a simple Lie algebra isomorphic to $\subsp{L}_{i,j}$ for all $j$. 

To each $\rho_{i}$ we will associate a matrix $M_{i}$, as well as the other data necessary for \mylang. The columns correspond to the direct summands $\subsp{L}_{i,j}$, and the column partition is along the isomorphism types of the summands.

Next, we define the permutation groups $G_{\ell}$. To each simple type $\subsp{L}_{\ell}$, we fix once and for all an encoding of its representations as integers; both the encoding and decoding should be polynomial-time. That this can be done follows from the standard description of the representations of the simple Lie algebras. The integer $0$ will always stand for the (trivial) zero representation. The permutation action of $\Out(\subsp{L}_{\ell})$ on the representations of $\subsp{L}_{\ell}$, encoded as integers, can be easily computed, as follows. Given $\overline{\alpha} \in \Out(\subsp{L}_{\ell})$ and an integer, convert it to the corresponding representation as above. This representation is a linear map $\subsp{L}_{\ell} \to M_{n}$ for some $n$. Pre-compose this map with a representative $\alpha \in \Aut(\subsp{L}_{\ell})$ of $\overline{\alpha}$; this can be done because the outer automorphisms of all simple Lie algebras are known explicitly and are easy to compute. For example, the unique outer automorphisms of $\mfsl_{n}$, the trace zero matrices, is given by the map $A \mapsto -A^{T}$. The outer automorphism groups of simple Lie algebras are all trivial, $S_{2}$, or $S_{3}$. Finally, convert this new, ``twisted-by-$\alpha$'' representation back to an integer. The group $G_{\ell}$ associated to the $\ell$-th isomorphism type (=$\ell$-th column grouping) is then $\Out(\subsp{L}_{\ell})$, and the action on the integers is the action described above.

Finally, we describe the rows and the entries of the matrices $M_{i}$. Decompose the representations $\rho_{i}$ into their direct sum decompositions $\rho_{i} = \rho_{i,1} \oplus \dotsb \oplus \rho_{i,r}$, where each $\rho_{i,r}$ is an irreducible representation of $\subsp{L}$. Corollary~\ref{cor:directSum} says that this can be done in polynomial time. The $q$-th row of $M_{i}$ corresponds to the irreducible representations $\rho_{i,q}$. An irreducible representation of a direct sum of Lie algebras is completely specified by its restriction to each summand. Hence, the representation $\rho_{i,q}$ is specified by a representation of each summand $\subsp{L}_{\cdot, \cdot}$, that is, an integer in each column.

Since the outer automorphism group of $\subsp{L}$ is $\prod_{i=1}^{t} \Out(\subsp{L}_{i}) \wr S_{k_{i}} = \left(\prod_{i=1}^{t} \Out(\subsp{L}_{i})^{k_{i}}\right) \rtimes \left(\prod_{i=1}^{t} S_{k_{i}} \right)$, the representations $\rho_{1}, \rho_{2}$ are equivalent up to an outer automorphism if and only if there is a permutation of the columns (=direct summands of the Lie algebra), for each column an element $g_{\ell} \in G_{\ell}$ (=an outer automorphism of each direct summand), and a permutation of the rows (=irreducible constituents of $\rho_{i}$) that will make $M_{1}$ equal to $M_{2}$. Conversely, any such equivalence of $M_{1}$ and $M_{2}$ according to \mylang\ corresponds to an outer automorphism of $\subsp{L}$ that makes $\rho_{1}$ and $\rho_{2}$ equivalent.
\end{proof}

\begin{lem} \label{lem:aGI} \mylang\ reduces to graph isomorphism. \end{lem}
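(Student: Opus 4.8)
The plan is to encode an instance of \mylang\ as a pair of vertex- and edge-colored graphs whose isomorphisms are forced to be exactly the allowed combinations of row permutation $\pi$, column permutation $\sigma$ (respecting the column partition), and per-column group elements $g_j$. Since graph isomorphism for colored graphs (including edge colors) reduces to ordinary graph isomorphism by standard gadget constructions, it suffices to build colored graphs.

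First I would set up the vertex set: create one vertex $u_i$ for each row $i\in[r]$, one vertex $v_j$ for each column $j\in[s]$, and for each entry $M(i,j)$ a vertex $w_{i,j}$ carrying the value of that entry as part of its data. The row vertices all get one color, the column vertices get colors according to which range $[k_1+\dots+k_{\ell-1}+1,\dots]$ of the partition they lie in — this is what prevents $\sigma$ from mixing columns of different isomorphism type — and we add edges $u_i \sim w_{i,j} \sim v_j$ so that the incidence pattern between rows, entries, and columns is preserved by any isomorphism. The subtle point is encoding the entry \emph{values} so that an isomorphism is allowed to apply $g_j$ to the values in column $j$ but must apply the \emph{same} $g_j$ across the whole column and only a $g_j$ lying in the group $G_\ell$ attached to column $j$'s range. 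Because each $G_\ell$ is one of $1$, $S_2$, or $S_3$, and each acts on a finite set of integers, I would handle this with a small fixed gadget per column: attach to $v_j$ a copy of the ``action gadget'' for $G_\ell$, namely a graph on the orbit-elements of $G_\ell$ that has automorphism group exactly $G_\ell$ (for $S_2$, an edge; for $S_3$, a triangle — or, if an orbit is a proper subset, a suitably colored version), and tie the entry-vertex $w_{i,j}$ to the node of this gadget representing the value $M(i,j)$. Then an isomorphism of the whole graph is forced to pick, for each column gadget, an automorphism of that gadget — i.e.\ an element of $G_\ell$ — and this element simultaneously relabels the values of \emph{every} entry in that column, exactly as \mylang\ requires. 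Color classes keep the $r+s$ ``structural'' vertices from being confused with gadget vertices, and keep the two graphs' gadgets aligned.

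The correctness argument then has two directions. If $M_1$ and $M_2$ are \mylang-equivalent via $(\pi,\sigma,(g_j))$, then the map sending $u_i\mapsto u_{\pi(i)}$, $v_j \mapsto v_{\sigma(j)}$, $w_{i,j}\mapsto w_{\pi(i),\sigma(j)}$, and acting on each column gadget by $g_j$, is a graph isomorphism — one checks incidences and values are preserved. Conversely, any isomorphism of the constructed graphs must send row vertices to row vertices and column vertices to column vertices (by color), must respect the column partition (again by color), hence induces $\pi\in S_r$ and $\sigma\in S_{k_1}\times\dots\times S_{k_t}$; and its restriction to each column gadget is an automorphism of that gadget, hence an element $g_j\in G_\ell$; the edge $u_i\sim w_{i,j}\sim v_j$ structure forces $w_{i,j}\mapsto w_{\pi(i),\sigma(j)}$, and the gadget attachment forces $M_1(i,j)=g_j(M_2(\pi(i),\sigma(j)))$. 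This is exactly an \mylang-equivalence.

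The main obstacle I expect is the gadget that simultaneously (a) realizes precisely the group $G_\ell$ as its automorphism group, (b) permits that automorphism to act on \emph{all} of column $j$'s entries in a coordinated way while leaving distinct columns independent, and (c) does not accidentally create extra automorphisms (e.g.\ swapping two columns that happen to have identical entry-multisets but different gadget colors, or an automorphism internal to the value-set of a single entry). Getting the coloring bookkeeping right so that the only surviving automorphisms are the intended $(\pi,\sigma,(g_j))$ — in particular ensuring the ``integers appearing in the columns'' are treated as an abstract set on which $G_\ell$ acts, rather than as ordered data — is where the care is needed; the fact that the groups are restricted to $1$, $S_2$, $S_3$ (so their action gadgets are tiny and explicit) is what makes this manageable rather than requiring a general group-to-graph encoding.
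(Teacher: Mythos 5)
Your high-level framework is exactly the one the paper uses: encode each $M_i$ as a vertex- and edge-colored bipartite graph (with the column partition enforced by vertex colors), subdivide each edge by an ``entry'' vertex, and attach to each column a gadget whose automorphism group realizes $G_\ell$ and whose automorphisms coherently relabel all entries in that column. You also correctly identify where the difficulty lies. But that is where the proposal stops short of a proof: the gadgets you actually name---an edge for $S_2$, a triangle for $S_3$---are insufficient, and building a gadget that works is the technical content of this lemma.

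The problem is that $G_\ell$ acts on the set of integers appearing in column $j$, and that action in general decomposes into several orbits of several different types. For $S_3$ there are four orbit types (fixed points; 2-element orbits swapped by odd permutations; the natural 3-point action; the regular 6-point action), and there can be arbitrarily many orbits of each type. A single triangle realizes only the natural 3-point action; it cannot encode a fixed label, a pair of labels swapped by odd elements, or a 6-element regular orbit, and it certainly cannot encode several of each simultaneously. Worse, the different orbit types must be \emph{linked}: one element of $S_3$ must act consistently across every orbit, so separate small gadgets per orbit would allow spurious ``mismatched'' automorphisms. Your parenthetical ``or, if an orbit is a proper subset, a suitably colored version'' is exactly the missing step. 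The paper's proof devotes essentially all of its length to constructing a specific 11-vertex ``color gadget'' for $S_3$ (with vertices $1,2,3$ for the natural action, $A_{ij}, B_{ji}$ for the regular action, and $A,B$ for the odd/even action, using directed edges to kill unwanted symmetries), proving its automorphism group is exactly $S_3$ acting as required on all orbit types, gluing multiple copies in a marked line to allow arbitrarily many orbits of each type, and handling the (easier) $S_2$ and trivial cases analogously. Without some such explicit gadget and a verification of its automorphism group, the reduction's correctness argument does not go through, so the proposal as written has a genuine gap at its central step.
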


\begin{proof}
This is a fun exercise we invite the reader to try for him- or herself. We include the details in Appendix~\ref{app:aGI}.
\end{proof}

\begin{lem} \label{lem:GIa}
Graph isomorphism reduces to semisimple Lie algebra conjugacy.
\end{lem}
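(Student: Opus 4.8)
The plan is to reverse-engineer the chain of reductions already established: we have semisimple \LAC\ $\equiv$ outer equivalence of representations (Lemma~\ref{lem:outer}), and outer equivalence reduces to \mylang\ (Lemma~\ref{lem:a}). For the hardness direction it suffices to encode a graph isomorphism instance directly as an instance of semisimple \LAC, and the cleanest way is to mimic the code-equivalence-is-\lang{GI}-hard argument (Lemma~\ref{lem:PR}, following Petrank--Roth) at the level of representations. Given graphs $G_1, G_2$ on vertex set $[v]$ with edge sets of size $m$, I would fix a single simple Lie algebra $\mfsl_2$ (which has trivial outer automorphism group and a well-understood representation theory: one irreducible of each dimension $1, 2, 3, \dotsc$) and build, from each $G_i$, a semisimple Lie algebra $\subsp{L}(G_i)$ that is a direct sum of $N$ copies of $\mfsl_2$ (for an appropriate $N$ depending only on $v$ and $m$, not on the graph), equipped with a faithful representation $\rho_i$ whose decomposition into irreducible constituents encodes the incidence structure of $G_i$. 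The summands of $\subsp{L}(G_i)$ play the role of "coordinates'' and the irreducible constituents of $\rho_i$ play the role of "codewords.''

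Concretely: because $\Out(\mfsl_2)$ is trivial and all the simple summands are isomorphic, an instance of \mylang\ coming from such $\rho_i$ has every $G_\ell$ trivial and a single column block of size $N$; so outer equivalence of $\rho_1$ and $\rho_2$ is exactly: does there exist a permutation $\sigma \in S_N$ of the columns (summands) and a permutation $\pi \in S_r$ of the rows (constituents) making $M_1 = M_2$? This is precisely code equivalence of the integer matrices $M_i$ (read as "codes'' in the Petrank--Roth sense, where each column's alphabet of integers indexes irreducibles of $\mfsl_2$). I would therefore take the code-equivalence-hard gadget from Lemma~\ref{lem:PR} --- the generator matrix $M(G) = [I_m \mid I_m \mid I_m \mid D]$ --- and realize a row of that matrix as an irreducible representation of the direct sum $\bigoplus_{j=1}^{N} \mfsl_2$ by sending each $0$ entry to the trivial ($1$-dimensional) representation of that summand and each nonzero entry $c$ to a small nontrivial irreducible of $\mfsl_2$ (e.g. the $2$-dimensional standard representation, or to get distinct "scalings'' faithfully realized, irreducibles of distinct dimensions $2, 3, \dotsc$ indexed by $c$). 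The direct sum $\rho_i$ of these constituents is a faithful representation of $\bigoplus_{j=1}^{N}\mfsl_2$, and I would compose the resulting reduction with Lemma~\ref{lem:outer} to land in semisimple \LAC. (Over a field, "scaling'' of a coordinate in code equivalence is forced to be trivial by the $0,1$ structure of the first $3m$ blocks, exactly as in the proof of Lemma~\ref{lem:PR}, so the bijection between $\mfsl_2$-irreducibles and integer values need not carry any extra group action.)

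The correctness argument then has two halves, both essentially already done elsewhere in the paper. If $G_1 \cong G_2$, the isomorphism induces a relabeling of edges and vertices, hence a permutation of the columns of $M(G_i)$ and of its rows, hence an equivalence of representations up to (the trivial) outer automorphism, hence conjugacy of $\subsp{L}(G_1)$ and $\subsp{L}(G_2)$ by Lemma~\ref{lem:outer}. Conversely, if $\subsp{L}(G_1)$ and $\subsp{L}(G_2)$ are conjugate, Lemma~\ref{lem:outer} gives an outer equivalence of $\rho_1, \rho_2$, which (since $\Out(\mfsl_2)=1$) is an honest equivalence, which by the argument of Lemma~\ref{lem:a} is a row-and-column permutation matching $M(G_1)$ to $M(G_2)$; the uniqueness claim inside the proof of Lemma~\ref{lem:PR} (rows of weight $\le 5$ versus $\ge 6$) forces this permutation to come from a graph isomorphism $G_1 \cong G_2$. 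The main obstacle --- and the only place real care is needed --- is verifying that the map from integer entries to $\mfsl_2$-representations is faithful and that the direct sum really is the matrix Lie algebra $\bigoplus \mfsl_2$ (not something larger or with accidental extra conjugating symmetries), i.e. that distinct columns of $M(G)$ genuinely correspond to distinct, non-interchangeable simple summands except as permitted by the Hamming-weight rigidity. This is handled by choosing the constituent dimensions so that the "multiplicity pattern'' of each summand across the constituents is itself rigid, which is exactly what the Petrank--Roth weight bound guarantees; I would also double-check that $N$ and the constituent dimensions can be chosen polynomially bounded so the reduction runs in polynomial time, and that the resulting $\rho_i$ are faithful (each summand appears nontrivially in at least one constituent, which holds because every edge/vertex coordinate is hit by some row of $M(G)$).
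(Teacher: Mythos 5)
Your proposal has the right skeleton --- use a direct sum of copies of $\mfsl_2$, encode a $0$-$1$ matrix in the multiplicity pattern of irreducible constituents (trivial rep for $0$, a small nontrivial one for $1$), reverse the reduction of Lemma~\ref{lem:a}, and land in semisimple \LAC\ via Lemma~\ref{lem:outer}. That is exactly the paper's strategy. But you take an unnecessary detour through the Petrank--Roth gadget $M(G) = [\,I_m \mid I_m \mid I_m \mid D\,]$, and this detour actually introduces a gap rather than closing one.

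The gadget is designed for \emph{linear} code equivalence, where one must rule out nontrivial row operations (linear combinations) and nontrivial scalings; the Hamming-weight rigidity claim in Lemma~\ref{lem:PR} is precisely about nondegenerate linear combinations of two or more rows having weight $\geq 6$. But \mylang\ only permits row \emph{permutations}, not linear combinations, so the rigidity argument you cite does no work. What you actually need --- and have not argued --- is that an arbitrary joint row-and-column permutation sending $M(G_2)$ to $M(G_1)$ cannot mix the identity-block columns with the incidence columns of $D$. This can fail when the graph has low-degree vertices: a degree-$1$ vertex contributes a column of weight $1$ to $D$, indistinguishable by column-weight from an identity column, and nothing in your argument pins down where such a column goes. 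The Petrank--Roth weight bound cannot be invoked to repair this because it is a statement about row spans, not column structure.

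The fix is simply to drop the gadget. The paper feeds the bare incidence matrix $D_i$ (rows indexed by edges, columns by vertices) directly into \mylang\ with trivial column partition and trivial column groups: a row permutation $\pi$ and column permutation $\sigma$ with $D_1(i,j) = D_2(\pi(i),\sigma(j))$ is exactly a graph isomorphism, since each edge $\{u,v\}$ of $G_1$ is sent to the edge $\{\sigma(u),\sigma(v)\}$ of $G_2$. Mapping $1 \mapsto$ adjoint rep of $\mfsl_2$ and $0 \mapsto$ trivial rep over $\subsp{L} = \mfsl_2^{\oplus n}$ yields faithful representations of dimension $9m \times 9m$ (each constituent has dimension $3^2 = 9$), which is polynomial. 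One smaller inaccuracy in your write-up: the paper's table asserts $\Out(\mfsl_n) = S_2$ and its proof says $\mfsl_2$ has an outer automorphism acting trivially on representations; you instead assert $\Out(\mfsl_2) = 1$. Either justification makes the column groups trivial, so this does not affect correctness, but it is worth stating which fact you are relying on.
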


\begin{proof}
Let $(G_{1}, G_{2})$ be an instance of graph isomorphism, and let $D_{i}$ be the $0$-$1$ incidence matrix of $G_{i}$, where the rows correspond to edges and the columns correspond to vertices. The $G_{i}$ are isomorphic if and only if there is a permutation of the rows and the columns that makes the $D_{i}$ equal. Then $(D_{1}, D_{2})$ is an instance of \mylang\ where the column partition is trivial, and the column groups $G_{\ell}$ are also trivial. We show how to reduce such an instance of \mylang\ to outer equivalence of Lie algebra representations, and hence to semisimple Lie algebra conjugacy.

Given an instance of \mylang\ as above---in particular, it only contains the entries $0$ and $1$, it contains exactly two non-zero entries per row, every column contains a non-zero entry, and the column partition and column groups are all trivial---let $\subsp{L} = \mfsl_{2}^{\oplus n}$, where $n$ is the number of vertices of the $G_{i}$ (=columns of the matrices). Let a $1$ in the matrix $D_{i}$ correspond to the adjoint representation of $\mfsl_{2}$ (the Lie algebra of $2 \times 2$ trace zero matrices), which is faithful and has dimension $3$. Then, by reversing the reduction in Lemma~\ref{lem:a}, we get an instance of outer equivalence of Lie algebra representations.

Since each column contains a non-zero entry, these representations are faithful. Since each row contains exacty two $1$'s, the corresponding irreducible representation has dimension $3^{2} = 9$, hence the representations we get are matrices of dimension $9m \times 9m$, where $m$ is the number of edges of $G_{i}$. Since $\mfsl_{2}^{\oplus n}$ is generated by $3n$ elements, the representations can be specified by $3n \times (9m)^{2}$ numbers, which is polynomial in the size of the original graphs. 

Finally, although $\mfsl_{2}$ has an outer automorphism, this outer automorphism acts trivially on the representations of $\mfsl_{2}$, so the corresponding column groups are trivial, as desired.
\end{proof}

\begin{thm} \label{thm:boundedSemisimple} Conjugacy of semisimple Lie algebras of $n \times n$ matrices can be solved in polynomial time, when the Lie algebras consist of $O(\log n)$ simple direct summands. \end{thm}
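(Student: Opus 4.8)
The plan is to reduce, via the machinery already in hand, to a purely combinatorial isomorphism problem on small matrices, and then to solve that problem with a hypergraph‑isomorphism algorithm. By Lemma~\ref{lem:outer}, semisimple \LAC\ is equivalent to outer equivalence of two faithful representations of a semisimple Lie algebra $\subsp{L}$, and by Lemma~\ref{lem:a} this reduces to an instance of \mylang. The essential point is to track the parameters through the reduction of Lemma~\ref{lem:a}: its columns are indexed by the simple direct summands of $\subsp{L}$, so under our hypothesis the resulting \mylang\ instance has only $s = O(\log n)$ columns; its rows are indexed by the irreducible constituents (with multiplicity) of the two representations, and since each constituent has dimension at least $1$ while the ambient representation has dimension $n$, there are at most $r \le n$ rows; and each entry encodes an irreducible representation of a simple summand of dimension at most $n$, hence has a description of size $poly(n)$. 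So it suffices to solve \mylang\ in polynomial time on instances with $O(\log n)$ columns, at most $n$ rows, and $poly(n)$-size entries.

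The first step is to dispose of the per‑column group elements $g_j$ by brute force. Each group $G_\ell$ attached to a column range has order at most $6$, so there are at most $6^{s} = 2^{O(\log n)} = poly(n)$ ways to choose one element of $G_{\ell(j')}$ for each column $j'$ of $M_2$. Since $g_j \in G_{\ell(j)} = G_{\ell(\sigma(j))}$ acts on column $\sigma(j)$ of $M_2$ under the unknown block‑respecting column permutation $\sigma$, enumerating one element of $G_{\ell(j')}$ per column $j'$ of $M_2$, applying it, and iterating over all such choices reduces the question to the following: given two $r \times s$ integer matrices, with the block partition of the columns still prescribed, decide whether one is obtained from the other by an arbitrary permutation of the rows together with a block‑respecting permutation of the columns.

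This last problem is the crux, and brute force over column permutations does \emph{not} suffice: the number of block‑respecting column permutations, $\prod_\ell k_\ell!$, can be as large as $s! = (O(\log n))!$, which is quasipolynomial (of order $n^{\Theta(\log\log n)}$) rather than polynomial in $n$. Instead I would view the problem through its columns. A block‑respecting column permutation $\sigma$ and a row permutation $\pi$ witness equivalence precisely when, after re‑indexing columns by $\sigma$, the multiset of rows of $M_1$ — each row being an $s$-tuple over the ($poly(n)$-size) alphabet of representation codes — equals the multiset of rows of $M_2$. In other words, one must decide isomorphism of two multisets of $s$-tuples over a $poly(n)$-size alphabet, under the group $\prod_\ell S_{k_\ell}$ acting on the $s$ coordinates; this is exactly a vertex‑colored, edge‑labeled hypergraph isomorphism problem on $s$ vertices, the vertex colors recording the block partition and the labeled hyperedges recording the rows. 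By Luks's hypergraph isomorphism algorithm, together with its standard extension to the colored and labeled setting (or, equivalently, a direct divide‑and‑conquer over the $s$ columns), such an instance is decided in time $2^{O(s)}\cdot poly(n)$, which is $poly(n)$ since $s = O(\log n)$.

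The main obstacle is precisely this final step: beating the $s!$ barrier. The reductions of Lemmas~\ref{lem:outer} and~\ref{lem:a}, the bounds $r \le n$ and $|G_\ell| \le 6$, and the enumeration over the $g_j$ are all routine given the earlier lemmas; what is essential is to recognize the residual problem as hypergraph isomorphism on $O(\log n)$ vertices and to invoke an algorithm whose running time is genuinely singly exponential in the number of vertices, so that it becomes polynomial — rather than merely quasipolynomial — under the $O(\log n)$-summands hypothesis.
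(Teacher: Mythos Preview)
Your argument follows the same overall arc as the paper's---reduce via Lemmas~\ref{lem:outer} and~\ref{lem:a} to an instance of \mylang\ with $s=O(\log n)$ columns and $r\le n$ rows, then solve that instance with an algorithm whose cost is singly exponential in $s$---and you correctly identify that the crux is beating the $s!$ barrier rather than the elementary $O(\log n/\log\log n)$ brute force. Where you diverge is in the final black box. The paper invokes the twisted-code-equivalence algorithm of Babai, Codenotti, and Qiao~\cite{BCQ} (cf.\ \cite[Theorem~4.2.1]{codenottiPhd}) directly on the \mylang\ instance: that algorithm absorbs both the per-coordinate group elements and the block-respecting coordinate permutation in one shot, and runs in $poly(n)$ time here because the code has at most $n$ words and the coordinate groups have bounded order. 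You instead peel off the $g_j$'s by a $6^{s}=poly(n)$ brute force and then appeal to Luks's hypergraph-isomorphism algorithm for the residual row/column permutation problem. That decomposition is perfectly valid, but be aware that what remains is not literally hypergraph isomorphism: your ``hyperedges'' are $\Sigma$-labeled with $|\Sigma|$ as large as $poly(n)$, not $\{0,1\}$-valued, and the $2^{O(s)}\cdot poly(n)$ bound in that generality is exactly the kind of statement established in the BCQ/Codenotti line of work rather than an immediate corollary of Luks's 1999 result. So your route is sound, but the external theorem it ultimately rests on is essentially the one the paper cites.
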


\begin{proof}
If there are only $O(\log n / \log \log n)$ simple summands, then an elementary brute-force approach to \mylang\ works in $poly(n)$ time, since the number of outer automorphisms is $poly(n)$. However, when there are $O(\log n)$ simple summands, the number of outer automorphisms is $n^{O(\log n)}$, so we instead use a more sophisticated approach to twisted code equivalence, due to Babai, Codenotti, and Qiao \cite{BCQ} (cf. \cite[Theorem~4.2.1]{codenottiPhd}). \mylang\ is in fact a special case of twisted code equivalence with multiplicities, in which each row corresponds to a codeword. On the instance of \mylang\ corresponding to semisimple Lie algebras with $O(\log n)$ simple summands, their algorithm runs in $poly(n)$ time. Translating between their terminology and ours, the size of the code is the number of rows of the $M_{i}$, which is the number irreducible representations of the $\subsp{L}_{i}$, which is at most $n$, the size of the original matrices. Furthermore, the column groups $G_{i}$ all have bounded size. These two facts together imply that their algorithm runs in $poly(n)$ time.
\end{proof}

\begin{thm} \label{thm:boundedIrreps} Conjugacy of semisimple Lie algebras of $n \times n$ matrices can be solved in polynomial time, when the Lie algebras consist of $O(\log n / \log\log n)$ irreducible representations, and unboundedely many simple direct summands, at most $O(\log(n))$ of which have nontrivial outer automorphism actions on their representations. \end{thm}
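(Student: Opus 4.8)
The plan is to push the instance through the reduction used for Theorem~\ref{thm:ssGI} and then solve the resulting instance of \mylang\ by a two-level brute force --- one level over permutations of the \emph{rows}, which is cheap because there are few irreducible constituents, and one level over the group elements attached to the few columns carrying a nontrivial twist --- leaving the remaining column combinatorics to a multiset-matching argument that never enumerates column permutations.

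First I would run the constructive reductions of Lemmas~\ref{lem:outer} and~\ref{lem:a}: test abstract isomorphism of the two input Lie algebras by de Graaf's techniques, produce an abstract semisimple $\subsp{L}$ with faithful representations $\rho_{1},\rho_{2}$, decompose $\subsp{L}$ into simple summands grouped by isomorphism type, and decompose each $\rho_{i}$ into irreducibles (Corollary~\ref{cor:directSum}). This produces the \mylang\ instance $(M_{1},M_{2})$ in which the number of rows $r$ is the number of irreducible constituents of the $\rho_{i}$ --- which is $O(\log n/\log\log n)$ by hypothesis --- the number of columns $s$ is the number of simple summands of $\subsp{L}$ (possibly unbounded, but at most $poly(n)$), and the group $G_{\ell}$ attached to each column is the action of $\Out$ of the corresponding simple type on its representations; by hypothesis this action is nontrivial for at most $O(\log n)$ columns, and in every case $|G_{\ell}|\le 6$.

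Now the enumeration. Since $r = O(\log n/\log\log n)$, Stirling's formula gives $r! = n^{O(1)}$ (the same counting as in the ``elementary brute-force'' remark opening the proof of Theorem~\ref{thm:boundedSemisimple}), so we may loop over every row permutation $\pi\in S_{r}$. For each $\pi$ we loop over every assignment of a group element $g_{j}\in G_{\ell(j)}$ to each of the $O(\log n)$ columns $j$ of $M_{1}$ whose $G_{\ell}$ acts nontrivially; since each such group has order at most $6$ there are at most $6^{O(\log n)} = n^{O(1)}$ such assignments, and on the remaining columns the action is trivial, so $g_{j}$ may be taken to be the identity. With $\pi$ and all the $g_{j}$ fixed, form $M_{2}^{\pi}$ by permuting the rows of $M_{2}$ by $\pi$, and for each twisted column $j$ of $M_{1}$ replace its column by $g_{j}^{-1}$ applied entrywise. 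It then remains only to decide whether some column permutation $\sigma$ respecting the isomorphism-type blocks carries this modified $M_{1}$ onto $M_{2}^{\pi}$; such a $\sigma$ exists if and only if, within each block, the multiset of columns of the modified $M_{1}$ equals the multiset of columns of $M_{2}^{\pi}$ --- an equality of multisets of integer vectors, checkable in polynomial time with no search over $\sigma$. If some pair $(\pi,\{g_{j}\})$ passes, accept, recovering an actual conjugating matrix by unwinding the reductions (the outer automorphism $\overline{\alpha}$ assembled from $(\pi,\sigma,\{g_{j}\})$, then any matrix intertwining $\rho_{1}\outerAut{\overline{\alpha}}$ with $\rho_{2}$, found by solving a linear system as in Lemma~\ref{lem:outer}); otherwise reject. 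Correctness is immediate from Lemmas~\ref{lem:outer} and~\ref{lem:a}, and the running time is the product of the two enumerations (each $n^{O(1)}$) with the polynomial per-iteration cost, hence $n^{O(1)}$.

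The main obstacle is exactly keeping the column side out of the running time: the point to get right is that, once the finitely many twisted columns have had their group elements guessed, finding the block-respecting $\sigma$ decomposes into a disjoint union of multiset-equality tests over the blocks rather than a search over $\prod_{i} S_{k_{i}}$. An honest write-up must check this equivalence --- that a block-respecting $\sigma$ with $M_{1}(i,j) = g_{j}(M_{2}(\pi(i),\sigma(j)))$ for all $i,j$ exists exactly when the per-block column multisets agree --- which is a routine ``bijection between equal multisets'' argument, together with the elementary arithmetic that both $r!$ for $r = O(\log n/\log\log n)$ and $6^{O(\log n)}$ are $n^{O(1)}$.
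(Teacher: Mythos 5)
Your proof is essentially the paper's proof: the same reduction to \mylang, the same two polynomial-size enumerations (over row permutations, with $r! = n^{O(1)}$ since $r = O(\log n/\log\log n)$, and over per-column group elements, with $6^{O(\log n)} = n^{O(1)}$ since only $O(\log n)$ columns carry a nontrivial $G_\ell$), and the same observation that once these are fixed the column permutation $\sigma$ need not be searched because a block-respecting match exists iff the per-block column multisets agree. If anything you are slightly more careful than the paper's brief argument, which says ``set of columns'' where it means multiset and leaves the restriction to column-partition blocks and the recovery of the conjugating matrix implicit.
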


In Appendix~\ref{sec:background:aut} we list the simple Lie algebras and their outer automorphism groups, and mention which have trivial actions on their representations. Three of the four infinite families of simple Lie algebras have this property, as well as four of the five exceptional simple Lie algebras.

\begin{proof}
In this case, the $M_{i}$ in the instance of \mylang\ have only $f(n) \leq O(\log n / \log\log n)$ rows. Although the size of the automorphism group may be more than polynomial, there are only polynomially many row permutations, so we only have to handle the outer automorphisms in each column exhaustively, and not the permutations between the columns. Specifically, try each combination of outer automorphisms of each column; since there are at most $O(\log n)$ columns with nontrivial outer automorphisms, and the outer automorphism group of a simple Lie algebra has size at most $6$, there are only $poly(n)$ possibilities. For each such possibility, try each of the $poly(n)$ many permutations of the rows, and for each check whether the set of columns of $M_{1}$ is equal to the set of columns of $M_{2}$.
\end{proof}

\section{Abelian plus semisimple (\ie, completely reducible)} \label{sec:reducible}
In this section, we describe how the algorithms and reductions for the abelian diagonalizable and semisimple cases fit into a single general framework and can be combined to handle the case of a direct sum of an abelian diagonalizable matrix Lie algebra with a semisimple matrix Lie algebra. This class of matrix Lie algebras is exactly the class of \definedWord{completely reducible} matrix Lie algebras. In the case of semisimple Lie algebras we used heavily the fact that all representations of semisimple Lie algebras can be written as a direct sum of irreducible representations (see Appendix~\ref{sec:background:rep}). The class we study in this section is the largest class of Lie algebras with this property (cf. Theorem~\ref{thm:CR}).

\begin{lem} \label{lem:outerAbCR} Lemma~\ref{lem:outer} applies to the class of abelian Lie algebras and the class of completely reducible matrix Lie algebras. \end{lem}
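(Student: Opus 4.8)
The statement claims that Lemma~\ref{lem:outer}---which says that semisimple Lie algebra conjugacy is equivalent to the outer-equivalence-of-representations problem---also holds verbatim for abelian Lie algebras and for completely reducible matrix Lie algebras. The plan is to observe that the proof of Lemma~\ref{lem:outer} never used semisimplicity \emph{per se}; it only used three ingredients, and to check that each of these ingredients is available for the broader classes. The three ingredients are: (i) we can decide abstract isomorphism of the input Lie algebras, and when they are isomorphic construct a common abstract model $\subsp{L}$ together with explicit isomorphisms $\rho_i\colon \subsp{L}\to\subsp{L}_i$; (ii) the notion of ``equivalence of representations up to outer automorphism'' makes sense, i.e., $\Out(\subsp{L})$ is well-defined and we have the correspondence between $\Aut$-twists and $\Out$-twists used in the argument; and (iii) the purely formal manipulation with $c_g$, composition, and images---which is category-theoretic and uses nothing about the structure of $\subsp{L}$ at all. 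Ingredient (iii) is immediate. For (ii), $\Inn(\subsp{L})\trianglelefteq\Aut(\subsp{L})$ and hence $\Out(\subsp{L})=\Aut(\subsp{L})/\Inn(\subsp{L})$ for \emph{any} finite-dimensional Lie algebra, so this is no obstacle either; one only has to remark that the ``discussion following Lemma~\ref{lem:inner}'' relating inner-twisted representations to equivalent representations goes through unchanged, since inner automorphisms are by definition $\exp(\ad x)$ and get implemented by conjugation in any faithful representation.

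So the real content is ingredient (i): algorithmic recognition of abstract isomorphism and construction of the model. For abelian Lie algebras this is trivial---an abelian Lie algebra of dimension $d$ is just a $d$-dimensional vector space with zero bracket, so two abelian matrix Lie algebras are abstractly isomorphic iff they have the same dimension, and the ``model'' is the zero-bracket algebra $\F^d$ with $\rho_i$ any linear isomorphism onto $\subsp{L}_i$. For completely reducible matrix Lie algebras we invoke the structure theorem (cf.\ Theorem~\ref{thm:CR}): such an algebra splits canonically as $\subsp{Z}\oplus\subsp{S}$ with $\subsp{Z}$ abelian (the center, acting diagonalizably) and $\subsp{S}$ semisimple (the derived subalgebra), and this decomposition is computable in polynomial time by standard linear algebra (compute the derived subalgebra $[\subsp{L},\subsp{L}]$, compute the center, check the sum is direct and exhausts $\subsp{L}$). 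Abstract isomorphism then reduces to: equality of dimensions of the abelian parts, plus abstract isomorphism of the semisimple parts, for which we already cited de Graaf \cite[\S5.11]{deGraafBook} in the proof of Lemma~\ref{lem:outer}; and the model $\subsp{L}$ and isomorphisms $\rho_i$ are assembled from the two pieces.

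First I would state the three-ingredient decomposition explicitly, pointing to the proof of Lemma~\ref{lem:outer} for (iii). Then I would dispatch the abelian case in one sentence. Then I would give the completely reducible case: invoke Theorem~\ref{thm:CR} for the canonical $\subsp{Z}\oplus\subsp{S}$ splitting, note its polynomial-time computability, reduce isomorphism-testing and model-construction to the abelian and semisimple sub-cases, and conclude. I expect the only genuinely delicate point---and hence the place to be careful rather than the place where something could fail---is checking that the $\Aut/\Inn$ bookkeeping in the ``conversely'' direction of Lemma~\ref{lem:outer} (choosing a representative $\alpha$ of $\overline\alpha$, getting $g$ with $\rho_2=c_g\circ\rho_1\circ\alpha$) still makes sense: it does, because that argument used only that $\alpha$ is an automorphism and $\rho_1\circ\alpha$ has the same image as $\rho_1$, which is automorphism-theoretic and class-independent. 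I would close by remarking that, as with Lemma~\ref{lem:outer}, the equivalence is ``nearly just a restatement,'' so the reduction is in both directions and carries no hidden blow-up.
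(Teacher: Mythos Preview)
Your proposal is correct and follows essentially the same approach as the paper: isolate the two substantive ingredients in the proof of Lemma~\ref{lem:outer} (efficient abstract-isomorphism testing, and that inner twists yield equivalent representations), then verify each for abelian and for completely reducible Lie algebras via the $\subsp{A}\oplus\subsp{S}$ decomposition. The only cosmetic difference is that the paper handles the inner-automorphism ingredient for the completely reducible case by observing $\Inn(\subsp{L})=\Inn(\subsp{A})\times\Inn(\subsp{S})\cong\Inn(\subsp{S})$ and then invoking Lemma~\ref{lem:inner} for the semisimple part, rather than your direct ``$\exp(\ad x)$ is implemented by conjugation'' argument.
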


\begin{proof} The proof of Lemma~\ref{lem:outer} only required two ingredients: that the isomorphism problem for abstract Lie algebras of the class under consideration be efficiently solvable, and that twisting a representation by an inner automorphism leads to an equivalent representation. Both of these ingredients hold for  abelian Lie algebras: two abelian Lie algebras are abstractly isomorphic if and only if they have the same dimension, and abelian Lie algebras have no non-trivial inner automorphisms.

Similarly, a completely reducible matrix Lie algebra $\subsp{L}$ is a direct sum $\subsp{A} \oplus \subsp{S}$ where $\subsp{A}$ is abelian diagonalizable and $\subsp{S}$ is semisimple. The isomorphism problem for this class of Lie algebras is solvable in polynomial time. Finally, $\Inn(\subsp{L}) = \Inn(\subsp{A}) \times \Inn(\subsp{S}) \cong \Inn(\subsp{S})$ since abelian Lie algebras have no non-trivial inner automorphisms. Hence twisting a representation by an inner automorphism leads to an equivalent representation. \end{proof}

Although it was not originally phrased this way, we can now see that the algorithms and equivalences for abelian Lie algebra conjugacy in fact follow the same lines as those for semisimple Lie algebra conjugacy. The main difference is that the outer automorphism group of a $d$-dimensional abelian Lie algebra is the full general linear group $\GL_{d}$ of $d \times d$ invertible matrices---leading to linear code equivalence---whereas the outer automorphism group of a semisimple Lie algebra is close to $S_{n}$---leading to graph isomorphism. 

Furthermore, we can view Babai's reduction (see \cite[Theorem~7.1]{BCGQ}) from code equivalence as a sort of ``list normal form'' algorithm for the action of $\GL_{d}$ by automorphisms. Since $\GL_{d}$ acts by change of basis, we would like reduced row echelon form to be a normal form for this action. However, since one may permute the coordinates in the code equivalence problem, computing reduced row echelon form requires first picking the pivots. Babai's algorithm picks these pivots in all $\binom{n}{d}$ possible ways, reduces to row echelon form, and then uses graph isomorphism to handle the permutation action on the remaining coordinates of the code.

Combining these techniques yields:

\begin{thm} \label{thm:boundedReducible} Conjugacy of completely reducible matrix Lie algebras with an abelian diagonalizable part of dimension $a$, $s$ simple direct summands, and $r$ irreducible representation constituents reduces to $\binom{r}{a}$ instances of \mylang\ of size $r \times s$. In particular, completely reducible matrix Lie algebra conjugacy of $n \times n$ matrices can be solved in $poly(n)$ time under either of the following conditions:
\begin{itemize}
\item $a = O(\log n)$, $r = O(1)$, $s$ unbounded, and the number of simple summands with non-trivial outer automorphism action is at most $O(\log n)$; 

\item $a = O(1)$, $r$ unbounded, $s = O(\log n)$.
\end{itemize}
 \end{thm}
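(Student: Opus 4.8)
The plan is to run the semisimple argument of Lemmas~\ref{lem:outer} and~\ref{lem:a} essentially verbatim while carrying along the extra data contributed by the abelian summand, and then to eliminate that data using Babai's code-equivalence reduction as a ``list normal form'' for the action of $\GL_a$.

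First I would reduce, via Lemma~\ref{lem:outerAbCR}, to outer equivalence of two faithful representations $\rho_1,\rho_2\colon\subsp{L}\to M_n$ of a fixed abstract $\subsp{L}=\subsp{A}\oplus\subsp{S}$, with $\subsp{A}$ abelian diagonalizable of dimension $a$ and $\subsp{S}$ semisimple with $s$ simple summands; we may assume $\rho_1,\rho_2$ have the same number $r$ of irreducible constituents, else output ``not conjugate.'' Since $\subsp{A}=Z(\subsp{L})$ and $\subsp{S}=[\subsp{L},\subsp{L}]$ are characteristic, $\Out(\subsp{L})=\GL_a\times\Out(\subsp{S})$, with $\Out(\subsp{S})$ the wreath-product group of Lemma~\ref{lem:a}. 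By Theorem~\ref{thm:CR} each $\rho_i$ is a direct sum of $r$ irreducible constituents, and --- because $\subsp{A}$ acts diagonalizably and $\subsp{S}$ semisimply --- each constituent has the form $\chi_\lambda\otimes W$ for a weight $\lambda\in\subsp{A}^{*}\cong\F^{a}$ and an irreducible representation $W$ of $\subsp{S}$ (a representation of each simple summand). Corollary~\ref{cor:directSum} together with simultaneous diagonalization of the commuting abelian action (as in \S\ref{sec:diag}) computes this decomposition, the $\lambda$'s, and all the simple-summand data in polynomial time. So I attach to $\rho_i$ an $r\times(a+s)$ matrix whose $r$ rows are the constituents: the last $s$ columns carry exactly the \mylang\ data of Lemma~\ref{lem:a} (column partition, column value-groups in $\{1,S_2,S_3\}$, wreath column permutations), and the first $a$ columns record the coordinates of $\lambda$, acted on by $\GL_a$ by change of basis (right multiplication) in tandem with the single row permutation that realizes the constituent matching.

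The main obstacle is disposing of this $\GL_a$-action on the first $a$ columns. Transposing, those columns span a code $C_i\subseteq\F^{r}$ of dimension exactly $a$ (faithfulness of $\rho_i|_{\subsp{A}}$, which in particular forces $a\le r$); the $\GL_a$-orbit of the weight block is determined by $C_i$, and the constituent permutation $\pi$ must satisfy $\pi\cdot C_1=C_2$. Here I import Babai's reduction (\cite[Theorem~7.1]{BCGQ}): reduced row echelon form is a normal form for the change-of-basis action of $\GL_a$, but since $\pi$ may move the $a$ pivot rows, I enumerate all $\binom{r}{a}$ choices of an $a$-element information set $I_2$ of $C_2$, fixing $I_1$ to be the lexicographically least information set of $C_1$. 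For each choice I put both codes into reduced row echelon form with respect to their information sets; this freezes every constituent's abelian datum into a concrete vector in $\F^{a}$, the pivot constituents receiving the standard basis vectors. I then reattach these $a$ normalized columns to the matrix as one more column block with trivial value-group; since the two echelon bases may differ by a reordering of the $a$ pivots, this block carries all of $S_a$ as its column permutations --- permissible, as \mylang\ restricts only the per-column value groups, not the column permutations. The genuinely delicate point is verifying that, ranging over the $\binom{r}{a}$ guesses, $\rho_1$ and $\rho_2$ are outer-equivalent iff one of the resulting \mylang\ instances --- each with $r$ rows and $s+a$ columns, the ``$r\times s$'' of the statement since $a\le r$ --- is equivalent, with the abelian block's column permutation recording the pivot reordering and the row permutation restricting to an $I_1\to I_2$ bijection consistent with $\pi$. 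Everything else is bookkeeping; this establishes the reduction in the first sentence.

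For the two algorithmic regimes it suffices to solve each of the $\binom{r}{a}$ instances in $poly(n)$ time, the count itself being polynomial ($\le 2^{r}=O(1)$ when $r=O(1)$; $\le r^{a}=poly(n)$ when $a=O(1)$). When $a=O(1)$ and $s=O(\log n)$, each instance has $\le n$ rows, $O(\log n)$ columns, and bounded column value-groups, so the Babai--Codenotti--Qiao algorithm used in Theorem~\ref{thm:boundedSemisimple} runs in $poly(n)$ time. When $r=O(1)$, $a=O(\log n)$ (hence $a\le r=O(1)$), $s$ unbounded, and only $O(\log n)$ simple summands have nontrivial outer action, each instance has $O(1)$ rows, so there are only $poly(n)$ row permutations, $poly(n)$ combinations of the $\le 6$ value-transformations at the $O(\log n)$ nontrivial columns, and $O(1)$ permutations of the abelian block; the column-set comparison of Theorem~\ref{thm:boundedIrreps} then finishes in $poly(n)$ time. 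Finally, the witnesses produced in either case reconstruct, via the constructive proof of Lemma~\ref{lem:outer}, an explicit conjugating matrix.
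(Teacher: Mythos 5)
Your proposal is correct and fleshes out exactly what the paper leaves implicit: the paper gives no explicit proof of Theorem~\ref{thm:boundedReducible} beyond the sentence ``Combining these techniques yields,'' and your argument is precisely the intended combination of Lemma~\ref{lem:outerAbCR}, the Lemma~\ref{lem:a}-style encoding of irreducible constituents into an integer matrix, and Babai's $\binom{r}{a}$ pivot-enumeration viewed as a ``list normal form'' for the $\GL_a$ factor of $\Out(\subsp{A}\oplus\subsp{S}) = \GL_a \times \Out(\subsp{S})$. You correctly identify that each constituent is $\chi_\lambda\otimes W$, that faithfulness forces $a\le r$, and that after fixing an information set and passing to reduced row echelon form the residual $\GL_a$ freedom collapses to a permutation, which is absorbed by the $S_a$ column action on the abelian block.

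Two small points worth flagging. First, you produce instances of size $r\times(s+a)$ rather than the $r\times s$ stated in the theorem; since $a\le r$ this is at most a polynomial difference and does not affect either running-time claim, but it is an honest mismatch with the literal statement (the paper presumably intends the normalized abelian data to be folded into the rows as a coloring rather than carried as extra columns, but since no proof is given one cannot tell). Second, your abelian block has field-element entries rather than integers; you should say explicitly that after normalization there are at most $ra$ distinct values, which can be bijectively relabeled by integers before handing the instance to \mylang. Neither affects correctness, and your analysis of the two regimes---$\binom{r}{a}\le 2^r=O(1)$ with the Theorem~\ref{thm:boundedIrreps} column-set comparison in the first, $\binom{r}{a}\le r^a=poly(n)$ with the Babai--Codenotti--Qiao algorithm in the second---matches the intended argument.
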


\section{Application to equivalence of polynomials} \label{sec:affine}
\begin{cor} \label{cor:kayal}
Given the Lie algebra of the symmetry group of a polynomial $f$ on $n^{2}$ variables, one can determine whether $f$ is linearly equivalent to $\det_{n}$ in \emph{deterministic} $poly(n)$ time.
\end{cor}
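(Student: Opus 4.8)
The plan is to reduce linear equivalence of $f$ to $\det_n$ to a Lie algebra conjugacy problem of the kind that Theorem~\ref{thm:boundedSemisimple} (or Theorem~\ref{thm:boundedReducible}) can handle in deterministic polynomial time. First I would recall the standard fact, used by Kayal, that if $g = f \circ A$ for an invertible linear change of variables $A \in \GL_{n^2}$, then the Lie algebra of the symmetry group of $g$ is conjugate by $A$ to the Lie algebra of the symmetry group of $f$; here the symmetry group of a polynomial $h$ in $N$ variables is $\{B \in \GL_N : h \circ B = h\}$, and its Lie algebra is $\mfg_h = \{X \in \gl_N : h(\exp(tX)\vec{v}) = h(\vec{v}) \text{ to first order in } t\}$, which is computable from $h$ by solving a linear system (differentiating the invariance condition). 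So a \emph{necessary} condition for $f \sim \det_n$ is that $\mfg_f$ be conjugate to $\mfg_{\det_n}$ as a matrix Lie algebra inside $\gl_{n^2}$.

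Next I would identify $\mfg_{\det_n}$ explicitly. The symmetry group of $\det_n$ acting on $n \times n$ matrices $X$ is generated by $X \mapsto PXQ$ with $\det P \det Q = 1$ and $X \mapsto X^T$; its identity component has Lie algebra isomorphic to $(\mfsl_n \oplus \mfsl_n) \oplus \C$ — two copies of $\mfsl_n$ (acting by left and right multiplication) plus a one-dimensional central torus — acting on the $n^2$-dimensional space $M_n$. This is a completely reducible Lie algebra: a $1$-dimensional abelian diagonalizable part plus a semisimple part with exactly two simple summands. Hence it falls squarely into the regime of Theorem~\ref{thm:boundedReducible} with $a = 1 = O(1)$ and $s = 2 = O(\log n)$ (or, if one prefers to strip off the central torus first, into Theorem~\ref{thm:boundedSemisimple} with $O(\log n)$ — in fact $O(1)$ — simple summands). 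Therefore, given $\mfg_f$ as input, we can decide in deterministic $poly(n)$ time whether $\mfg_f$ is conjugate to $\mfg_{\det_n}$.

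The remaining issue — and the main obstacle — is that conjugacy of the Lie algebras is necessary but not obviously sufficient for linear equivalence of the polynomials: a priori $\mfg_f \cong \mfg_{\det_n}$ as matrix Lie algebras need not force $f \sim \det_n$. Here I would invoke exactly the structural input that Kayal supplies: the determinant is characterized (up to linear equivalence and scaling) among polynomials of its degree by its symmetry Lie algebra together with the representation-theoretic data that a conjugating matrix $A$ carries along. Concretely, once we have an explicit conjugating matrix $A$ with $A\, \mfg_f\, A^{-1} = \mfg_{\det_n}$ — and our algorithm returns such an $A$, not merely a yes/no answer — we check directly whether $f \circ A^{-1}$ is a scalar multiple of $\det_n$; if the Lie algebra conjugacy succeeds but no such $A$ works for any choice among the (polynomially many, by the bounded-dimension/bounded-summand analysis) conjugating matrices produced, then $f \not\sim \det_n$. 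The point is that the set of conjugating matrices is a coset of the normalizer of $\mfg_{\det_n}$ in $\GL_{n^2}$, and the symmetry group of $\det_n$ has finite index in that normalizer, so one only needs to test $f \circ A^{-1} = c \det_n$ for a polynomial-size list of candidates $A$; this replaces the randomized step in Kayal's argument (which sampled a random element of the symmetry group and used generic-rank arguments) by the deterministic enumeration coming from Theorems~\ref{thm:boundedSemisimple} and~\ref{thm:boundedReducible}. Assembling these pieces — compute $\mfg_f$, test conjugacy to $\mfg_{\det_n}$ deterministically, enumerate candidate conjugators, and verify the polynomial identity $f \circ A^{-1} = c\det_n$ for each — gives the claimed deterministic $poly(n)$ algorithm.
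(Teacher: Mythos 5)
Your proposal has the right first step (test conjugacy of $\mathfrak{g}_f$ with $\mathfrak{g}_{\det_n}$ via Theorem~\ref{thm:boundedSemisimple}), but it misses the paper's crucial simplifying observation and the workaround you propose in its place has real problems.

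The paper's proof rests on a structural fact you do not invoke: for the determinant, Lie algebra conjugacy is not merely necessary but \emph{sufficient} --- ``any function whose symmetry group has a Lie algebra conjugate to that of the determinant is in fact linearly equivalent to the determinant.'' So the algorithm is simply: test conjugacy of $\mathfrak{g}_f$ with $\mathfrak{g}_{\det_n} \cong \mfsl_n \oplus \mfsl_n$ (two simple summands, so Theorem~\ref{thm:boundedSemisimple} applies directly), and output the answer. No further verification step involving $f$ is needed. The paper explicitly notes that this collapses all three of Kayal's steps (Lie algebra, permutational equivalence, scaling equivalence) into the single conjugacy test.

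Your substitute for this fact --- enumerate candidate conjugators $A$ and directly check $f \circ A^{-1} = c\det_n$ --- has several gaps. First, the corollary's stated input is \emph{only} the Lie algebra of the symmetry group of $f$, not $f$ itself, so the final identity check cannot be carried out at all within the problem as posed. Second, even if $f$ were available as a black box, testing $f \circ A^{-1} = c\det_n$ is a polynomial identity test, which would reintroduce randomization and defeat the purpose of the derandomization (this is exactly the issue the remark after the corollary warns about). Third, the claim that finitely many coset representatives suffice and that the conjugacy algorithm enumerates them is not justified; Theorem~\ref{thm:boundedSemisimple} returns a single conjugator, and you would need a separate argument that the normalizer of $\mathfrak{g}_{\det_n}$ modulo the symmetry group of $\det_n$ is small and effectively enumerable. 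Finally, a minor but real error: as a \emph{matrix} Lie algebra inside $\gl_{n^2}$, $\mathfrak{g}_{\det_n}$ is $\mfsl_n \oplus \mfsl_n$ with no central $\C$ summand --- the would-be torus $(\lambda I, -\lambda I)$ acts trivially on $M_n$ and so does not appear in the faithful image. Your inclusion of the extra $\C$ does not break the argument (it would still fit Theorem~\ref{thm:boundedReducible}) but it indicates the identification was not carried out carefully.
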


\begin{rmk} Computing the Lie algebra of the symmetry group of a polynomial is in fact \emph{equivalent} to polynomial identity testing, and hence cannot be derandomized without proving significant lower bounds \cite{kabanetsImpagliazzo}. Kayal \cite[Lemma~26]{kayal} shows how to compute the Lie algebra of the symmetry group of a polynomial given as a black-box, using the algorithm from \cite{kayal2} for computing the linear dependincies between a set of polynomials. Kayal \cite{kayal2} noted that computing such linear dependencies reduces to the search version of polynomial identity testing. The search and decision versions of polynomial identity testing are equivalent for low-degree functions. Conversely, a polynomial is constant if and only if its symmetry group consists of all invertible transformations of the variables. This holds if and only if the Lie algebra of its symmetry group consists of all linear transformations of the variables. Once this has been determined, evaluating the polynomial at any single point will determine whether it is zero or a non-zero constant.
\end{rmk}

In some sense, we have thus derandomized Kayal's algorithm as far as is possible in the black-box setting without derandomizing polynomial identity testing. Kayal uses randomization at several points, not just in the computation of the Lie algebra of the symmetry group, and we derandomize those using our deterministic algorithm for semisimple Lie algebra conjugacy from Theorem~\ref{thm:boundedSemisimple}. 

However, even in the dense, non-black-box setting this represents an improvement from $2^{O(n^{2})}$ to $2^{O(n \log n)}$. This is essentially optimal, since a generic function that is equivalent to $\det_{n}$ will include nearly all monomials of degree $n$ in $n^{2}$ variables, of which there are $2^{\Theta(n \log n)}$. By the dense setting we mean the setting in which $f$ is given by a list of coefficients of all monomials of degree $n$ in $n^{2}$ variables (without loss of generality, $f$ is homogeneous of degree $n$). Naive derandomization of Kayal's algorithm takes time $2^{O(n^{2})}$, since step (ii) of his \S 6.2.1 guesses a random element of a space of dimension $\Theta(n^{2})$. Similarly, testing affine equivalence to the determinant can be solved using quantifier elimination (see, \eg, Basu, Pollack, and Roy \cite[Ch.~14]{basuPollackRoy}), again in time $2^{O(n^{2})}$, because the witness to equivalence is an $n \times n$ matrix together with an $n$-dimensional vector. However, computing the Lie algebra of the symmetry group of $f$ only requires solving a linear system of $n^{2}$ equations in a number of variables equal to the number of monomials possible, which is roughly $\binom{n^{2}}{n} \leq n^{2n} = 2^{O(n \log n)}$. 

\begin{proof}[Proof of Corollary~\ref{cor:kayal}]
The Lie algebra of the symmetry group of $\det_{n}$ is $\mfsl_{n} \oplus \mfsl_{n}$, which has only two simple factors. By Theorem~\ref{thm:boundedSemisimple} we can test if the Lie algebra of the symmetry group of $f$ is conjugate to that of $\det_{n}$. If it is, then act on $f$ by the conjugating matrix so that the Lie algebra is now equal to that of $\det_{n}$. One might expect to then have to check whether $f(X) = f(X^{T})$, since this is also part of the symmetry group of $\det_{n}$, however, this is not necessary: in the case of the determinant, any function whose symmetry group has a Lie algebra conjugate to that of the determinant is in fact linearly equivalent to the determinant. Note that we have combined here all three main steps of Kayal's algorithm into a single reduction to Lie algebra conjugacy: Kayal uses the Lie algebra to reduce to permutational and scaling equivalence, then solves permutational equivalence and scaling equivalence separately.
\end{proof}


\section{Conclusion and future work} \label{sec:future}
Lie algebra conjugacy arises in Geometric Complexity Theory and the affine equivalence problem. We solved Lie algebra conjugacy over $\C$ in polynomial time for several important classes of Lie algebras---namely abelian, semisimple, and completely reducible (=abelian diagonalizable $\oplus$ semisimple)---under various quantitative constraints. We showed that without these quantitative constraints, these cases of Lie algebra conjugacy all become at least as hard as graph isomorphism.

The completely reducible case is not far from the general case, though significant obstacles remain. Levi's Theorem says that every Lie algebra is the semi-direct product of a solvable Lie algebra by a semisimple one; a solvable Lie algebra is an iterated extension of abelian Lie algebras (see Appendix~\ref{sec:background:morestruct} for definitions). The completely reducible case, which we resolved, restricts the solvable part to be abelian, and restricts the semidirect product to be direct. The complexity of Lie algebra conjugacy in general remains open, but we believe the following is an achievable next target:

\begin{open} What is the complexity of matrix Lie algebra conjugacy for Lie algebras that are \emph{semi}direct products of abelian by semisimple? \end{open}

For the abelian diagonalizable case, our results hold over any field. But for the semisimple and completely reducible cases, we only worked over $\C$. The representation theory of semisimple Lie algebras changes in positive characteristic or over non-algebraically closed fields.

\begin{open}
What is the complexity of Lie algebra conjugacy over algebraically closed fields of positive characteristic? Over $\R$, $\Q$, number fields, or finite fields?
\end{open}

We essentially derandomized Kayal's algorithm for testing equivalence to the determinant, except for a part of the algorithm that is equivalent to polynomial identity testing. It would be nice to know whether there is a way around this, though we suspect there is not:

\begin{open}
Show that testing equivalence to the determinant is as hard as polynomial identity testing, or give a deterministic polynomial-time algorithm for it in the black-box setting. In the dense setting, can equivalence to the determinant be tested in time $poly(t)$ where $t$ is the number of non-zero monomials of the input function?
\end{open}

Finally, there are two avenues for futher progress on the affine equivalence problem using Lie algebra conjugacy.
First, although \lang{GI}-hardness may seem to be the ``final'' word in the short term, in the application to affine equivalence we may be able to avoid \lang{GI} altogether. Lie algebra conjugacy is most directly useful for testing affine equivalence to symmetry-characterized functions such as the determinant. A function $f$ is \definedWord{symmetry-characterized} if for any function $g$, if $g$ has the same symmetries as $f$---that is, $f(A\var{x}) = f(\var{x})$ implies $g(A\var{x}) = g(\var{x})$---then $g$ is a scalar multiple of $f$. Not every Lie algebra can arise as the Lie algebra of the symmetries of a symmetry-characterized function. It is possible that the properties of such Lie algebras are strong enough to avoid graph isomorphism. Second, in addition to the Lie algebra of the symmetries of a function, a function may have a finite group of symmetries ``sitting on top of'' the Lie algebra. 

\begin{open}
What is the complexity of testing conjugacy of finite groups of symmetries, arising from symmetry-characterized functions?
\end{open}

\section*{Acknowledgments}
The author would like to thank the following people for useful discussions regarding this work: Neeraj Kayal, Pascal Koiran, Arakadev Chattopadhyay, J. M. Landsberg, Shrawan Kumar, and Jerzy Weyman. Many of these conversations took place at the Brown-ICERM Workshop on Mathematical Aspects of $\cc{P}$ vs. $\cc{NP}$ and its Variants in August 2011, for which the author would like to thank ICERM and the organizers of the workshop---J. M. Landsberg, Saugata Basu, and J. Maurice Rojas---for the invitation and support to attend the workshop. The author would like to thank Lance Fortnow, Ketan Mulmuley and Benson Farb for their discussions, support, and advice. The author finds it incredibly useful to talk through mathematics with others, and it is his great pleasure to thank Benson Farb, Thomas Church, Ian Shipman, and Jonah Blasiak for not only useful and interesting discussions of this work, but also for their infectious enthusiasm for and injection of fruitful new ideas into this work. In particular, Jonah helped the author clarify his thoughts and together realize the equivalence with graph isomorphism. Finally, this work was partially supported by Ketan Mulmuley's NSF Grant CCF-1017760, Lance Fortnow \etal's NSF Grant DMS-0652521 and fellowships from the U. Chicago Department of Computer Science.

\bibliographystyle{ams-alph}
\bibliography{conj2} 

\appendix

\section{Lie algebra background} \label{sec:background}
For the purposes of this paper, we highly recommend the book of de Graaf \cite{deGraafBook}. We summarize the necessary highlights here. For more general background on Lie algebras we recommend any of several standard books \cite{fultonHarris, jacobson, humphreys, knapp}. Since we are only working over $\C$ for much of this paper, we omit further mention of the field. However, some of the statements and results below hold only if the characteristic of the field is zero, and some only if the field is furthermore algebraically closed.

\subsection{Basic definitions}
A \definedWord{Lie algebra} is a vector space $\subsp{L}$ together with a bilinear operation, referred to as the \definedWord{Lie bracket} and written $[\cdot, \cdot]\colon \subsp{L} \times \subsp{L} \to \subsp{L}$ satisfying:
\begin{enumerate}
\item Skew-symmetry: $[v_{1}, v_{2}] = -[v_{2}, v_{1}]$ (or equivalently, $[v, v] = 0$ for all $v \in \subsp{L}$)

\item Bi-linearity: $[\alpha v + \beta w, u] = \alpha [v, u] + \beta [w, u]$, and similarly for the second coordinate.

\item The Jacobi identity: $[u, [v,w]] + [w, [u, v]] + [v , [w , u]] = 0$. This is the ``Lie algebra'' version of associativity, and can be thought of as ``the derivative of the associative law.''
\end{enumerate} 

A \definedWord{matrix Lie algebra} is a set of matrices where taking $[A, B] := AB - BA$ makes the set into a Lie algebra. In particular, the collection $M_{n}$ of all $n \times n$ matrices is a matrix Lie algebra.

A \definedWord{homomorphism} between Lie algebras $\subsp{L}_{1}, \subsp{L}_{2}$ is a linear map $\rho\colon \subsp{L}_{1} \to \subsp{L}_{2}$ that preserves the brackets, that is, where $\rho([u, v]_{\subsp{L}_{1}}) = [\rho(u), \rho(v)]_{\subsp{L}_{2}}$. An \definedWord{isomorphism} is a bijective homomorphism; an \definedWord{automorphism} is an isomorphism of $\subsp{L}$ with itself.

Note that conjugate matrix Lie algebras are isomorphic as abstract Lie algebras, since $g[M_{1}, M_{2}]g^{-1} = [gM_{1}g^{-1}, gM_{2}g^{-1}]$, that is, conjugation by $g$ is a Lie algebra homomorphism whose inverse is conjugation by $g^{-1}$.

\subsection{Describing Lie algebras as input to algorithms} \label{sec:background:algo}
An abstract Lie algebra is specified in an algorithm by giving a basis for it as a vector space, say $v_{1}, \dotsc, v_{d}$, and its \definedWord{structure constants} $c_{ij}^{(k)}$:
\[
[v_{i}, v_{j}] = \sum_{k=1}^{n} c_{ij}^{(k)} v_{k}.
\] 
Because of the bilinearity of the bracket, the structure constants are enough to determine the value of the bracket on any elements of the Lie algebra: $[\sum \alpha_{i} v_{i}, \sum \beta_{j} v_{j}] = \sum_{ijk} \alpha_{i} \beta_{j} c_{ij}^{(k)} v_{k}$. Each of the axioms of a Lie algebra translates into a condition on the structure constants, for example, skew-symmetry is equivalent to $c_{ij}^{(k)} = -c_{ji}^{(k)}$ for all $i,j,k$.

\subsection{Structure theory of Lie algebras}
Given any two Lie algebras $\subsp{L}_{1}, \subsp{L}_{2}$, their \definedWord{direct sum} is the Lie algebra $\subsp{L}_{1} \oplus \subsp{L}_{2}$ whose underlying vector space is the direct sum of the underlying vector spaces of the $\subsp{L}_{i}$. The bracket $[v_{1}, v_{2}]$ for any elements $v_{1} \in \subsp{L}_{1}$ and $v_{2} \in \subsp{L}_{2}$ is defined to be zero.

An \definedWord{ideal} in a Lie algebra is a subspace $I \subseteq \subsp{L}$ such that $[u,v] \in I$ for any $u \in \subsp{L}$ and $v \in I$. Ideals are the Lie-algebraic analogue of normal subgroups of groups. Given any ideal, one can form the quotient Lie algebra $\subsp{L} / I$ whose elements are additive cosets of $I$, that is, of the form $v + I$; conversely, given any homomorphism of Lie algebras its kernel is an ideal.

A Lie algebra is \definedWord{abelian} if $[u, v] = 0$ for all $u, v \in \subsp{L}$. Any vector space can thus be given the structure of an abelian Lie algebra. Every subspace of an abelian Lie algebra is an ideal.

$0$ is the trivial ideal. An ideal is proper if it is not the whole Lie algebra. In a direct sum $\subsp{L} = \subsp{L}_{1} \oplus \subsp{L}_{2}$, each $\subsp{L}_{i}$ is a proper ideal of $\subsp{L}$. A Lie algebra is \definedWord{simple} if it contains no proper non-trivial ideals, and is non-abelian. (This last condition excludes, for technical reasons, the $1$-dimensional abelian Lie algebra.) A Lie algebra is \definedWord{semisimple} if it is a direct sum of simple Lie algebras.

Over $\C$, the simple Lie algebras have been completely classified for nearly a century. They fall into four infinite families, referred to as type $A_{n}$ ($\mfsl_{n}$, consisting of all trace zero $n \times n$ matrices), $B_{n}$ ($\mathfrak{so}_{2n+1}$, consisting of all $(2n+1) \times (2n+1)$ skew-symmetric matrices $M = -M^{T}$), $C_{n}$ ($\mathfrak{sp}_{2n}$ consisting of all $2n \times 2n$ matrices $M$ satisfying $JM = -M^{T} J$ where $J = \left(\begin{array}{cc} 0 & I_{n} \\
-I_{n} & 0 \end{array}\right)$), and $D_{n}$ ($\mathfrak{so}_{2n}$), and there are five exceptional simple Lie algebras, known as $\mathfrak{e}_{6}$, $\mathfrak{e}_{7}$, $\mathfrak{e}_{8}$, $\mathfrak{f}_{4}$, and $\mathfrak{g}_{2}$.

\subsection{Representations} \label{sec:background:rep}
A \definedWord{representation of a Lie algebra} $\subsp{L}$ is a homomorphism $\rho \colon \subsp{L} \to M_{n}$ for some $n$. A representation is \definedWord{faithful} if this homomorphism is injective. Two representations $\rho_{1}, \rho_{2}\colon \subsp{L} \to M_{n}$ are \definedWord{equivalent} if there is an invertible $n \times n$ matrix $g$ such that $\rho_{1}(v) = g \rho_{2}(v) g^{-1}$ for all $v \in \subsp{L}$.

Equivalence of representations is similar to, but not the same as, conjugacy of matrix Lie algebras. Given two representations $\rho_{1}, \rho_{2} \colon \subsp{L} \to M_{n}$, their images $\subsp{L}_{i} := \im(\rho_{i})$ are matrix Lie algebras. The representations $\rho_{i}$ are equivalent if they are conjugate \emph{as maps}, whereas the matrix Lie algebras forget the maps and only care about their images. In fact, Lemma~\ref{lem:outer} shows that $\subsp{L}_{1}$ and $\subsp{L}_{2}$ are conjugate matrix Lie algebras if and only if $\rho_{1}$ and $\rho_{2}$ are equivalent up to an automorphism of $\subsp{L}$, that is, $\rho_{1}$ is equivalent to $\rho_{2} \circ \alpha$ for some automorphism $\alpha\colon \subsp{L} \to \subsp{L}$. These automorphisms are what cause all the computational difficulties, and allow the equivalences with graph isomorphism and code equivalence.

If $\subsp{L}$ is specified by a basis and structure constants as above, then a representation $\rho\colon \subsp{L} \to M_{n}$ may be specified by giving the $k$ matrices $\rho(v_{i})$ for each basis element.

Given two representations $\rho_{i}\colon \subsp{L} \to M_{n_{i}}$ for $i=1,2$, their \definedWord{direct sum} $\rho_{1} \oplus \rho_{2}\colon \subsp{L} \to M_{n_{1} + n_{2}}$ is defined by the block-matrix:
\[
(\rho_{1} \oplus \rho_{2})(v) = \left(\begin{array}{cc}
\rho_{1}(v) & \\
 & \rho_{2}(v) 
 \end{array}\right).
\]
A representation is called \definedWord{decomposable} if it is (equivalent to) a non-trivial direct sum; otherwise it is called \definedWord{indecomposable}.

The set $M_{n}$ of $n \times n$ matrices acts on the vector space $\F^{n}$ by the usual matrix-vector multiplication. Given a subset $S \subseteq M_{n}$, if $V \subseteq \F^{n}$ is a subspace such that $S \cdot V \subseteq V$, then $V$ is called an \definedWord{$S$-invariant subspace}. The $0$ subspace and the whole space $\F^{n}$ are $S$-invariant for any $S$. 

A representation $\rho\colon \subsp{L} \to M_{n}$ is called \definedWord{irreducible} if $0$ and $\F^{n}$ are the only $\im(\rho)$-invariant subspaces. Otherwise a representation is called \definedWord{reducible}. Note that a decomposable representation is reducible, but the converse need not be true, as illustrated by the example:
\[
\set{\left(\begin{array}{cc}
1 & x \\
  & 1
\end{array}\right) : x \in \F}.
\]

A representation is \definedWord{completely reducible} if it can be decomposed into a direct sum of irreducible representations. Every representation can be decomposed into indecomposable representations; in a completely reducible representation these indecomposables must also be irreducible.

A matrix Lie algebra $\subsp{L} \subseteq M_{n}$, can be viewed as the image of a faithful representation of $\subsp{L}$, namely, take $\rho\colon \subsp{L} \to M_{n}$ to be the inclusion (\ie, identity) map. Via this identification, we also apply the terms (in)decomposable and (ir)reducible to matrix Lie algebras. If $\subsp{L}$ is a completely reducible matrix Lie algebra, then it is equivalent (conjugate) to a matrix Lie algebra consisting of block-diagonal matrices, where the restriction to each block is irreducible.

\begin{thm}[see Theorem~III.10 on p. 81 of Jacobson \cite{jacobson}] \label{thm:CR} A matrix Lie algebra $\subsp{L}$ is completely reducible if and only if $\subsp{L}$ is isomorphic to the direct sum of an abelian, diagonalizable Lie algebra and a semisimple Lie algebra. \end{thm}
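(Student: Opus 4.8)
The plan is to prove the two implications separately, using the standard structure theory of Lie algebras over $\C$; throughout, ``$\subsp{L}$ completely reducible'' means the defining action on $V := \F^{n}$ is a direct sum of irreducibles. I read the statement as: the matrix Lie algebra $\subsp{L} \subseteq M_{n}$ decomposes as a direct sum of ideals $\subsp{L} = \subsp{A} \oplus \subsp{S}$ (so $[\subsp{A},\subsp{S}] = 0$) with the matrices of $\subsp{A}$ simultaneously diagonalizable and $\subsp{S}$ semisimple — and one should note that $\subsp{A}$, $\subsp{S}$ are then forced to be $Z(\subsp{L})$ and $[\subsp{L},\subsp{L}]$ respectively.

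\emph{The ``if'' direction.} Suppose $\subsp{L} = \subsp{A} \oplus \subsp{S}$ with $\subsp{A}$ abelian diagonalizable, $\subsp{S}$ semisimple, and $[\subsp{A},\subsp{S}] = 0$; then $\subsp{A} \subseteq Z(\subsp{L})$. First I would use that $\subsp{A}$ is a commuting family of diagonalizable matrices to write $V = \bigoplus_{\chi} V_{\chi}$ as the direct sum of its simultaneous eigenspaces, indexed by weights $\chi \colon \subsp{A} \to \F$. Since $\subsp{A}$ is central, each $V_{\chi}$ is $\subsp{L}$-invariant: for $\ell \in \subsp{L}$, $a \in \subsp{A}$, $v \in V_{\chi}$ one has $a(\ell v) = \ell(av) + [a,\ell]v = \chi(a)\,\ell v$. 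On $V_{\chi}$ the algebra $\subsp{A}$ acts by scalars, so the $\subsp{L}$-submodules of $V_{\chi}$ coincide with its $\subsp{S}$-submodules; by Weyl's theorem on complete reducibility, applied to $\subsp{S} \to \gl(V_{\chi})$, each $V_{\chi}$ is a direct sum of irreducibles. Summing over $\chi$ shows $V$ is completely reducible as an $\subsp{L}$-module.

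\emph{The ``only if'' direction.} Suppose $\subsp{L} \subseteq M_{n}$ acts completely reducibly on $V$, and let $R$ be the solvable radical of $\subsp{L}$. The crux — and the step I expect to be the main obstacle — is to show $[\subsp{L}, R] = 0$. I would argue constituent by constituent: for an irreducible $\subsp{L}$-submodule $W \subseteq V$, Lie's theorem gives $R$ a common eigenvector in $W$, so the weight space $W_{\lambda} = \{w \in W : rw = \lambda(r)w \text{ for all } r \in R\}$ is nonzero for some $\lambda \colon R \to \F$; the invariance lemma (valid since $\mathrm{char}\,\F = 0$ and $R$ is an ideal) shows $W_{\lambda}$ is $\subsp{L}$-invariant, hence $W_{\lambda} = W$ by irreducibility. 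Thus $R$ acts on $W$ by the scalar character $\lambda$, so every $[\ell, r] \in [\subsp{L},R]$ acts on $W$ as $[\rho(\ell), \lambda(r)\,\mathrm{id}] = 0$. As this holds on every irreducible constituent and $V$ is their direct sum, $[\subsp{L},R]$ acts as $0$ on $V$; since the inclusion $\subsp{L} \hookrightarrow M_{n}$ is faithful, $[\subsp{L},R] = 0$, i.e. $R = Z(\subsp{L})$ (the reverse containment $Z(\subsp{L}) \subseteq R$ holds because $Z(\subsp{L})$ is an abelian, hence solvable, ideal).

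\emph{Assembling the decomposition and finishing.} Now $\subsp{L}$ is reductive. By Levi's theorem $\subsp{L} = R \ltimes \subsp{S}$ with $\subsp{S} \cong \subsp{L}/R$ semisimple, and since $R = Z(\subsp{L})$ the semidirect product is actually direct: $\subsp{L} = Z(\subsp{L}) \oplus \subsp{S}$, with $\subsp{S} = [\subsp{L},\subsp{L}]$ (using that semisimple Lie algebras are perfect) semisimple and $\subsp{A} := Z(\subsp{L})$ abelian. It remains to see $\subsp{A}$ is diagonalizable: each $z \in Z(\subsp{L})$ commutes with all of $\subsp{L}$, so on each irreducible constituent $W$ of $V$ it is an $\subsp{L}$-module endomorphism, hence a scalar by Schur's lemma (we are over $\C$), in particular diagonalizable on $W$. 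A block-diagonal operator that is diagonalizable on each block is diagonalizable, and the elements of $Z(\subsp{L})$ pairwise commute, so $\subsp{A}$ is simultaneously diagonalizable. Everything outside the $[\subsp{L},R] = 0$ step is bookkeeping with the standard toolkit — Weyl, Lie, Levi, Schur — all available in characteristic zero.
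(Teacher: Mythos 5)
Your proof is correct. The paper itself does not supply an argument for this theorem---it cites Jacobson's \emph{Lie Algebras}, Theorem~III.10---so there is no in-paper proof to compare against; but your route is the standard structure-theoretic one and is, as far as I can tell, the same in spirit as Jacobson's. The key step you flagged as the ``crux,'' namely that the radical $R$ acts by a scalar character on each irreducible constituent (Lie's theorem to produce a nonzero $R$-weight space, the invariance lemma to promote it to an $\subsp{L}$-submodule, irreducibility to fill out $W$), correctly yields $[\subsp{L},R]=0$, hence $R=Z(\subsp{L})$, after which Levi gives the direct splitting and Schur plus block-diagonalization gives simultaneous diagonalizability of the center. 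The ``if'' direction via the $\subsp{A}$-weight-space decomposition and Weyl's theorem is likewise fine. One small remark: once you know $R=Z(\subsp{L})$, you can avoid Levi entirely by taking $\subsp{S}=[\subsp{L},\subsp{L}]$ directly---it is an ideal complementary to the center because the quotient $\subsp{L}/Z(\subsp{L})$ is semisimple (so $\subsp{L}=Z(\subsp{L})+[\subsp{L},\subsp{L}]$, and the intersection is a central ideal of the semisimple $[\subsp{L},\subsp{L}]$, hence zero); this is slightly more elementary but not a substantively different argument.
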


The proof of this theorem given in Jacobson \cite{jacobson} is algebraic in nature and can be made effective. All that is required is the solution of a few polynomially sized linear systems of equations. In other words, in polynomial time one can find the irreducible direct summands of a completely reducible representation:

\begin{cor} \label{cor:directSum} Given a completely reducible matrix Lie algebra $\subsp{L} \subseteq M_{n}$, one can find in $poly(n)$ time a matrix $g$ so that $g\subsp{L}g^{-1}$ is the direct sum of an abelian diagonal Lie algebra and a semisimple Lie algebra, where the semisimple part consists of block-diagonal matrices, each block being irreducible. \end{cor}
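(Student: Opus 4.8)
The plan is to reduce the corollary to two tasks, each solvable by a handful of polynomially-sized linear-algebra computations, exactly as anticipated by the remark following Theorem~\ref{thm:CR}: (a) split $\subsp{L}$ into its abelian and semisimple ideals, and (b) decompose the ambient space $\F^{n}$ into irreducible $\subsp{L}$-submodules. For (a), note that for a completely reducible $\subsp{L} = \subsp{A} \oplus \subsp{S}$ with $\subsp{A}$ abelian diagonalizable and $\subsp{S}$ semisimple (Theorem~\ref{thm:CR}), one has $\subsp{A} = Z(\subsp{L})$ and $\subsp{S} = [\subsp{L},\subsp{L}]$, since semisimple Lie algebras are perfect and centerless; both $Z(\subsp{L})$ and $[\subsp{L},\subsp{L}]$ are computed from a basis of $\subsp{L}$ by solving one linear system and by taking the span of the pairwise brackets, respectively.

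The heart is (b): given the matrix Lie algebra $\subsp{L} \subseteq M_{n}$, find $\F^{n} = V_{1} \oplus \dotsb \oplus V_{r}$ with each $V_{i}$ an irreducible $\subsp{L}$-submodule. I would do this recursively. First compute the commutant $\mathrm{End}_{\subsp{L}}(\F^{n}) = \{X \in M_{n} : X\rho(v) = \rho(v)X \text{ for all } v \in \subsp{L}\}$, a $\C$-algebra cut out by a linear system. Because $\subsp{L}$ is completely reducible this algebra is semisimple, and over $\C$ it is one-dimensional precisely when $\F^{n}$ is $\subsp{L}$-irreducible (Schur's lemma, together with the fact that a reducible \emph{completely} reducible module has at least a two-dimensional commutant). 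If the commutant is just the scalars, output $\F^{n}$ itself. Otherwise pick a non-scalar $X$ in the commutant and an eigenvalue $\lambda$ of $X$; then $W := \mathrm{im}(X - \lambda I)$ is a proper, nonzero $\subsp{L}$-submodule (image of a map commuting with $\subsp{L}$, singular because $\lambda$ is an eigenvalue, nonzero because $X$ is not scalar). Complete reducibility guarantees an $\subsp{L}$-invariant complement $U$, and such a $U$ is found by solving for an $\subsp{L}$-equivariant $P \in M_{n}$ with $Px \in W$ for all $x$ and $Pw = w$ for all $w \in W$ — a linear system whose solvability is precisely the effective content of Theorem~\ref{thm:CR}; any solution is automatically idempotent with image $W$, so $U := \ker P$ works. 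Recurse on $W$ and on $U$. The recursion depth is at most $n$ and each step is a polynomial-time linear-algebra computation, so the whole decomposition is produced in $poly(n)$ time; this is essentially the module-decomposition algorithm in de Graaf's book \cite{deGraafBook}.

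Finally, let $g$ be the change of basis to a basis adapted to $\F^{n} = V_{1} \oplus \dotsb \oplus V_{r}$, so that $g\subsp{L}g^{-1}$ consists of block-diagonal matrices with the $i$-th block acting irreducibly on $V_{i}$. It remains to check that this single $g$ simultaneously achieves everything the corollary asks. By Schur's lemma the ideal $g\subsp{A}g^{-1} = gZ(\subsp{L})g^{-1}$ acts by scalars on each $V_{i}$, hence is block-scalar and therefore diagonal. Since $\subsp{A}$ acts by scalars on $V_{i}$, the $\subsp{S}$-submodules of $V_{i}$ coincide with its $\subsp{L}$-submodules, so each $V_{i}$ is irreducible for $\subsp{S}$ as well; thus $g\subsp{S}g^{-1} = g[\subsp{L},\subsp{L}]g^{-1}$ is block-diagonal with each block irreducible. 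As $\subsp{L} = \subsp{A} \oplus \subsp{S}$, we get $g\subsp{L}g^{-1} = g\subsp{A}g^{-1} \oplus g\subsp{S}g^{-1}$, which is exactly of the required form. I expect the main obstacle to be purely an effectivization issue: verifying that every ``split'' step in (b) is genuinely polynomial time over $\C$ — in particular that a proper invariant subspace and an invariant complement are always exhibited by \emph{solvable} linear systems, which is where complete reducibility (the effective form of Jacobson's proof of Theorem~\ref{thm:CR}~\cite{jacobson}) is used — rather than any new structural idea.
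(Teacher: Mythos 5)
Your proof is correct, and it fills in an algorithm that the paper itself only gestures at: the paper's ``proof'' of Corollary~\ref{cor:directSum} is the remark after Theorem~\ref{thm:CR}, which simply asserts that Jacobson's algebraic proof of Theorem~III.10 can be effectivized via a few polynomial-size linear systems, without giving the procedure. Your commutant-recursion is the natural such effectivization and is almost certainly what the paper has in mind. The two observations that make your ``put it together'' step work---that $\subsp{A} = Z(\subsp{L})$ lies in the commutant of $\subsp{L}$ acting on $\F^n$, hence by Schur's lemma acts by scalars on each irreducible $V_i$ (so the $\subsp{L}$-irreducibles coincide with the $\subsp{S}$-irreducibles, and a basis adapted to the decomposition simultaneously diagonalizes $\subsp{A}$ and block-diagonalizes $\subsp{S}$)---are exactly the point, and they are stated cleanly. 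Your verification that the projection system is automatically idempotent (from $\mathrm{im}\,P \subseteq W$ and $P|_W = \mathrm{id}$) is also right, and its solvability is precisely complete reducibility, as you say. The one place you gloss over is the same place the paper does: picking an eigenvalue $\lambda$ of a non-scalar $X$ in the commutant requires root-finding, which is not exact polynomial-time arithmetic over $\C$ if taken literally. This is a model-of-computation informality shared with the source (and avoidable, e.g., by working with factors of the minimal polynomial of $X$ and extending the base field as needed), so it does not undermine the argument, but it is worth being aware that this is where ``$poly(n)$ over $\C$'' is being read charitably.
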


\subsection{Inner and Outer Automorphisms} \label{sec:background:aut}
The collection of automorphisms of a Lie algebra $\subsp{L}$ form a group $\Aut(\subsp{L})$ under composition of maps. Given a Lie algebra $\subsp{L}$ and $v \in \subsp{L}$, the Jacobi identity implies that the map $\ad_{v}\colon \subsp{L} \to \subsp{L}$ defined by $\ad_{v}(u) := [v, u]$ is a homomorphism of Lie algebras. If $\ad_{v}^{k} := \ad_{v} \circ \dotsb \circ \ad_{v}$ is the zero map for $k$ sufficiently large, then $\exp(\ad_{v}) := I + \ad_{v} + \frac{1}{2} \ad_{v}^{2} + \dotsb + \frac{1}{(k-1)!} \ad_{v}^{k-1}$ is an automorphism of $\subsp{L}$. Automorphisms arising in this way are called \definedWord{inner automorphisms}. The inner automorphisms form a normal subgroup $\Inn(\subsp{L}) \leq \Aut(\subsp{L})$. The quotient group $\Aut(\subsp{L}) / \Inn(\subsp{L})$ is called the \definedWord{outer automorphism} group and is denoted $\Out(\subsp{L})$.

The outer automorphism groups of the simple Lie algebras are completely known:
\[
\begin{array}{rclcrcl}
\Out(\mfsl_{n}) & = & S_{2} & \qquad & \Out(\mathfrak{sp}_{2n}) & = & 1 \\
(n \neq 4) \quad \Out(\mathfrak{so}_{2n}) & = & S_{2} & \qquad & \Out(\mathfrak{so}_{2n+1}) & = & 1 \\
\Out(\mathfrak{so}_{8}) & = & S_{3} & \qquad & \Out(\mathfrak{e}_{7}) & = & 1 \\
\Out(\mathfrak{e}_{6}) & = & S_{2} & \qquad & \Out(\mathfrak{e}_{8}) & = & 1 \\
 & & & & \Out(\mathfrak{f}_{4}) & = & 1 \\
 & & & & \Out(\mathfrak{g}_{2}) & = & 1 \\
\end{array}
\]
The action of $\Out(\mfsl_{n})$ on the representations of $\mfsl_{n}$ is trivial. The action technically takes a representation to its dual, but for $\mfsl_{n}$, the dual of a representation is equivalent to that representation.

\subsection{Twisting representations by automorphisms}
Given an automorphism $\alpha \colon \subsp{L} \to \subsp{L}$ and a representation $\rho\colon \subsp{L} \to M_{n}$, we get another representation $\rho \circ \alpha \colon \subsp{L} \stackrel{\alpha}{\to} \subsp{L} \stackrel{\rho}{\to} M_{n}$, given by $(\rho \circ \alpha)(v) = \rho(\alpha(v))$. Since $\alpha$ is an automorphism, it is, in particular, onto, so $\im(\rho \circ \alpha) = \im(\rho)$. However, $\rho \circ \alpha$ and $\rho$ need not be equivalent as representations, despite having the same image. We call $\rho \circ \alpha$ the \definedWord{twist} of the representation $\rho$ by the automorphism $\alpha$.

For semisimple Lie algebras, twisting by inner automorphisms does in fact lead to equivalent representations:

\begin{lem}[see Lemma~8.5.1 in de Graaf \cite{deGraafBook}] \label{lem:inner} Let $\rho\colon \subsp{L} \to M_{n}$ be a representation of a semisimple Lie algebra $\subsp{L}$ and let $\alpha$ be an inner automorphism of $\subsp{L}$. Then $\rho \circ \alpha$ is equivalent to $\rho$.
\end{lem}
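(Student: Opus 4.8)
The plan is to reduce to elementary inner automorphisms and then write down the conjugating matrix explicitly as the exponential of a nilpotent matrix. Recall from the definition in \S\ref{sec:background:aut} that $\Inn(\subsp{L})$ is generated by the elementary inner automorphisms $\exp(\ad_{v})$ with $\ad_{v}$ nilpotent. The property we want — that $\rho\circ\alpha$ is conjugate to $\rho$ by some invertible matrix — is preserved under composition: if $\rho(\alpha_{i}(u)) = g_{i}\,\rho(u)\,g_{i}^{-1}$ for all $u$ and $i=1,2$, then applying the $i=1$ identity to $\alpha_{2}(u)$ gives $\rho((\alpha_{1}\alpha_{2})(u)) = (g_{1}g_{2})\,\rho(u)\,(g_{1}g_{2})^{-1}$. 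Since the inverse of $\exp(\ad_{v})$ is again elementary, it therefore suffices to treat a single $\alpha = \exp(\ad_{v})$ with $\ad_{v}$ nilpotent.

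The one place semisimplicity genuinely enters — and the step I expect to be the main obstacle — is the claim that an ad-nilpotent element of a semisimple Lie algebra acts nilpotently in every representation, i.e. that $N := \rho(v)$ is a nilpotent matrix. I would obtain this from the abstract Jordan decomposition: write $v = v_{s}+v_{n}$ with $\ad_{v_{s}}$ semisimple, $\ad_{v_{n}}$ nilpotent, and $[v_{s},v_{n}]=0$. Since $\subsp{L}$ is semisimple its center is trivial, so $\ad$ is faithful, and $\ad_{v}$ nilpotent forces $\ad_{v_{s}}=0$, hence $v_{s}=0$ and $v=v_{n}$. Then the theorem that $\rho$ carries abstract Jordan decomposition to the Jordan decomposition of matrices gives $\rho(v)=\rho(v_{n})$ nilpotent. (If one prefers, the entire lemma can simply be cited from de Graaf \cite{deGraafBook}, Lemma~8.5.1; the sketch here is self-contained modulo standard Jordan-decomposition theory.)

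With $N=\rho(v)$ nilpotent, set $g := \exp(N) = \sum_{j\ge 0}\tfrac{1}{j!}N^{j}$, a finite sum, an invertible matrix with inverse $\exp(-N)$. Because $\rho$ is a Lie algebra homomorphism, $\rho([v,u]) = N\rho(u)-\rho(u)N$, i.e. $\rho\circ\ad_{v} = \ad_{N}^{M_{n}}\circ\rho$ where $\ad_{N}^{M_{n}}(X) := NX-XN$; iterating yields $\rho\circ\ad_{v}^{k} = (\ad_{N}^{M_{n}})^{k}\circ\rho$ and hence $\rho\circ\exp(\ad_{v}) = \exp(\ad_{N}^{M_{n}})\circ\rho$, with all sums finite since $N$, and therefore $\ad_{N}^{M_{n}}$, is nilpotent. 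Finally I would invoke the standard identity $\exp(\ad_{N}^{M_{n}})(X) = \exp(N)\,X\,\exp(-N)$ for nilpotent $N$, which follows by writing $\ad_{N}^{M_{n}} = L_{N}-R_{N}$ as a difference of the commuting operators of left- and right-multiplication by $N$, so that $\exp(\ad_{N}^{M_{n}}) = \exp(L_{N})\exp(-R_{N})$. Combining, $(\rho\circ\alpha)(u) = \rho(\exp(\ad_{v})(u)) = g\,\rho(u)\,g^{-1}$ for all $u\in\subsp{L}$, which is precisely the statement that $\rho\circ\alpha$ is equivalent to $\rho$.
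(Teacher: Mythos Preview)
Your argument is correct. The reduction to a single elementary inner automorphism $\alpha=\exp(\ad_{v})$ is valid since $\Inn(\subsp{L})$ is by definition generated by such maps and the conjugacy condition is closed under composition; the use of the abstract Jordan decomposition together with the triviality of the center of a semisimple Lie algebra to deduce that $\rho(v)$ is nilpotent is the standard (and essential) ingredient; and the final computation $\rho\circ\exp(\ad_{v})=\exp(\ad_{N}^{M_{n}})\circ\rho$ combined with $\exp(\ad_{N}^{M_{n}})(X)=\exp(N)X\exp(-N)$ is clean.

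As for comparison: the paper does not actually give a proof of this lemma. It is stated in the background appendix with a direct citation to de Graaf \cite{deGraafBook}, Lemma~8.5.1, and no argument is supplied. Your write-up is essentially the standard proof one finds in that reference, so there is no substantive difference in approach to report---you have simply unpacked what the paper chose to quote.
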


Since twisting a representation by an inner automorphism sends it to an equivalent representation, we find that the outer automorphism group $\Out(\subsp{L})$ acts on the set of representations-up-to-equivalence. If $\alpha \in \Out(\subsp{L})$, we denote the image of $\rho$ under the action of $\alpha$ by $\rho^{\alpha}$. Equivalently, let $\alpha_{*} \in \Aut(\subsp{L})$ be a representative of $\alpha \in \Out(\subsp{L})$; then $\rho^{\alpha}$ is the equivalence class of $\rho \circ \alpha_{*}$, and by the lemma, this equivalence class is independent of the choice of representative $\alpha_{*}$.

We note that the same result is vacuously true for abelian Lie algebras, since if $\subsp{L}$ is abelian then it has no non-trivial inner automorphisms. Hence it also holds for Lie algebras that are a direct sum of abelian and semisimple.

\subsection{More structure theory} \label{sec:background:morestruct}
Given two ideals $A,  B \subseteq \subsp{L}$, their commutator is defined as $[A, B] := \Span\set{[a, b] : a \in A, b \in B}$; the commutator of two ideals is again an ideal (this is an exercise in the Jacobi identity). The \definedWord{derived series} of $\subsp{L}$ is defined as follows: $\subsp{L}^{(0)} := \subsp{L}$, $\subsp{L}^{(i+1)} := [\subsp{L}^{(i)}, \subsp{L}^{(i)}]$. $\subsp{L}^{(1)} = [\subsp{L}, \subsp{L}]$ is called the \definedWord{derived or commutator subalgebra}.

\begin{defn} A Lie algebra $\subsp{L}$ is \definedWord{solvable} if the derived series terminates at $\subsp{L}^{(k)} = 0$ for some $k$. \end{defn}

Each step in the derived series, $\subsp{L}^{(i)} / \subsp{L}^{(i+1)}$ is abelian, so solvable Lie algebras are ``iterated extensions of abelian Lie algebras.''

The \definedWord{lower central series} is defined by $\subsp{L}_{0} := \subsp{L}$ and $\subsp{L}_{i+1} := [\subsp{L}, \subsp{L}_{i}]$. Note that here we take the commutator of $\subsp{L}_{i}$ with the whole of $\subsp{L}$, rather than just with $\subsp{L}_{i}$ (as in the derived series). Hence the lower central series decreases more slowly than the derived series.

\begin{defn} A Lie algebra $\subsp{L}$ is \definedWord{nilpotent} if the lower central series terminates at $\subsp{L}_{k} = 0$ for some $k$. \end{defn}

Finally, in order to state the main structural theorems of Lie algebras, we define semidirect products and derivations. A \definedWord{derivation} on a Lie algebra $\subsp{L}$ is a linear map $d\colon \subsp{L} \to \subsp{L}$ such that $d([u, v]) = [u, d(v)] + [d(u), v]$. Note the similarity with the product rule for differentiation. Since a derivation is a linear map, we may compose two derivations as linear maps; then defining $[d_{1}, d_{2}] := d_{1} \circ d_{2} - d_{2} \circ d_{1}$ makes the collection of derivations of $\subsp{L}$ into a Lie algebra denoted $\Der(\subsp{L})$. 

Given two Lie algebras $\subsp{L}_{1}, \subsp{L}_{2}$ and a homomorphism $\varphi \colon \subsp{L}_{2} \to \Der(\subsp{L}_{1})$, we define the semi-direct product $\subsp{L}_{1} \rtimes_{\varphi} \subsp{L}_{2}$ as follows. The underlying vector space is the direct sum of $\subsp{L}_{1}$ and $\subsp{L}_{2}$. On each of these subspaces, the Lie bracket is defined as it was originally. If $v \in \subsp{L}_{1}$ and $d \in \subsp{L}_{2}$ we define
\[
[v, d] := d(v).
\]
Extending by linearity and skew-symmetry, we find 
\[
[v_{1} + d_{1}, v_{2} + d_{2}] = [v_{1}, v_{2}] + d_{2}(v_{1}) - d_{1}(v_{2}) + [d_{1}, d_{2}]
\]
where $v_{i} \in \subsp{L}_{1}$ and $d_{i} \in \subsp{L}_{2}$.

The following two theorems are quite strong structural theorems. For example, nothing even close to these holds in the case of finite groups, despite the similarity in the definitions of all the notions (nilpotent, solvable, semidirect product).

\begin{thm}[Levi's Theorem, cf. \S III.9, p. 91 of Jacobson \cite{jacobson}] \label{thm:levi} Every Lie algebra is the semidirect product of a solvable Lie algebra by a semisimple one. (That is, the semisimple one acts as derivations on the solvable one.) \end{thm}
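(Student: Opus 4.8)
The plan is to recast the theorem as a splitting problem and run an induction on the dimension of the solvable radical. Let $\subsp{R}$ be the solvable radical of $\subsp{L}$, i.e.\ the unique maximal solvable ideal, which exists because the sum of two solvable ideals is again solvable. Since $\subsp{L}/\subsp{R}$ has no nonzero solvable ideal it is semisimple, so it suffices to produce a subalgebra $\subsp{S} \subseteq \subsp{L}$ with $\subsp{L} = \subsp{R} \oplus \subsp{S}$ as vector spaces: then $\subsp{S} \cong \subsp{L}/\subsp{R}$ is semisimple, $\subsp{R}$ remains an ideal, and $\subsp{S}$ acts on $\subsp{R}$ by the derivations $\ad_{s}|_{\subsp{R}}$ ($s \in \subsp{S}$), exhibiting $\subsp{L}$ as the semidirect product of the solvable algebra $\subsp{R}$ by the semisimple algebra $\subsp{S}$. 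Equivalently, I must split the short exact sequence $0 \to \subsp{R} \to \subsp{L} \to \subsp{L}/\subsp{R} \to 0$ in the category of Lie algebras, and I would do this by induction on $\dim \subsp{R}$, the case $\subsp{R} = 0$ (that is, $\subsp{L}$ already semisimple) being vacuous.

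For the inductive step I would first dispose of the case $[\subsp{R},\subsp{R}] \neq 0$. Here $[\subsp{R},\subsp{R}]$ is preserved by every derivation of $\subsp{R}$, in particular by each $\ad_{x}|_{\subsp{R}}$ with $x \in \subsp{L}$, so it is an ideal of $\subsp{L}$; and $0 < \dim[\subsp{R},\subsp{R}] < \dim \subsp{R}$ since $\subsp{R}$ is nonzero and solvable. The radical of $\subsp{L}/[\subsp{R},\subsp{R}]$ is $\subsp{R}/[\subsp{R},\subsp{R}]$, of strictly smaller dimension, so induction provides a Levi complement $\overline{\subsp{S}}$ for it; its preimage $\subsp{L}' \subseteq \subsp{L}$ is then a subalgebra with $\subsp{L}' \cap \subsp{R} = [\subsp{R},\subsp{R}]$ and semisimple quotient $\subsp{L}'/[\subsp{R},\subsp{R}] \cong \overline{\subsp{S}}$, so $\mathrm{rad}(\subsp{L}') = [\subsp{R},\subsp{R}]$, again of dimension $< \dim \subsp{R}$. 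Applying the inductive hypothesis to $\subsp{L}'$ yields $\subsp{S} \subseteq \subsp{L}' \subseteq \subsp{L}$ with $\subsp{L}' = [\subsp{R},\subsp{R}] \oplus \subsp{S}$; since $\subsp{L}' \cap \subsp{R} = [\subsp{R},\subsp{R}]$ and $\dim \subsp{S} = \dim \subsp{L} - \dim \subsp{R}$, one checks that $\subsp{L} = \subsp{R} \oplus \subsp{S}$, so $\subsp{S}$ is the desired Levi complement of $\subsp{L}$.

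This reduces the theorem to the case $\subsp{R}$ abelian, which I expect to be the main obstacle. Now the adjoint action makes $\subsp{R}$ into a module over the semisimple quotient $\overline{\subsp{L}} := \subsp{L}/\subsp{R}$ (the $\subsp{R}$-action is trivial because $[\subsp{R},\subsp{R}] = 0$), and an abelian extension $0 \to \subsp{R} \to \subsp{L} \to \overline{\subsp{L}} \to 0$ of this shape is classified, up to isomorphisms fixing $\subsp{R}$ and $\overline{\subsp{L}}$, by a class in $H^{2}(\overline{\subsp{L}}, \subsp{R})$: choosing a linear section $\sigma$ of the projection, the function $f(x,y) := \sigma([x,y]) - [\sigma(x),\sigma(y)] \in \subsp{R}$ is an alternating $2$-cocycle, and $\sigma$ can be corrected to an honest Lie-algebra splitting exactly when $f$ is a coboundary. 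So the whole statement comes down to Whitehead's second lemma: $H^{2}(\mathfrak{g}, \subsp{M}) = 0$ for every finite-dimensional module $\subsp{M}$ over a semisimple Lie algebra $\mathfrak{g}$ in characteristic zero. I would prove that vanishing in the classical way, via the Casimir operator of $\mathfrak{g}$: on a module with no trivial submodule the Casimir acts invertibly while acting as zero on cohomology, which forces the cohomology to vanish, and Weyl's complete reducibility theorem lets one split off the trivial constituents, for which a short separate argument, again a consequence of semisimplicity, completes the proof. This is also the substance of the algebraic proof in Jacobson \cite{jacobson}.
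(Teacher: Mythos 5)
The paper does not prove Levi's Theorem; it appears in the background appendix as a classical result cited directly from Jacobson \cite{jacobson}, so there is no in-paper proof to compare against. Your argument is correct and is in fact the standard algebraic proof (and the one Jacobson gives, as you note): reduce by induction on $\dim \subsp{R}$ to the case of abelian radical via the ideal $[\subsp{R},\subsp{R}]$, then split the resulting abelian extension using the vanishing of $H^{2}(\overline{\subsp{L}}, \subsp{R})$ for semisimple $\overline{\subsp{L}}$, proved via the Casimir operator and Weyl's complete reducibility. The one place worth tightening is the two-step inductive reduction: you should make explicit that after the first application of induction to $\subsp{L}/[\subsp{R},\subsp{R}]$, the subalgebra $\subsp{L}'$ (the preimage of $\overline{\subsp{S}}$) has $\mathrm{rad}(\subsp{L}') = [\subsp{R},\subsp{R}]$ because $[\subsp{R},\subsp{R}]$ is a solvable ideal of $\subsp{L}'$ with semisimple quotient, and then the second application of induction to $\subsp{L}'$ is legitimate since $\dim[\subsp{R},\subsp{R}] < \dim\subsp{R}$; you state this, but the reader should check that $\subsp{S} \cap \subsp{R} = \subsp{S} \cap [\subsp{R},\subsp{R}] = 0$ to conclude $\subsp{L} = \subsp{R} \oplus \subsp{S}$. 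With that verification spelled out, the proof is complete and matches the cited source.
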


\begin{thm}[see Corollary~II.7.1 on p. 51 of Jacobson \cite{jacobson}] A Lie algebra is solvable if and only if its derived subalgebra is nilpotent.
\end{thm}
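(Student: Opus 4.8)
The plan is to prove the two implications separately; the backward direction is routine, while the forward direction rests on Lie's theorem and Engel's theorem. For the backward direction, suppose $M := \subsp{L}^{(1)} = [\subsp{L}, \subsp{L}]$ is nilpotent. A short induction on $i$ shows that the derived series of $M$ is contained in its lower central series, $M^{(i)} \subseteq M_{i}$: the base case is $M^{(0)} = M = M_{0}$, and $M^{(i+1)} = [M^{(i)}, M^{(i)}] \subseteq [M, M_{i}] = M_{i+1}$. Hence nilpotence of $M$ forces $M^{(k)} = 0$ for some $k$. Since $\subsp{L}^{(i+1)} = M^{(i)}$ for all $i \geq 0$ (immediate from the definition of the derived series), we get $\subsp{L}^{(k+1)} = 0$, so $\subsp{L}$ is solvable.

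For the forward direction, suppose $\subsp{L}$ is solvable, and consider the adjoint representation $\ad \colon \subsp{L} \to \gl(\subsp{L})$, $x \mapsto \ad_{x}$. Its image $\ad(\subsp{L}) \cong \subsp{L}/\ker(\ad)$ is a quotient of $\subsp{L}$, hence a solvable subalgebra of $\gl(\subsp{L})$; since we work over $\C$, Lie's theorem provides a basis of $\subsp{L}$ in which every $\ad_{x}$, $x \in \subsp{L}$, is upper triangular. For any $x, y \in \subsp{L}$ the operator $\ad_{[x, y]} = [\ad_{x}, \ad_{y}]$ is then strictly upper triangular, being a commutator of upper triangular matrices (whose diagonals therefore cancel), hence a nilpotent operator; by linearity of $\ad$, the operator $\ad_{z}$ on $\subsp{L}$ is strictly upper triangular, in particular nilpotent, for every $z \in M := [\subsp{L}, \subsp{L}]$.

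It remains to pass from ``$\ad_{z}$ nilpotent on $\subsp{L}$'' to ``$M$ is a nilpotent Lie algebra.'' Because $M$ is an ideal of $\subsp{L}$, for $z \in M$ the operator $\ad^{M}_{z} \colon M \to M$ is simply the restriction of $\ad_{z}$ to the $\ad_{z}$-invariant subspace $M$, hence also nilpotent. By Engel's theorem, a Lie algebra all of whose adjoint operators are nilpotent is itself nilpotent; applied to $M$, this shows $M = [\subsp{L}, \subsp{L}]$ is nilpotent. The one step that genuinely needs attention is the forward direction: one must remember to apply Lie's theorem to the image $\ad(\subsp{L})$ (a matrix Lie algebra over $\C$) rather than to $\subsp{L}$ itself, and then recognize --- via Engel's theorem and the fact that $M$ is an ideal --- that strict upper triangularity of $\ad_{z}$ for $z$ ranging over $M$ is exactly the hypothesis needed to conclude that $M$ is nilpotent as an abstract Lie algebra. (One could instead route through Cartan's solvability criterion applied to $\ad(\subsp{L})$, but the Lie-theorem argument is shorter.)
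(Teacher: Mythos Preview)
Your proof is correct. Note, however, that the paper does not actually supply a proof of this theorem: it is stated as background with a citation to Jacobson \cite{jacobson} (Corollary~II.7.1) and left unproved, so there is no ``paper's own proof'' to compare against.

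That said, your argument is essentially the classical one found in standard references (including Jacobson): the backward direction is the elementary containment $M^{(i)} \subseteq M_{i}$, and the forward direction runs Lie's theorem on the adjoint representation to get $\ad_{z}$ strictly upper triangular for $z \in [\subsp{L},\subsp{L}]$, then invokes Engel. One small remark: Lie's theorem requires an algebraically closed field of characteristic zero, so your argument as written applies over $\C$ (which is the ambient setting of the appendix). Over a general field of characteristic zero one would first extend scalars to the algebraic closure and then descend, since nilpotence of $[\subsp{L},\subsp{L}]$ is preserved under and reflected by base change; this is routine but worth a sentence if you want the statement at the generality Jacobson gives it.
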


\begin{rmk} \label{rmk:buildingBlocks} Since solvable Lie algebras are iterated extensions of abelian ones (see above), and considering Theorem~\ref{thm:levi}, we may say that abelian and simple Lie algebras form the ``building blocks'' of all Lie algebras. \end{rmk}


\section{Reduction from \mylang\ to graph isomorphism} \label{app:aGI}
\begin{proof}[Proof of Lemma~\ref{lem:aGI}]
First, if the permutation groups $G_{\ell}$ are all trivial, then we can take each $M_{i}$ as the bipartite adjacency matrix of a vertex-colored and edge-colored bipartite graph. The vertices corresponding to the columns are colored according to their part in the column partition; we refer to these vertices as column-vertices. The edges are colored by the integer entries of each $M_{i}$. It is clear that the $M_{i}$ are equivalent if and only if the corresponding vertex- and edge-colored bipartite graphs are bipartite-color-isomorphic, that is, isomorphic by an isomorphism which preserves the two parts of the bipartition and preserves each color class of vertices and each color class of edges.

To handle the permutation groups $G_{i}$ we make one additional step in the reduction. Since there is one $G_{i}$ for each column $i$, we must encode its action on the edge-labels incident to the column-vertex $i$. To do this we add a ``color palette'' gadget for each column-vertex, which will encode both the edge-labels, as well as enforcing the action of $G_{i}$ on these labels. That is, the color palette will be such that the way automorphisms of the resulting graph act on the encoding of the edge-labels is exactly the same as $G_{i}$ acts on them.

To encode the edge-labels with the color palette, we divide each edge by a new vertex, and attach this new vertex to the vertex of the color palette which encodes the appropriate edge color. We only need color palettes capable of encoding permutation actions of the trivial group, $S_{2}$, and $S_{3}$.

\emph{$G_{i}$ trivial.} If some $G_{i}$ is trivial, the corresponding color palette is simply a line of vertices with a marked vertex at the end. The marked vertex prevents reversing the order of the line, and the different vertices in the line encode the different edge labels on the edges incident to column-vertex $i$.

\emph{$G_{i} \cong S_{2}$.} $S_{2}$ has two possible orbit types (=transitive actions): a single fixed point, or an orbit of size two. The color palette is the disjoint union of two graphs corresponding to the two possible orbit types. Each of these graphs has its own marked vertex at the end. One of these two graphs is simply a line as in the previous case: the vertices of this line correspond to those edge-labels that are fixed by the action of $S_{2}$. The other graph is the disjoint union of two lines, each of which is joined at the end to the marked vertex. The action of $S_{2}$ swaps the $i$-th vertex of one of these lines with the $i$-th vertex of the other. This enforces that the action of the edge group $G_{i}$ either swaps all of the edge labels (that is, via the nontrivial element of $S_{2}$) or none of them.

For the sake of the next case, it is useful to think of this color palette as gluing together in a line multiple copies of the ``color gadget'' consisting of two disconnected vertices.

\emph{$G_{i} \cong S_{3}$.} $S_{3}$ has four orbit types: 1) the trivial action, 2) the action on two points by which odd permutations swap the points and even permutations fix them, 3) the natural action of $S_{3}$ on three points, and 4) the regular action of $S_{3}$ on itself ($6$ points). However, these last three orbit types must be linked, since if an element of $S_{3}$ swaps two points according to (2), it must also have some action according to (3) and (4). Thus the color palette in this case is the disjoint union of two palettes: the trivial, line palette as before, and a more complicated palette encoding the actions (2)--(4).

This more complicated palette is given by a ``color gadget,'' multiple copies of which are glued together in a line, as in all the other cases. The color gadget is as follows:

\begin{figure}[h]
\centering 
\begin{graph}(4,4)(-2,-2)
\roundnode{1}(0.00, 1.00)
\freetext(0.25, 1.25){1}
\roundnode{2}(0.00, 0.00)
\freetext(0.00, 0.30){2}
\roundnode{3}(0.00, -1.00)
\freetext(0.25, -1.25){3}
\roundnode{A12}(-1.00, 0.50)
\freetext(-1.50, 0.75){$A_{12}$}
\roundnode{A23}(-1.00, -0.50)
\freetext(-1.50, -0.75){$A_{23}$}
\roundnode{A31}(-1.25, 0.00)
\roundnode{A}(-2.00, 0.00)
\freetext(-2.30, 0.00){A}
\roundnode{B21}(1.00, 0.50)
\freetext(1.50, 0.75){$B_{21}$}
\roundnode{B32}(1.00, -0.50)
\freetext(1.50, -0.75){$B_{32}$}
\roundnode{B13}(1.25, 0.00)
\roundnode{B}(2.00, 0.00)
\freetext(2.30, 0.00){B}

\edge{A12}{A23}
\edge{A23}{A31}
\edge{A31}{A12}

\edge{A}{A12}
\edge{A}{A23}
\edge{A}{A31}

\edge{B21}{B32}
\edge{B32}{B13}
\edge{B13}{B21}

\edge{B}{B21}
\edge{B}{B32}
\edge{B}{B13}

\diredge{1}{A12}
\diredge{A12}{2}
\diredge{2}{A23}
\diredge{A23}{3}
\diredge{3}{A31}
\diredge{A31}{1}

\diredge{1}{B13}
\diredge{B13}{3}
\diredge{3}{B32}
\diredge{B32}{2}
\diredge{2}{B21}
\diredge{B21}{1}

\end{graph}
\end{figure}
 
Multiple copies of this color gadget are glued together along three lines, one connecting the ``$1$'' vertices, one connecting the ``$2$'' vertices, and one connecting the ``$3$'' vertices. At one end of these lines, every vertex in the color gadget is connected to a new marked vertex, to prevent the line from being swapped end-to-end.

A set of edge colors corresponding to an orbit of type (2) is encoded by the $A$ and $B$ vertices.

A set of edge colors correspondgin to an orbit of type (3) is encoded by the vertices $1$, $2$, and $3$. 

A set of edge colors corresponding to an orbit of type (4) is encoded by the vertices $A_{12}$, $A_{23}$, $A_{31}$, $B_{21}$, $B_{32}$, and $B_{13}$. $A_{31}$ and $B_{13}$ are not labelled in the diagram due to space, but there are directed edges $3 \to A_{31} \to 1$ and $1 \to B_{13} \to 3$. 

It remains to show that this color gadget really works as desired. Let us examine the automorphisms of the color gadget. We claim that the automorphism group is $S_{3}$, that it acts on the vertices $1$, $2$, $3$ in its natural action (3), it acts on the $A_{ij}$'s and $B_{ji}$'s together in its regular action (4), and it acts on $A, B$ in its odd-even action (2). 

$1$, $2$, and $3$ are the only vertices with in-degree and out-degree $1$, so at most they can be swapped amongst each other. Hence the automorphism group is at most $S_{3}$. To show the above claim, it suffices to show that the generating set $(123)$ and $(12)$ of $S_{3}$ provides automorphisms of the color gadget that act as described.

Consider first $(123)$. It acts on $1$, $2$, and $3$ as described by the cycle notation: $1 \mapsto 2 \mapsto 3 \mapsto 1$. To be an automorphism, it is then forced to send $A_{12} \mapsto A_{23} \mapsto A_{31} \mapsto A_{12}$ and $B_{21} \mapsto B_{32} \mapsto B_{13} \mapsto B_{21}$. Note that $(123)$ cannot possibly swap the $A_{ij}$'s and the $B_{ji}$'s, since the directed edges determine an orientation that is preserved by $(123)$. Moreover, its action on these vertices is exactly the action of $(123)$ by right multiplication on $S_{3}$ itself. This implies that $(123)$, and hence all the even permutations, fix the vertices $A$ and $B$.

Next, consider $(12)$. Since $(12)$ reverses the orientation determined by the directed edges, it must swaps the $A_{ij}$'s and $B_{ji}$'s, as follows: $A_{12} \leftrightarrow B_{21}$, $A_{23} \leftrightarrow B_{13}$, and $A_{31} \leftrightarrow B_{32}$. This also implies that $A \leftrightarrow B$. Hence odd permutations swap $A$ and $B$. Finally, the action of $(12)$ on the $A_{ij}$'s and $B_{ji}$'s is in accordance with the right regular action of $(12)$ on $S_{3}$, compatible with that of $(123)$ above. We can put this together through the correpsondence:
\begin{eqnarray*}
() & \sim & A_{12} \\
(123) & \sim & A_{23} \\
(132) & \sim & A_{31} \\
(12) & \sim & B_{21} \\
(13) & \sim & B_{32} \\
(23) & \sim & B_{13}
\end{eqnarray*}
\end{proof}

We suspect that these ideas can be extended to show that the general twisted code equivalence problem, as defined in Codenotti \cite{codenottiPhd} Karp-reduces to graph isomorphism.

\end{document}